\begin{document}
\title{Regular Transducer Expressions for Regular Transformations}
\author[1]{Vrunda Dave}
\author[2]{Paul Gastin}
\author[1]{Shankara Narayanan Krishna}
\affil[1]{Dept of CSE, IIT Bombay, India\\
  \texttt{vrunda,krishnas@cse.iitb.ac.in}}
\affil[2]{LSV, ENS Paris-Saclay, CNRS, Universit{\'e} Paris-Saclay, France\\
  \texttt{paul.gastin@lsv.fr}}
\authorrunning{Vrunda Dave, P. Gastin, S. Krishna} 

\Copyright{John Q. Open and Joan R. Access}

\maketitle
\abstract

Functional MSO transductions, deterministic two-way transducers, as well as
streaming string transducers are all equivalent models for regular functions.
In this paper, we show that every regular function, either on finite words or on
infinite words, captured by a deterministic two-way transducer, can be described
with a regular transducer expression (RTE).  For infinite words, the
transducer uses Muller acceptance and $\omega$-regular look-ahead.
\RTEs are constructed from constant functions using the combinators if-then-else
(deterministic choice), Hadamard product, and unambiguous versions of the Cauchy
product, the 2-chained Kleene-iteration and the 2-chained omega-iteration.  Our
proof works for transformations of both finite and infinite words, extending the
result on finite words of Alur et al.\ in LICS'14.
In order to construct an RTE associated with a deterministic two-way Muller
transducer with look-ahead, we introduce the notion of transition monoid for
such two-way transducers where the look-ahead is captured by some backward
deterministic B\"uchi automaton.  Then, we use an unambiguous version of Imre
Simon's famous forest factorization theorem in order to derive a ``good''
($\omega$-)regular expression for the domain of the two-way transducer.
``Good'' expressions are unambiguous and Kleene-plus as well as
$\omega$-iterations are only used on subexpressions corresponding to
\emph{idempotent} elements of the transition monoid.  The combinator expressions
are finally constructed by structural induction on the ``good''
($\omega$-)regular expression describing the domain of the transducer.

\section{Introduction}\label{sec:intro}

One of the most fundamental results in theoretical computer science is that the
class of regular languages corresponds to the class of languages recognised by
finite state automata, to the class of languages definable in MSO, and to the
class of languages whose syntactic monoid is finite.  Regular languages are also
those that can be expressed using a regular expression; this equivalence is
given by the Kleene's theorem.  This beautiful correspondence between machines,
logics and algebra in the case of regular languages paved the way to
generalizations of this fundamental theory to regular transformations
\cite{EH01}, where, it was shown that regular transformations are those which
are captured by two-way transducers and by MSO transductions a la Courcelle.
Much later, streaming string transducers (SSTs) were introduced \cite{fst10} as
a model which makes a single pass through the input string and use a finite set
of variables that range over strings from the output alphabet.  \cite{fst10}
established the equivalence between SSTs and MSO transductions, thereby showing
that regular transformations are those which are captured by either SSTs,
two-way transducers or MSO transductions.  This theory was further extended to
work for infinite string transformations \cite{lics12}; the restriction from MSO
transductions to first-order definable transductions, and their equivalence with
aperiodic SSTs and aperiodic two-way transducers has also been established over
finite and infinite strings \cite{fst14}, \cite{fst16}.  Other generalizations
such as \cite{sstt17}, extend this theory to trees.  Most recently, this
equivalence between SSTs and logical transductions are also shown to hold good
even when one works with the origin semantics \cite{icalp17}.

Moving on, an interesting generalization pertains to the characterization of the
output computed by two-way transducers or SSTs (over finite and infinite words)
using regular-like expressions.  For the strictly lesser expressive case of
sequential one-way transducers, this regex characterization of the output is
obtained as a special case of Sch\"{u}tzenberger's famous equivalence
\cite{DrosteBook} between weighted automata and regular weighted expressions.
The question is much harder when one looks at two-way transducers, due to the
fact that the output is generated in a one-way fashion, while the input is read
in a two-way manner.  The most recent result known in this direction is
\cite{lics14}, which provides a set of combinators, analogous to the operators
used in forming regular expressions.  These combinators are used to form
\emph{combinator expressions} which compute the output of an additive cost
register automaton (ACRA) over finite words.  ACRAs are generalizations of SSTs
and compute a partial function from finite words over a finite alphabet to
values from a monoid $(\mathbb{D}, +,0)$ (SSTs are ACRAs where $(\mathbb{D},
+,0)$ is the free monoid $(\Gamma^*, ., \epsilon)$ for some finite output
alphabet $\Gamma$).  The combinators introduced in \cite{lics14} form the basis
for a declarative language DReX \cite{popl15} over finite words, which can
express all regular string-to-string transformations, and can also be
efficiently evaluated.

\noindent {\bf Our Contributions}.  We generalize the result of \cite{lics14}.
Over finite words, we work with two-way deterministic transducers (denoted 2DFT,
see Figure~\ref{fig:intro} left) while over infinite words, the model considered
is a deterministic two-way transducer with regular look-ahead, equipped with the
Muller acceptance condition.  For example, Figure~\ref{fig:intro} right gives an
\twoDMTla ($\mathsf{la}$ stands for look-ahead and $\mathsf{M}$ in the
$\mathsf{2DMT}$ for Muller acceptance).

\begin{figure}[b]
  \raisebox{4.0mm}{\includegraphics[scale=0.72,page=1]{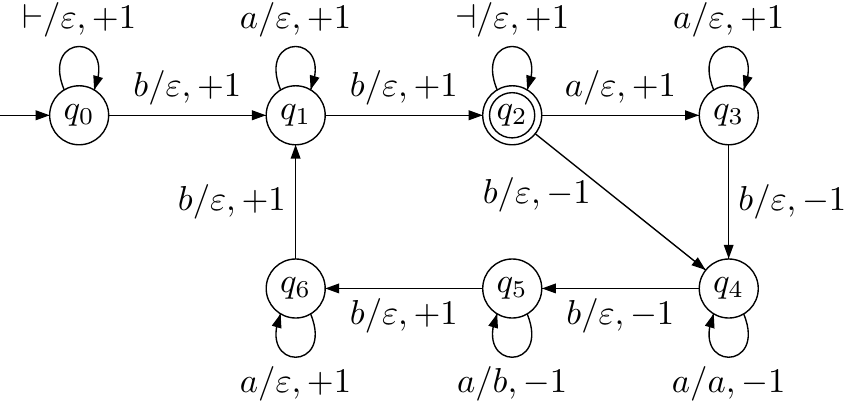}}
  \hfill\includegraphics[page=3,scale=0.72]{gpicture-pics.pdf}
  
  \caption{On the left, a 2DFT $\Aa$ with $\sem{\A}(ba^{m_1}ba^{m_2}b \dots
  a^{m_k}b)=a^{m_2}b^{m_1}a^{m_3}b^{m_2} \dots a^{m_k}b^{m_{k-1}}$.
  On the right, an \twoDMTla $\Aa'$ with $\sem{\Aa'}(u_1\#u_2\#\dots
  \#u_n\#v)=u^R_1u_1\#u^R_2u_2\# \dots \#u^R_nu_n\#v$ where $u_1, \dots, u_n \in
  (a+b)^*$, $v \in (a+b)^{\omega}$ and $u^R$ denotes the reverse of $u$. The
  Muller acceptance set is $\{\{q_5\}\}$.  The look-ahead expressions $\Sigma^* \#
  \Sigma^{\omega}$ and $(\Sigma \backslash \{\#\})^{\omega}$ are used to check if
  there is a $\#$ in the remaining suffix of the input word.}
  \label{fig:intro}
\end{figure} 
 
In both cases of finite words and infinite words, we come up with a set of
combinators using which, we form \emph{regular transducer expressions} (\RTE)
characterizing the output of the two-way transducer (2DFT/\twoDMTla).  
  
\noindent {\bf The Combinators}.  
We describe our basic combinators that form the building blocks of \RTEs.  The
semantics of an \RTE is a partial function
$f\colon\Sigma^{\infty}\to\Gamma^{\infty}$ whose domain is denoted $\dom{f}$.
\begin{itemize}
  \item We first look at the case of finite words and describe the basic
  combinators.  The constant function is one which maps all strings in
  $\Sigma^*$ to some fixed value $d$.  Given a string $w \in \Sigma^*$, the
  if-then-else combinator $\Ifthenelse{K}{f}{g}$ checks if $w$ is in the regular
  language $K$ or not, and appropriately produces $f(w)$ or $g(w)$.  The
  unambiguous Cauchy product $f \boxdot g$ when applied on $w \in \Sigma^*$
  produces $f(u)\cdot g(v)$ if $w=u\cdot v$ is an unambiguous decomposition of
  $w$ with $u\in\dom{f}$ and $v\in\dom{g}$.  The unambiguous Kleene-plus
  $f^{\boxplus}$ when applied to $w \in \Sigma^*$ produces $f(u_1) \cdots
  f(u_n)$ if $w=u_1 \cdots u_n$ is an unambiguous factorization of $w$, with
  each $u_i \in \dom{f}$.  The Hadamard product $f \odot g$ when applied to $w$
  produces $f(w)\cdot g(w)$.  Finally, the unambiguous 2-chained Kleene-plus
  $\twoplus{K}{f}$ when applied to a string $w$ produces as output
  $f(u_1u_2)\cdot f(u_2u_3) \cdots f(u_{n-1}u_n)$ if $w$ can be unambiguously
  written as $u_1u_2 \cdots u_n$, with each $u_i \in K$, for the regular
  language $K$.  We also have the reverses $f \rldot g, \rlplus{f}$ and
  $\rltwoplus{K}{f}$: $[f \rldot g](w)$ produces $g(v)\cdot f(u)$ if $w$ is the
  unambiguous concatenation $u\cdot v$ with $u\in\dom{f}$ and $v\in\dom{g}$,
  $\rlplus{f}(w)$ produces $f(u_n) \cdots f(u_1)$ if $w$ is the unambiguous
  catenation $u_1 \cdots u_n$ with $u_i\in\dom{f}$ for all $i$, and,
  $\rltwoplus{K}{f}(w)$ produces $f(u_{n-1}u_n) \cdots f(u_1u_2)$ if $w$ is the
  unambiguous catenation $u_1\cdots u_n$ with $u_i \in K$ for all $i$.
  
  \item In the case of infinite words, the Cauchy product $f \boxdot g$ works on
  $w \in \Sigma^{\omega}$ if $w$ can be written unambiguously as $u\cdot v$ with
  $u\in\dom{f}\cap\Sigma^*$ and $v\in\dom{g}\cap\Sigma^{\omega}$.  Another
  difference is in the use of the Hadamard product: for $w \in \Sigma^{\omega}$,
  $f \odot g$ produces $f(w)\cdot g(w)$ if $f(w)$ is a finite string.  Note that
  these are sound with respect to the concatenation semantics for infinite
  words.  Indeed, we also have $\omega$-iteration and two-chained 
  $\omega$-iteration: $f^\omega(w)=f(u_1)f(u_2)\cdots$ if $w\in\Sigma^\omega$ 
  can be unambiguously decomposed as $w=u_1u_2\cdots$ with 
  $u_i\in\dom{f}\cap\Sigma^{*}$ for all $i\geq1$. Moreover,
  $\twoomega{K}{f}(w)=f(u_1u_2)f(u_2u_3)\cdots$ if $w\in\Sigma^\omega$ 
  can be unambiguously decomposed as $w=u_1u_2\cdots$ with 
  $u_i\in K$ for all $i\geq1$, where $K\subseteq\Sigma^{*}$ is regular.
  
  \item An \RTE is formed using all the above basic combinators.
  
  \item As an example, consider the \RTE $C={\lrplus{C_4}\lrdot C_2^{\omega}}$
  with $C_4=\Ifthenelse{((a+b)^*\#)}{(\rlplus{C_3} \odot \lrplus{C_1})}{\bot}$,
  $C_2=\Ifthenelse{a}{a}{(\Ifthenelse{b}{b}{\bot})}$.  Here,
  $C_1=\Ifthenelse{a}{a}{(\Ifthenelse{b}{b}{(\Ifthenelse{\#}{\#}{\bot})})}$, and
  $C_3=\Ifthenelse{a}{a}{(\Ifthenelse{b}{b}{(\Ifthenelse{\#}{\epsilon}{\bot})})}$.
  Then $\dom{C_1}=\dom{C_3}=(a+b+\#)$, $\dom{C_2}=(a+b)$, $\dom{C_4}=(a+b)^*\#$
  and, for $u\in(a+b)^*$, $\sem{C_4}(u\#)=u^Ru\#$ where $u^R$ denotes the
  reverse of $u$.  This gives $\dom{C} = [(a+b)^*\#]^+(a+b)^{\omega}$ with
  $\sem{C}(u_1\#u_2\#\cdots u_n\#v)=u_1^Ru_1\#u_2^Ru_2\# \cdots \#u_n^Ru_n\#v$
  when $u_i \in (a+b)^*$ and $v \in (a+b)^{\omega}$.  The \RTE
  $C'=\Ifthenelse{(a+b)^{\omega}}{C_2^{\omega}}{C}$ corresponds to the \twoDMTla
  $\Aa'$ in Figure~\ref{fig:intro}; that is, $\sem{C'}=\sem{\Aa'}$.
   
  \item The combinators proposed in \cite{lics14} also require unambiguity in
  concatenation and iteration.  The \emph{base function} $L/d$ in \cite{lics14}
  maps all strings in language $L$ to the constant $d$, and is undefined for
  strings not in $L$.  This can be written using our if-then-else
  $\Ifthenelse{L}{d}{\bot}$.
  The \emph{conditional choice} combinator $f \triangleright g$ of \cite{lics14}
  maps an input $\sigma$ to $f(\sigma)$ if it is in $dom(f)$, and otherwise it
  maps it to $g(\sigma)$.  This can be written in our if-then-else as
  $\Ifthenelse{\dom{f}}{f}{g}$.  The \emph{split-sum} combinator $f \oplus g$ of
  \cite{lics14} is our Cauchy product $f \boxdot g$.  The \emph{iterated sum}
  $\Sigma f$ of \cite{lics14} is our Kleene-plus $f^{\boxplus}$.  The
  \emph{left-split-sum} and \emph{left-iterated sum} of \cite{lics14} are
  counterparts of our reverse Cauchy product $f \rldot g$ and reverse
  Kleene-plus $\rlplus{f}$.  The \emph{sum} $f+g$ of two functions in
  \cite{lics14} is our Hadamard product $f \odot g$.  Finally, the \emph{chained
  sum} $\Sigma(f, L)$ of \cite{lics14} is our two-chained Kleene-plus
  $\twoplus{L}{f}$.  In our case, the terminology is all inspired from weighted
  automata literature, and the unambiguity comes from the use of the unambiguous
  factorization of the domain into good expressions, and we also extend our \RTEs
  to infinite words.
  
\end{itemize}

Our main result is that two-way deterministic transducers and regular 
transducer expressions are effectively equivalent, both for finite and infinite 
words. See Appendix~\ref{app:motiv} for a practical example using transducers.  

\begin{theorem}\label{thm:main-intro}
  (1) Given an \RTE (resp.\ \oRTE) we can effectively construct an equivalent
  2DFT (resp.\ an \twoDMTla).
  Conversely,
  (2) given a 2DFT (resp.\ an \twoDMTla) we can effectively construct an equivalent 
  \RTE (resp.\ \oRTE).
\end{theorem}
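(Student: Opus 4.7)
Direction (1), from \RTEs to two-way transducers, proceeds by structural induction on the combinator expression. I would inductively assume 2DFTs (resp.\ \twoDMTla's) realising the subexpressions and simulate each combinator. The constant function and if-then-else are immediate: the latter scans with a DFA (or with the backward-deterministic B\"uchi look-ahead in the $\omega$-case) to decide membership in $K$ before dispatching to the appropriate sub-transducer. The Cauchy product $f \boxdot g$, Kleene-plus $\lrplus{f}$, their reverses, and the 2-chained variants are realised by using regular look-ahead to deterministically locate the unique factorisation boundaries; unambiguity ensures these boundaries exist and are unique, so the resulting transducer stays deterministic. The Hadamard product $f \odot g$ exploits two-wayness directly: run the transducer for $f$, rewind to the left endpoint, then run $g$; in the $\omega$-case this requires $f(w)$ to be finite, which is precisely the semantic condition. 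Muller acceptance for the $\omega$ case is inherited by composing acceptance conditions suitably, and $\omega$-iterations are handled analogously using look-ahead to pin down the successive finite factors.

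Direction (2), from two-way transducers to \RTEs, is the main contribution and proceeds via the pipeline announced in the abstract. First, I would define a \emph{transition monoid} for the 2DFT (or \twoDMTla) abstracting, for each word $u$, the finitely many behaviours of the transducer when traversing $u$; for the $\omega$-case, the look-ahead is represented by a backward-deterministic B\"uchi automaton whose state information can be folded into the monoid element so the look-ahead is encoded locally. Second, the canonical morphism into this finite transition monoid, together with an unambiguous version of Imre Simon's forest factorisation theorem, produces a ``good'' ($\omega$-)regular expression for the domain: it is unambiguous, and every Kleene-plus and $\omega$-iteration is applied to a subexpression whose image in the monoid is idempotent. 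Third, I would synthesise the \RTE by structural induction on this good expression, mapping atoms to constant functions, unions to if-then-else, concatenations to Cauchy or reverse-Cauchy products, and iterations to one of the 2-chained combinators $\twoplus{K}{\cdot}$, $\rltwoplus{K}{\cdot}$, or $\twoomega{K}{\cdot}$.

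The main obstacle is the iteration step. Unlike a one-way transducer, on a block $u_1 \cdots u_n$ a 2-way transducer may enter a factor $u_i$ from either side, exit from either side, and revisit it several times, so the output contribution at $u_i$ is not, in general, a function of $u_i$ alone. Idempotency of the corresponding monoid element restricts this contextual dependence to a single adjacent factor, so that each traversal through $u_i$ produces output determined by $u_{i-1} u_i$ or $u_i u_{i+1}$; this is exactly what the 2-chained combinators are built to capture. A careful case analysis on the four traversal ``shapes'' (LL, LR, RL, RR) of the transducer on each factor, combined with the Hadamard product to aggregate outputs of successive traversals through the same factor, should allow the induction to close. The $\omega$-case is more delicate, because one must keep the outcome compatible with the Muller condition and with the backward-deterministic B\"uchi look-ahead across an infinite iteration --- this is where abstracting the look-ahead inside the transition monoid pays off, since the good $\omega$-expression can then be built using only information available from that monoid.
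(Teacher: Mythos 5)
Your overall architecture coincides with the paper's: direction (1) by structural induction on the combinator expression, and direction (2) via a transition monoid for the transducer (with the look-ahead abstracted by a backward-deterministic B\"uchi automaton and folded into the monoid), the unambiguous forest factorization theorem yielding a good ($\omega$-)regular expression for the domain, and a second structural induction on that good expression in which idempotency confines each factor's contribution to one adjacent factor, captured by the 2-chained combinators and aggregated with Hadamard products. This part of your proposal, including the identification of the iteration step and the four traversal shapes as the crux, is faithful to the paper's proof.

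Two points in direction (1) need repair or justification. First, for \emph{finite} words your plan to use regular look-ahead to locate the factorisation boundaries for the Cauchy product, Kleene-plus and 2-chained iterations produces a 2DFT \emph{with} look-ahead, which is not the target model; you must then argue that the look-ahead can be eliminated. The paper takes an equivalent but different route: it builds a one-way \emph{nondeterministic unambiguous} transducer that marks the unique cut points with a fresh symbol $\#$, composes it with a 2DFT that runs $M_f$ and $M_g$ between the marks, and invokes the classical equivalence of 2NUFTs and 2DFTs to recover determinism. Either device works, but the determinization or look-ahead-elimination step has to be stated. Second, and more substantively, your treatment of the Hadamard product in the $\omega$-case --- ``run the transducer for $f$, rewind to the left endpoint, then run $g$'' --- fails as written: an accepting run of the first \twoDMTla on $w\in\Sigma^\omega$ is infinite, so there is no moment at which it has ``finished'' and can rewind. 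The paper resolves this by constructing, for each state $q$ of the first transducer, a look-ahead language $L_q$ expressing that from the current position in state $q$ the remaining run outputs only $\varepsilon$ and never moves left again; the machine switches to the second operand exactly when this look-ahead holds. Some such detection mechanism is indispensable, and your sketch omits it.
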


The proof of (1) is by structural induction on the \RTE. The construction of an
\RTE starting from a two-way deterministic transducer $\Aa$ is quite involved.
It is based on the transition monoid $\TrMon(\Aa)$ of the transducer.  This is a
classical notion for two-way transducers over finite words, but not for two-way
transducers \emph{with look-ahead} on infinite words (to the best of our
knowledge).  So we introduce the notion of transition monoid for \twoDMTla.  We
handle the look-ahead with a backward deterministic B\"uchi automaton (\BDBA),
also called \emph{complete unambiguous} or \emph{strongly unambiguous} B\"uchi
automata \cite{Carton:2003rt,Wilke-fsttcs17}.  The translation of $\Aa$ to an \RTE is
crucially guided by a ``good'' rational expression induced by the transition
monoid of $\Aa$.  These ``good'' expressions are obtained thanks to an
unambiguous version \cite{GastinKrishna-UFF} of the celebrated forest
factorization theorem due to Imre Simon~\cite{Simon_1990}.  The unambiguous forest
factorization theorem implies that, given a two-way transducer $\Aa$, any input
word $w$ in the domain of $\Aa$ can be factorized unambiguously
following a ``good'' rational expression induced by the transition monoid of
$\Aa$.  This unambiguous factorization then guides the construction of the \RTE
corresponding to $\Aa$.  This algebraic backdrop facilitates a uniform treatment
in the case of infinite words and finite words.  As a remark, it is not apriori
clear how the result of \cite{lics14} extends to infinite words using the
techniques therein.
  
\smallskip\noindent {\bf Goodness of Rational Expressions}.  The goodness of a
rational expression over alphabet $\Sigma$ is defined using a morphism $\varphi$
from $\Sigma^*$ to a monoid $(S,.,1_S)$.  A rational expression $F$ is good iff
(i) it is unambiguous and (ii) for each subexpression $E$ of $F$, the image of
all strings in $L(E)$ maps to a single monoid element $s_E$, and (iii) for each
subexpression $E^+$ of $F$, $s_E$ is an idempotent.  Note that unambiguity
ensures the functionality of the output computed.
Good rational expressions might be useful in settings beyond two-way transducers.  
 
\smallskip\noindent {\bf Computing the \RTE}.  
As an example, we now show how one computes an \RTE equivalent to the 2DFT 
$\Aa$ on the left of Figure~\ref{fig:intro}.
\begin{enumerate}[nosep]
  \item We work with the morphism $\TrMorph\colon\Sigma^*\to\TrMon$ which maps
  words $w \in \Sigma^*$ to the transition monoid $\TrMon$ of $\Aa$.  An element
  $X \in \TrMon$ is a set consisting of triples $(p, d, q)$, where $d$ is a
  direction $\{\leftleft, \rightright, \leftright, \rightleft\}$.  Given a word
  $w \in \Sigma^*$, a triple $(p,\leftleft,q) \in \TrMorph(w)$ iff 
  when starting in state $p$ on the left most symbol of $w$, the run of  
  $\Aa$ leaves $w$ on the left in state $q$.
  The other directions
  $\rightright$ (start at the rightmost symbol of $w$ in state $p$ and leave $w$
  on the right in state $q$), $\rightleft$ and $\leftright$ are similar.  In 
  general, we have
  $w\in\dom{\Aa}$ iff 
  on input $\leftend\, w\, \rightend$, starting on $\leftend$ in the initial 
  state of $\Aa$, the run exits on the right of $\rightend$ in some 
  final state of $\Aa$.
  With the automaton $\Aa$ on the left of Figure~\ref{fig:intro} we have
  $w\in\dom{\Aa}$ iff $(q_0, \leftright, q_2) \in \TrMorph(w)$.

  \item For each $X\in\TrMon$ such that $(q_0,\leftright,q_2)\in X$, we find an
  \RTE $C_X$ whose domain is $\TrMorph^{-1}(X)$ and such that
  $\sem{\Aa}(w)=\sem{C_X}(w)$ for all $w\in\TrMorph^{-1}(X)$.  The \RTE
  corresponding to $\sem{\Aa}$ is the disjoint union of all these \RTEs and is
  written using the if-then-else construct iterating over for all such elements
  $X$. For instance, if the monoid elements containing $(q_0,\leftright,q_2)$ 
  are $X_1,X_2,X_3$ then we set $C=\Ifthenelse{\TrMorph^{-1}(X_1)}{C_{X_1}}{(
  \Ifthenelse{\TrMorph^{-1}(X_2)}{C_{X_2}}{(
  \Ifthenelse{\TrMorph^{-1}(X_3)}{C_{X_3}}\bot)})}$ where $\bot$ stands for 
  a nowhere defined function, i.e., $\dom{\bot}=\emptyset$.

  \item Consider the language $L=(ba^{+})^+b \subseteq \dom{\Aa}$.  Notice that
  the regular expression $(ba^{+})^{+}b$ is not ``good''.  For instance,
  condition (ii) is violated since $\TrMorph(bab)\neq\TrMorph(babab)$.  Indeed,
  we can seen in Figure~\ref{fig:run} that if we start on the right of $bab$ in
  state $q_3$ then we exist on the left in state $q_5$:
  $(q_3,\rightleft,q_5)\in\TrMorph(bab)$.  On the other hand, if we start on the
  right of $babab$ in state $q_3$ then we exist on the right in state $q_2$:
  $(q_3,\rightright,q_2)\in\TrMorph(babab)$. Also,
  $(q_5,\leftright,q_1)\in\TrMorph(bab)$ while
  $(q_5,\leftright,q_2)\in\TrMorph(babab)$.
  It can be seen that $\TrMorph(a)$\footnote{$\TrMorph(a)=\{
  (q_1, \leftright, q_1), (q_1, \rightright, q_1),
  (q_2, \leftright, q_3), (q_2, \rightright, q_3),
  (q_3, \leftright, q_3), (q_3, \rightright, q_3), 
  (q_4, \rightleft, q_4), (q_4, \leftleft, q_4), \\
  (q_5, \rightleft, q_5), (q_5, \leftleft, q_5),
  (q_6, \leftright, q_6), (q_6, \rightright, q_6)\}$}
  is an idempotent, hence $\TrMorph(a^{+})=\TrMorph(a)$.  We deduce also
  $\TrMorph(ba^{+}b)=\TrMorph(bab)$\footnote{$\TrMorph(ba^{+}b)=\{
  (q_0, \leftright, q_2), (q_0, \rightright, q_1),
  (q_1, \leftleft, q_5), (q_1, \rightright, q_2), 
  (q_2, \leftleft, q_4),(q_2, \rightleft, q_5),
  (q_3, \leftleft, q_4), (q_3, \rightleft, q_5), \\
  (q_4, \leftleft, q_5), (q_4, \rightright, q_1),
  (q_5, \leftright, q_1), (q_5, \rightright, q_6),
  (q_6, \leftright, q_2), (q_6, \rightright, q_1)\}$}.
  Finally, we have $\TrMorph((ba^{+})^{n}b)=\TrMorph(babab)$%
  \footnote{$\TrMorph(ba^{+}ba^{+}b)=\{
  (q_0, \leftright, q_2), (q_0, \rightright, q_1), 
  (q_1, \leftleft, q_5), (q_1, \rightright, q_2),
  (q_2, \leftleft, q_4), (q_2, \rightright, q_2),
  (q_3, \leftleft, q_4), (q_3, \rightright, q_2), \\
  (q_4, \leftleft, q_5), (q_4, \rightright, q_1),
  (q_5, \leftright, q_2), (q_5, \rightright, q_6),
  (q_6, \leftright, q_2), (q_6, \rightright, q_1)\}$} for all $n\geq2$.  
  Therefore, to obtain the \RTE corresponding to $L$, we compute \RTEs
  corresponding to $ba^{+}b$ and $(ba^{+})^{+}ba^{+}b$ satisfying conditions 
  (i) and (ii) of ``good'' rational expressions.

  \item  While $ba^{+}b$ is good since $\TrMorph(a)$ is an idempotent,  
  $(ba^{+})^{+}ba^{+}b$
  is not good, the reason being that $\TrMorph(ba^{+})$ is not an idempotent. 
  We can check that $\TrMorph(ba^{+}ba^{+})$%
  \footnote{$\TrMorph(ba^{+}ba^{+})=\{(q_0, \leftright, q_3), 
  (q_1, \leftleft, q_5), (q_1, \rightright, q_1),
  (q_2, \leftleft, q_4), (q_2, \rightright, q_3),
  (q_3, \leftleft, q_4), (q_3, \rightright, q_3), \\
  (q_4, \leftleft, q_5), (q_4, \rightright, q_1),
  (q_5, \leftright, q_1), (q_5, \rightright, q_6),
  (q_6, \leftright, q_3), (q_6, \rightright, q_6)\}$} 
  is still not idempotent, while $\TrMorph((ba^{+})^{i})=\TrMorph((ba^{+})^3)$  for all $i\geq 3$, 
  (see Figure~\ref{fig:run}: we only need to argue for
  $(q_0, \leftright, q_3), (q_5, \leftright, q_3)$ and $(q_6, \leftright, q_3)$
  in $\TrMorph((ba)^{i})$, $i \geq 3$, all other entries trivially carry over).
  In particular, $\TrMorph((ba^{+})^3)$ \label{idempotent}
  is an idempotent%
  \footnote{$\TrMorph((ba^{+})^3)=\{(q_0, \leftright, q_3),
  (q_1, \leftleft, q_5), (q_1, \rightright, q_1),
  (q_2, \leftleft, q_4), (q_2, \rightright, q_3),
  (q_3, \leftleft, q_4), (q_3, \rightright, q_3), \\
  (q_4, \leftleft, q_5), (q_4, \rightright, q_1),
  (q_5, \leftright, q_3), (q_5, \rightright, q_6),
  (q_6, \leftright, q_3), (q_6, \rightright, q_6)\}$}.
  Thus, to compute the \RTE for $L=(ba^{+})^+b$, we consider the \RTEs
  corresponding to the ``good'' regular expressions $E_1=ba^{+}b$,
  $E_2=ba^{+}ba^{+}b$, $E_3=[(ba^{+})^3]^+ b$, $E_4=[(ba^{+})^3]^+ ba^{+}b$ and
  $E_5=[(ba^{+})^3]^+ ba^{+}ba^{+}b$.

  \begin{figure}[h]
    \centerline{
    \includegraphics[scale=0.20]{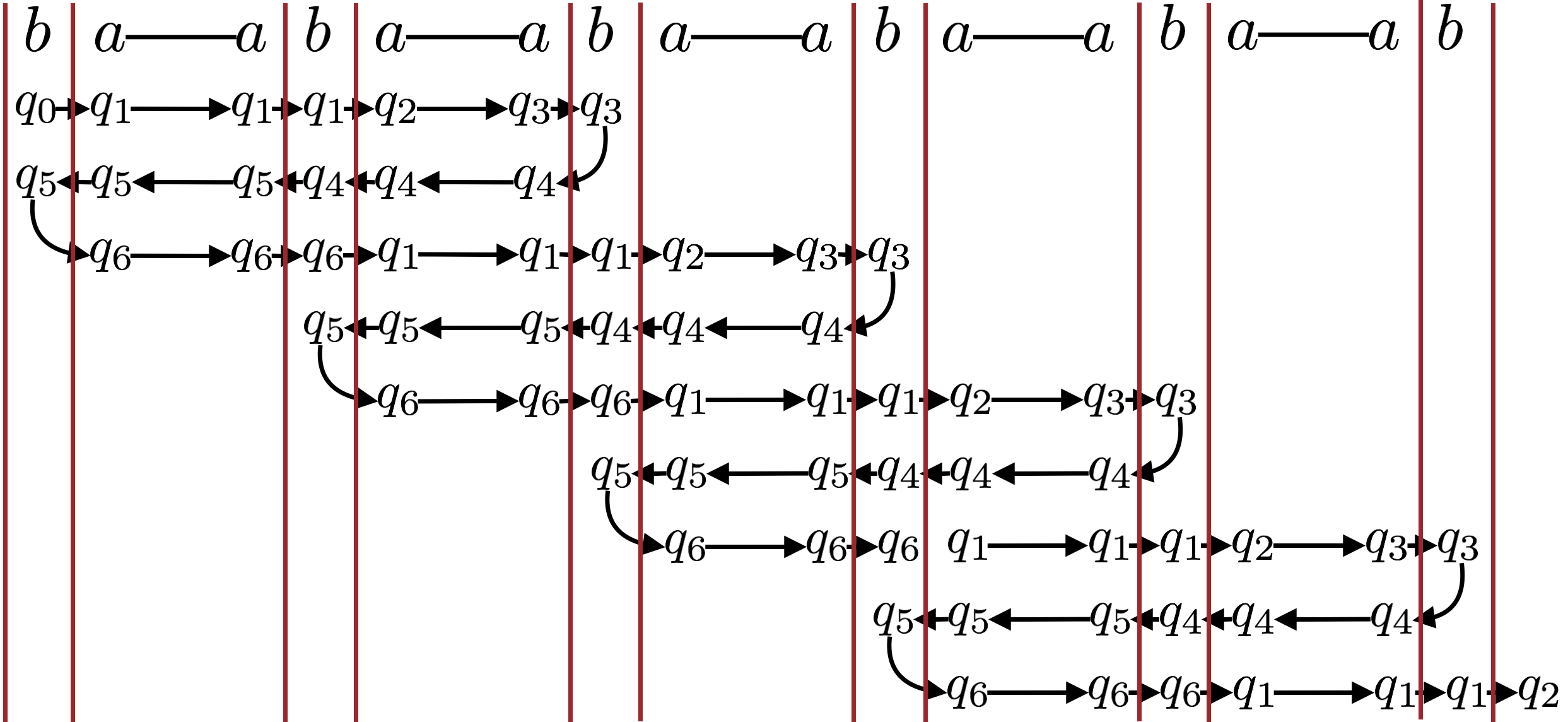}}
    \caption{Run of $\Aa$ on an input word in $(ba^{+})^{+}b$.}
    \label{fig:run}
  \end{figure}

  \item  We define by induction, for each ``good''
  expression $E$ and ``step'' $x=(p,d,q)$ in the monoid element $X=\TrMorph(E)$
  associated with $E$, an \RTE $C_E(x)$ whose domain is $E$ and, given a word
  $w\in E$, it computes $\sem{C_E(x)}(w)$ the output of $\A$ when running step
  $x$ on $w$.  For instance, if $E=a$ and $x=(q_5,\rightleft,q_5)$ the output is
  $b$ so we set $C_a(q_5,\rightleft,q_5)=(\Ifthenelse{a}{b}\bot)$.  The
  if-then-else ensures that the domain is $a$.  Similarly, we get the \RTE
  associated with all atomic expressions and steps.  For instance,
  $C_b(q_1,\leftright,q_2)=(\Ifthenelse{b}{\varepsilon}\bot)
  =C_b(q_3,\leftleft,q_4)$.  For $u,v\in\Sigma^{*}$, we introduce the macro
  $u/v=\Ifthenelse{u}{v}\bot$. 
  We have $\dom{u/v}=\{u\}$ and $\sem{u/v}(u)=v$.
  
  We turn to the good expression $a^{+}$.  If we start on the right of a
  word $w\in a^{+}$ from state $q_5$ then we read the word from right to left 
  using always the step $(q_5,\rightleft,q_5)$. Therefore, we have
  $C_{a^{+}}(q_5,\rightleft,q_5)=\rlplus{(C_a(q_5,\rightleft,q_5))}=\rlplus{(a/b)}$.
  Similarly, 
  $C_{a^{+}}(q_4,\rightleft,q_4)=\rlplus{(a/a)}$,
  $C_{a^{+}}(q_1,\leftright,q_1)=\lrplus{(a/\varepsilon)}
  =C_{a^{+}}(q_6,\leftright,q_6)$. 
  Now if we start on the left of a word $w\in a^{+}$ from state $q_2$ then we
  first take the step $(q_2,\leftright,q_3)$ and 
  then we iterate the step $(q_3,\leftright,q_3)$.  Therefore, we have
  $C_{a^{+}}(q_2,\leftright,q_3)=\Ifthenelse{a}{C_a(q_2,\leftright,q_3)}
  {(C_a(q_2,\leftright,q_3)\lrdot\lrplus{(C_a(q_3,\leftright,q_3))})}=
  \Ifthenelse{a}{(a/\varepsilon)}
  {\big((a/\varepsilon)\lrdot\lrplus{(a/\varepsilon)}\big)}$,
  which is equivalent to the \RTE $\lrplus{(a/\varepsilon)}$.
  
  We consider now $E=ba^{+}ba^{+}$ and the step $x=(q_0,\leftright,q_3)$. We 
  have (see Figure~\ref{fig:run})
  \begin{align*}
    C_E(x) &=
    C_b(q_0,\leftright,q_1)\lrdot C_{a^{+}}(q_1,\leftright,q_1)\lrdot
    C_b(q_1,\leftright,q_2)\lrdot C_{a^{+}}(q_2,\leftright,q_3)
    \\
    &=(b/\varepsilon)\lrdot\lrplus{(a/\varepsilon)}\lrdot
    (b/\varepsilon)\lrdot\lrplus{(a/\varepsilon)} 
    \\
    &\approx (\Ifthenelse{ba^{+}ba^{+}}{\varepsilon}\bot) \,.
  \end{align*}
  More interesting is the step $y=(q_4,\rightright,q_1)$ \label{step:y}
  since on a word $w\in 
  E$, the run which starts on the right in state $q_4$ goes all the way to the 
  left until it reads the first $b$ in state $q_5$ and then moves to the right 
  until it exists in state $q_1$ (see Figure~\ref{fig:run}). Therefore, we have
  \begin{align*}
    C_E(y) &=
    \big((b/\varepsilon)\rldot C_{a^{+}}(q_5,\rightleft,q_5)\rldot
    C_b(q_4,\rightleft,q_5)\rldot C_{a^{+}}(q_4,\rightleft,q_4)\big) \odot{}
    \\
    &\hspace{4mm}
    \big(C_b(q_5,\leftright,q_6)\lrdot C_{a^{+}}(q_6,\leftright,q_6)\lrdot
    C_b(q_6,\leftright,q_1)\lrdot C_{a^{+}}(q_1,\leftright,q_1)\big) 
    \\
    &= \big((b/\varepsilon)\rldot\rlplus{(a/b)}\rldot
    (b/\varepsilon)\rldot\rlplus{(a/a)}\big) \odot
    \big((b/\varepsilon)\lrdot\lrplus{(a/\varepsilon)}\lrdot
    (b/\varepsilon)\lrdot\lrplus{(a/\varepsilon)}\big)
    \\
    &\approx (b/\varepsilon)\rldot\rlplus{(a/b)}\rldot
    (b/\varepsilon)\rldot\rlplus{(a/a)}
    \,.
  \end{align*}
  The leftmost $(b/\varepsilon)$ in the first line is used to make sure that 
  the input word belongs to $E=ba^{+}ba^{+}$.
  Composing these steps on the right with $b$, we obtain the \RTE 
  $C_2=C_{E_2}(q_0,\leftright,q_2)$ which describes the behaviour of $\Aa$ on the 
  subset $E_2=ba^{+}ba^{+}b\subseteq\dom{\Aa}$:
  \begin{align*}
    C_2 &= \big( C_E(x) \lrdot C_b(q_3,\leftleft,q_4) \big) \odot 
    \big( C_E(y) \lrdot C_b(q_1,\leftright,q_2) \big)
    \\
    &= \big( C_E(x) \lrdot (b/\varepsilon) \big) \odot 
    \big( C_E(y) \lrdot (b/\varepsilon) \big) 
    \\
    &\approx \big( (b/\varepsilon)\rldot\rlplus{(a/b)}\rldot
    (b/\varepsilon)\rldot\rlplus{(a/a)} \big) \lrdot (b/\varepsilon) \,.  
  \end{align*}
  Therefore, $\sem{C_2}(ba^{m_1}ba^{m_2}b)=a^{m_2}b^{m_1}=\sem{\Aa}(ba^{m_1}ba^{m_2}b)$.
  
  In Appendix~\ref{app:intro}, we show the computation of the \RTE
  $C_{E_3}(q_0,\leftright,q_2)$ for $E_3=[(ba^{+})^{3}]^{+}b \subseteq
  \dom{\Aa}$.
\end{enumerate}

\section{Finite Words}
\label{sec:fin}

We start with the definition of two-way automata and transducers for the case of
finite words.

\subsection{Two-way automata and transducers}

Let $\Sigma$ be a finite input
alphabet and let $\leftend, \rightend$ be two special symbols not in $\Sigma$.
We assume that every input string $w \in \Sigma^*$ is presented as $\leftend w
\rightend$, where $\leftend, \rightend$ serve as left and right delimiters that
appear nowhere else in $w$.  We write
$\Sigma_{\leftend\rightend}=\Sigma\cup\{\leftend,\rightend\}$.  A two-way
automaton $\Aa=(Q,\Sigma,\delta,I,F)$ has a finite set of states $Q$,
subsets $I,F\subseteq Q$ of initial and final states 
and a transition relation $\delta
\subseteq Q \times \Sigma_{\leftend\rightend} \times Q \times
\{-1,1\}$.  The -1 represents the reading head moving to the left, while a 1
represents the reading head moving to the right.
The reading head cannot move left when it is on $\leftend$.
A configuration of $\Aa$ is represented by $w_1 q w_2$ where $q \in Q$ and
$w_1w_2 \in {\leftend \Sigma^* \rightend}$.  If $w_2=\varepsilon$ the
computation has come to an end.  Otherwise, the reading head of $\Aa$ is
scanning the first symbol of $w_2\neq\varepsilon$ in state $q$.  If $w_2=aw'_2$
and if $(q,a,q',-1)\in\delta$ (hence $a\neq\leftend$), then there is a
transition from the configuration $w'_1b q a w'_2$ to $w'_1 q' b a w'_2$.
Likewise, if $(q,a,q',1)\in\delta$, we obtain a transition from $w_1 q a w'_2$
to $w_1 a q' w'_2$.  A run of $\Aa$ is a sequence of transitions; it is
accepting if it starts in a configuration $p \leftend w\rightend$ with $p\in I$
and ends in a configuration $\leftend w \rightend q$ with $q \in F$.
The language $\Lang{\Aa}$ or domain $\dom{\Aa}$ of $\Aa$ is the set of all words
$w\in\Sigma^*$ which have an accepting run in $\Aa$.

To extend the definition of a two-way automaton $\Aa$ into a two-way transducer,
$(Q, \Sigma, \delta, I, F)$ is extended to $(Q, \Sigma, \Gamma, \delta, I, F)$
by adding a finite output alphabet $\Gamma$ and the definition of the transition
relation as a \emph{finite} subset $\delta \subseteq Q \times
\Sigma_{\leftend\rightend} \times Q \times \Gamma^* \times \{-1,1\}$.  The
output
produced on each transition is appended to the right of the output produced so
far.  $\Aa$ defines a relation $\sem{\Aa}=\{(u,w) \mid u \in \Lang{\Aa}$ and $w$
is the output produced on an accepting run of $u\}$.

The transducer $\Aa$ is said to be functional if for each input $u\in\dom{\Aa}$,
at most one output $w$ can be produced.  In this case, for each $u\in\dom{\Aa}$
in the domain, there is exactly one $w\in\Gamma^*$ such that
$(u,w)\in\sem{\Aa}$.  We also denote this by $\sem{\Aa}(u)=w$.
We consider a special symbol $\bot\notin\Gamma$ that will stand for
\emph{undefined}.  We let $\sem{\Aa}(u)=\bot$ when $u\notin\dom{\Aa}$.
Thus, the semantics of a functional transducer $\Aa$ is a map 
$\sem{\Aa}\colon\Sigma^*\to\D=\Gamma^*\cup\{\bot\}$ such that 
$u\in\dom{\Aa}$ iff $\sem{\Aa}(u)\neq\bot$.

We use non-deterministic unambiguous two-way transducers
(2NUFT) in some proofs.  A two-way transducer is unambiguous if each string $u \in \Sigma^*$
has at most one accepting run.  Clearly, 2NUFTs are functional.  A deterministic
two-way transducer (2DFT) is one having a single initial state and where, from
each state, on each symbol $a\in\Sigma_{\leftend\rightend}$, at most one
transition is enabled.  In that case, the transition relation is a partial 
function $\delta\colon Q\times \Sigma_{\leftend\rightend}
\to Q \times \Gamma^* \times \{-1,1\}$.
2DFTs are by definition unambiguous.  It is known
\cite{Chytil1977} that 2DFTs are equivalent to 2NUFTs.

A 1DFT (1NUFT) represents a deterministic (non-deterministic unambiguous)
transducer where the reading head only moves to the right.

\begin{example}
  On the left of Figure~\ref{fig:intro}, a two-way transducer $\A$ is given with
  $\dom{\A}=(ba^{*})^{+}b$, $\sem{\A}(ba^{m_1}b)=\varepsilon$ and
  $\sem{\A}(ba^{m_1}ba^{m_2}b \cdots a^{m_k}b)= a^{m_2}b^{m_1}a^{m_3}b^{m_2}
  \cdots a^{m_k}b^{m_{k-1}}$ for $k\geq2$.
\end{example}

\subsection{Regular Transducer Expressions}\label{sec:RegExp}

Let $\Sigma$ and $\Gamma$ be finite input and output alphabets.  Recall that
$\bot\notin\Gamma$ is a special symbol that stands for \emph{undefined}.  We
define the output monoid as $\D=\Gamma^*\cup\{\bot\}$ with the usual
concatenation on words, $\bot$ acting as a zero: $d\cdot\bot=\bot\cdot
d=\bot$ for all $d \in \D$.  The unit is the empty word $\unitD=\varepsilon$.

We define \emph{Regular Transducer Expressions} (\RTE) from $\Sigma^*$
to $\D$ using some basic combinators.   
The syntax of \RTE is defined with the following grammar:
$$
C ::= d \mid \Ifthenelse{K}{C}{C} \mid C\odot C \mid C\lrdot C \mid C \rldot C
\mid \lrplus{C} \mid \rlplus{C} 
\mid \twoplus{K}{C} \mid \rltwoplus{K}{C}
$$
where $d\in\D$ ranges over output values, and $K\subseteq\Sigma^*$ ranges over
regular languages of \emph{finite words}.
The semantics of an \RTE $C$ is a function $\sem{C}\colon\Sigma^*\to\D$ defined
inductively following the syntax of the expression, starting from constant
functions.  Since $\bot$ stands for \emph{undefined}, we define the
\emph{domain} of a function $f\colon\Sigma^*\to\D$ by
$\dom{f}=f^{-1}(\D\setminus\{\bot\})=\Sigma^*\setminus f^{-1}(\bot)$.

\begin{description}
  \item[Constants.] For $d\in\D$, we let $\sem{d}$ be the constant map defined by 
  $\sem{d}(w)=d$ for all $w\in\Sigma^*$.
  
  We have $\dom{\sem{d}}=\Sigma^*$ if $d\neq\bot$ and $\dom{\sem{\bot}}=\emptyset$.
\end{description}
Each regular combinator defined above allows to combine functions from $\Sigma^*$ to
$\D$.  
For functions $f,g\colon\Sigma^*\to\D$, $w \in \Sigma^*$ and 
a regular language $K\subseteq\Sigma^*$, we define the following combinators. 
\begin{description}
  \item[If then else.]  $(\Ifthenelse{K}{f}{g})(w)$ is defined as
  $f(w)$ for $w\in K$, and $g(w)$ for $w\notin K$.
  
  We have $\dom{\Ifthenelse{K}{f}{g}}=(\dom{f}\cap K)\cup(\dom{g}\setminus K)$.

  \item[Hadamard product.]$(f \odot g)(w)=f(w)\cdot g(w)$ (recall that 
  $(\D,\cdot,\unitD)$ is a monoid).
  
  We have $\dom{f\odot g}=\dom{f}\cap\dom{g}$.

  \item[Unambiguous Cauchy product and its reverse.] If $w$ admits a unique 
  factorization $w=u\cdot v$ with $u\in\dom{f}$ and $v\in\dom{g}$ then we set 
  $(f\lrdot g)(w)=f(u)\cdot g(v)$ and $(f\rldot g)(w)=g(v)\cdot f(u)$. Otherwise, we set 
  $(f\lrdot g)(w)=\bot=(f\rldot g)(w)$.
  
  We have $\dom{f\lrdot g}=\dom{f\rldot g}\subseteq\dom{f}\cdot\dom{g}$ and the 
  inclusion is strict if the concatenation of $\dom{f}$ and $\dom{g}$ is 
  ambiguous.

  \item[Unambiguous Kleene-plus and its reverse.]  If $w$ admits a unique
  factorization $w=u_1\cdot u_2 \cdots u_n$ with $n\geq1$ and $u_i\in\dom{f}$
  for all $1\leq i\leq n$ then we set $\lrplus{f}(w)=f(u_1)\cdot f(u_2) \cdots
  f(u_n)$ and $\rlplus{f}(w)=f(u_n)\cdots f(u_2)\cdot f(u_1)$.  Otherwise, we
  set $\lrplus{f}(w)=\bot=\rlplus{f}(w)$.
  
  We have $\dom{\lrplus{f}}=\dom{\rlplus{f}}\subseteq\dom{f}^+$ and the
  inclusion is strict if the Kleene iteration $\dom{f}^+$ of $\dom{f}$ is
  ambiguous. Notice that $\dom{\lrplus{f}}=\emptyset$ when 
  $\varepsilon\in\dom{f}$.

  \item[Unambiguous 2-chained Kleene-plus and its reverse.]  If $w$ admits a
  unique factorization $w=u_1\cdot u_2 \cdots u_n$ with $n\geq1$ and $u_i\in K$
  for all $1\leq i\leq n$ then we set $\twoplus{K}{f}(w)=f(u_1u_2)\cdot
  f(u_2u_3) \cdots f(u_{n-1}u_n)$ and $\rltwoplus{K}{f}(w)=f(u_{n-1}u_n)\cdots
  f(u_2u_3)\cdot f(u_1u_2)$ (if $n=1$, the empty product gives the unit of $\D$:
  $\twoplus{K}{f}(w)=\unitD=\rltwoplus{K}{f}(w)$).  Otherwise, we set
  $\twoplus{K}{f}(w)=\bot=\rltwoplus{K}{f}(w)$.

  Again, we have $\dom{\twoplus{K}{f}}=\dom{\rltwoplus{K}{f}}\subseteq K^+$ and
  the inclusion is strict if the Kleene iteration $K^+$ of $K$ is ambiguous.
  Notice that, even if $w\in K^+$ admits a unique factorization $w=u_1\cdot u_2
  \cdots u_n$ with $u_i\in K$ for all $1\leq i\leq n$, $w$ is not necessarily 
  in the domain of $\twoplus{K}{f}$ or $\rltwoplus{K}{f}$. For $w$ to be in 
  this domain, it is further required that 
  $u_1u_2,u_2u_3,\ldots,u_{n-1}u_n\in\dom{f}$.
  Notice that we have $\dom{\twoplus{K}{f}}=\dom{\rltwoplus{K}{f}}=K^+$ when
  $K^+$ is unambiguous and $K^2\subseteq\dom{f}$.
\end{description}

\begin{lemma}\label{lem:dom-regular}
  The domain of an \RTE $C$ is a regular language $\dom{C}\subseteq\Sigma^*$.
\end{lemma}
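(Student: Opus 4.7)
The plan is to prove this by structural induction on the \RTE $C$, showing that each combinator preserves regularity of the domain. The base case $C = d$ is immediate: $\dom{d} = \Sigma^*$ if $d \neq \bot$ and $\dom{\bot} = \emptyset$, both regular. For the easy inductive cases, I would simply use Boolean closure of regular languages: $\dom{\Ifthenelse{K}{f}{g}} = (\dom{f} \cap K) \cup (\dom{g} \setminus K)$ and $\dom{f \odot g} = \dom{f} \cap \dom{g}$ are regular whenever $\dom{f}, \dom{g}, K$ are, since $K$ is regular by assumption and $\dom{f}, \dom{g}$ are regular by induction hypothesis.

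The real work is in the unambiguous combinators, where the domain is a proper subset of the natural concatenation/iteration. Here I would use the following standard fact (which I expect to be the main obstacle to state cleanly, though the ideas are well-known): given regular languages $L_1, L_2 \subseteq \Sigma^*$, the set
\[
  \UN(L_1, L_2) = \{ w \in \Sigma^* : w \text{ admits exactly one factorization } w = uv \text{ with } u \in L_1, v \in L_2 \}
\]
is effectively regular, and similarly the set of words in $L^+$ with a unique factorization into $L$-factors is regular. The standard argument is to build an NFA $N$ for $L_1 \cdot L_2$ (resp.\ $L^+$) whose accepting runs are in bijection with factorizations, then take the self-product $N \times N$ and mark those runs that differ at some point; the projection yields a regular language of ``ambiguous'' words, which we subtract from $L_1 \cdot L_2$ (resp.\ $L^+$). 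Applying this with $L_1 = \dom{f}$, $L_2 = \dom{g}$ (regular by induction) gives that $\dom{f \lrdot g} = \dom{f \rldot g} = \UN(\dom{f}, \dom{g})$ is regular, and similarly for $\dom{\lrplus{f}} = \dom{\rlplus{f}}$.

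For the 2-chained Kleene-plus, the domain is more intricate: $w \in \dom{\twoplus{K}{f}}$ requires both that $w$ has a unique factorization $w = u_1 \cdots u_n$ with each $u_i \in K$, and additionally that each consecutive pair $u_i u_{i+1} \in \dom{f}$. The first condition gives a regular set by the previous paragraph (applied to $L = K$). For the second condition, I would construct a DFA that reads $w$, simulates the unique factorization into $K$-factors (using a DFA for $K$ together with the unambiguity witness), and in parallel runs a DFA for $\dom{f}$ (regular by induction) that is reset and launched at each factor boundary so as to check $u_i u_{i+1} \in \dom{f}$; the accepting condition is that every such check succeeds. Intersecting with the unique-factorization set yields $\dom{\twoplus{K}{f}}$ as a regular language, and the same construction works for $\dom{\rltwoplus{K}{f}}$. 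This completes the induction.
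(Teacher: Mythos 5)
Your proof is correct and follows essentially the same route as the paper, which states this lemma without a standalone proof but establishes exactly these regularity facts inside the proof of Lemma~\ref{lem:induction-RTE-to-2DFT}: structural induction, Boolean closure of regular languages for if-then-else and Hadamard, and, for the unambiguous combinators, showing that the set of ambiguously factorizable words is regular and subtracting it. The only difference in that step is cosmetic: where you take a self-product of an NFA for $\dom{f}\cdot\dom{g}$ and project out the words with two distinct accepting runs, the paper writes the ambiguous set directly as $\bigcup_{p,q}L_p\cdot M_{p,q}\cdot R_q$ indexed by states of DFAs for $\dom{f}$ and $\dom{g}$ --- the same computation. Where you genuinely add something is the 2-chained case: you handle the extra requirement $u_iu_{i+1}\in\dom{f}$ explicitly with an overlapping-window automaton, whereas the paper only gets regularity of $\dom{\twoplus{K}{f}}$ indirectly, from the fact that it builds an equivalent 2DFT whose domain is regular. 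Two small points to tidy there: the boundary-tracking automaton should be phrased nondeterministically (guess the cut points; intersecting with the unique-factorization language forces the guess, so determinization is harmless), and since the windows $u_iu_{i+1}$ and $u_{i+1}u_{i+2}$ overlap, two copies of the $\dom{f}$-automaton must be live simultaneously rather than one copy being ``reset'' --- still finite state, but worth saying precisely.
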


\begin{remark}
  Notice that the reverse Cauchy product is redundant, it can be expressed 
  with the Hadamard product and the Cauchy product:
  $$
  f\rldot g =
  ((\Ifthenelse{\dom{f}}{\varepsilon}\bot)\lrdot g) \odot
  (f\lrdot(\Ifthenelse{\dom{g}}{\varepsilon}\bot)) \,.
  $$
  The unambiguous Kleene-plus is also redundant, it can be expressed with the 
  unambiguous 2-chained Kleene-plus:
  $$
  \lrplus{f} = 
  \twoplus{\dom{f}}{f\lrdot(\Ifthenelse{\dom{f}}{\varepsilon}\bot)}
  \odot (\Ifthenelse{(\dom{f}^{*}}{\varepsilon}\bot)\lrdot f) \,.
  $$
\end{remark}

\begin{example}
	\label{finiteRTE-eg}
  Consider the \RTEs 
  $C_1{=}{(\Ifthenelse{[(a+b)^+\#]}{\varepsilon}\bot){\lrdot}(\Ifthenelse{(a+b)^+}{\mathsf{copy}}\bot)}$,
  $C_2=\#$ and
  $C_3=(\Ifthenelse{(a+b)^+}{\mathsf{copy}}\bot){\lrdot}(\Ifthenelse{[\#(a+b)^+]}\varepsilon\bot)$, 
  where $\mathsf{copy}=\lrplus{(\Ifthenelse{a}{a}(\Ifthenelse{b}{b}{\bot}))}$.
  Then, $\dom{\sem{C_2}}=\Sigma^*$, $\dom{\sem{\mathsf{copy}}}=(a+b)^+$ and
  $\dom{\sem{C_1}}=\dom{\sem{C_3}}=(a+b)^+\#(a+b)^+$.
  Moreover, $\sem{C_1\odot C_2 \odot C_3}(u\#v)=v\#u$ for all 
  $u,v\in(a+b)^+$. 
\end{example}

\begin{example}
  Consider the \RTEs 
  $C_a=(\Ifthenelse{b}\varepsilon\bot)\lrdot\lrplus{(\Ifthenelse{a}{a}\bot)}$ and
  $C_b=(\Ifthenelse{b}\varepsilon\bot)\lrdot\lrplus{(\Ifthenelse{a}{b}\bot)}$.
  We have $\dom{\sem{C_a}}=ba^+=\dom{\sem{C_b}}$ and $\sem{C_a}(ba^n)=a^n$ and 
  $\sem{C_b}(ba^n)=b^n$.
  We deduce that $\dom{\sem{C_b\rldot C_a}}=ba^+ba^+$ and 
  $\sem{C_b\rldot C_a}(ba^nba^m)=a^mb^n$.

  Consider the expression $C=\twoplus{ba^+}{C_b\rldot 
  C_a}\lrdot(\Ifthenelse{b}\varepsilon\bot)$. 
  Then, $\dom{\sem{C}}=(ba^+)^+b$, $\sem{C}(ba^mb)=\varepsilon$ and
  $\sem{C}(ba^{m_1}ba^{m_2}b\cdots a^{m_k}b)=a^{m_2}b^{m_1}a^{m_3}b^{m_2}\cdots
  a^{m_k}b^{m_{k-1}}$ for $k\geq2$.
\end{example}

\begin{theorem}\label{thm:2DFT=RTE}
  2DFTs and \RTEs define the same class of functions. More precisely,
  \begin{enumerate}[nosep]
    \item\label{item:RTEto2DFT} given an \RTE $C$, we can construct a 2DFT $\A$ such that 
    $\sem{\A}=\sem{C}$,

    \item\label{item:2DFTtoRTE} given a 2DFT $\A$, we can construct an \RTE C
    such that $\sem{\A}=\sem{C}$.
  \end{enumerate}
\end{theorem}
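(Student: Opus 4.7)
The two directions require rather different approaches; I will sketch both.

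For direction~(\ref{item:RTEto2DFT}), the plan is a straightforward structural induction on the \RTE $C$. Constants are produced by a 2DFT that scans the input once and writes $d$. For $\Ifthenelse{K}{f}{g}$ I would first simulate a DFA for $K$ on a left-to-right pass, record the answer in the finite state, rewind to $\leftend$, then simulate the 2DFT for $f$ or the 2DFT for $g$. The Hadamard product $f\odot g$ is handled identically: run the transducer for $f$, rewind, then run the transducer for $g$. For $f\lrdot g$ and $f\rldot g$, because $\dom{f}$ and $\dom{g}$ are regular by Lemma~\ref{lem:dom-regular}, I can build a 2NUFT that nondeterministically guesses the unique splitting position $w=u\cdot v$, verifies $u\in\dom{f}$ and $v\in\dom{g}$ using two-way scans, and then simulates the two sub-transducers in the desired order; unambiguity of the factorization yields an unambiguous 2NUFT, which by Chytil's theorem is equivalent to a 2DFT. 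The same scheme, iterated, handles $\lrplus{f}$, $\rlplus{f}$, $\twoplus{K}{f}$ and $\rltwoplus{K}{f}$: a 2NUFT guesses the unique factorization and iterates the sub-transducer(s) in the correct orientation.

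For direction~(\ref{item:2DFTtoRTE}), the plan follows the roadmap sketched in the introduction. First, compute the transition monoid $\TrMon(\A)$ together with the morphism $\TrMorph\colon\Sigma^{*}\to\TrMon(\A)$; then $\dom{\A}$ is a finite disjoint union $\bigsqcup_{X\in\Ff}\TrMorph^{-1}(X)$ where $\Ff=\{X\in\TrMon(\A)\mid (q_0,\leftright,q_F)\in X\text{ for some final }q_F\}$. A nested if-then-else indexed by $\Ff$ reduces the problem to producing, for each $X\in\Ff$, an \RTE equivalent to $\A$ on $\TrMorph^{-1}(X)$. Second, apply the unambiguous forest-factorization theorem of~\cite{GastinKrishna-UFF} to obtain a \emph{good} regular expression $E_X$ denoting $\TrMorph^{-1}(X)$: unambiguous, with $\TrMorph$ constant on each subexpression, and with every Kleene-plus taken on an idempotent.

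The core is then a structural induction on the good expression $E$: for every step $x=(p,d,q)$ in the monoid element $\TrMorph(E)$ I construct an \RTE $C_E(x)$ of domain $\Lang{E}$ that outputs what $\A$ writes when executed along $x$ on the input. Atomic cases, concatenation and disjoint union are handled by combining $C_{E_i}(y)$'s of the immediate subexpressions via if-then-else, unambiguous Cauchy products (to chain sub-steps that enter and exit at adjacent boundaries) and Hadamard products (to glue the outputs produced by the different sweeps of a zigzag that visits the same subexpression several times). Because the transition monoid fixes the abstract shape of the run, for each step $x$ there are only finitely many possible zigzag profiles to enumerate, and each profile dictates a specific Cauchy/Hadamard skeleton over the inductive \RTEs.

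The hard part, and where I expect the real work to lie, is the Kleene-plus case $E=F^{+}$ with $\TrMorph(F)$ idempotent. Here a run on $u_1u_2\cdots u_n$ may bounce arbitrarily between the factors $u_i$, so a naive iteration does not suffice; idempotence, however, guarantees that every sub-step of the run crosses at most two consecutive factors with a uniform profile, and that the endpoints of each ``phase'' of the global traversal land at predictable $u_i$-boundaries. The plan is to classify each step $x\in\TrMorph(F^{+})=\TrMorph(F)$ by its traversal profile, and to express each phase of the run as a $\twoplus{\Lang{F}}{\cdot}$ or $\rltwoplus{\Lang{F}}{\cdot}$ applied to an appropriate Cauchy combination of $C_F(y_1)$ and $C_F(y_2)$, where $y_1,y_2$ are the sub-steps on two consecutive factors given by idempotence. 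A Hadamard product across the finitely many phases then yields $C_{F^{+}}(x)$, with leading/trailing correction factors (as in the running example of step $y=(q_4,\rightright,q_1)$ on $ba^{+}ba^{+}$) handled by if-then-else guarded Cauchy products. The delicate verification is that the resulting \RTE computes, on every unambiguous factorization $u_1\cdots u_n$, exactly the concatenation of outputs that $\A$ produces while running $x$; this reduces to an induction on $n$ using idempotence to align the phase boundaries.
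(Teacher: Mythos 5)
Your direction~(\ref{item:2DFTtoRTE}) is essentially the paper's own argument: transition monoid, nested if-then-else over the monoid elements, unambiguous forest factorization into good expressions, and a structural induction building $C_E(x)$ per step $x$, with Cauchy products chaining sub-steps, Hadamard products gluing the sweeps of a zigzag, and the idempotent Kleene-plus case resolved via $\twoplus{\Lang{F}}{\cdot}$. That part is sound in outline (the paper additionally handles the end-markers $\leftend,\rightend$ explicitly when assembling the final expression, but this is routine).

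The gap is in direction~(\ref{item:RTEto2DFT}), in the Cauchy-product and Kleene-plus cases. You propose a 2NUFT that ``nondeterministically guesses the unique splitting position $w=u\cdot v$ \ldots{} and then simulates the two sub-transducers.'' A finite-state device cannot \emph{remember} a guessed position: once the simulation of the two-way machine $M_f$ on the prefix $u$ begins, the head must repeatedly recognize the boundary between $u$ and $v$ (it plays the role of $\rightend$ for $M_f$ and of $\leftend$ for $M_g$), and after the head has wandered away there is no marker on the tape telling it when it has returned to the split point. Guessing a position is only unproblematic for a \emph{one-way} machine, which simply switches mode at that moment. The paper resolves exactly this by factoring the construction: a 1NUFT first rewrites $w$ into the marked word $u\#v$ (here the guess is a one-way mode switch, unambiguous because the factorization is unique), a 2DFT then runs $M_f$ and $M_g$ on the marked word treating $\#$ as an end-marker, and the closure of these machines under composition (a 1NUFT composed with a 2DFT is a 2NUFT, by Chytil's theorem, which then converts to a 2DFT) yields the result. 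The same device is needed, even more acutely, for $\lrplus{f}$ and $\twoplus{K}{f}$, where there are unboundedly many factorization boundaries to relocate. Without the composition step (or an equivalent mechanism such as a Hopcroft--Ullman-style position-recovering subroutine, which you do not invoke), the construction as described does not go through.
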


The proof of \eqref{item:RTEto2DFT} is given in the next section, while the
proof of \eqref{item:2DFTtoRTE} will be given in Section~\ref{sec:2DFT-RTE}
after some preliminaries in Section~\ref{sec:TrM-2NFA} on transition monoids for
2DFTs and the unambiguous forest factorization theorem.

\subsection{\RTE to 2DFT}

In this section, we prove Theorem~\ref{thm:2DFT=RTE}\eqref{item:RTEto2DFT}, 
i.e., we show that given an \RTE $C$, we can construct a 2DFT $\Aa$
such that $\sem{\Aa}=\sem{C}$.  We do this by structural induction on \RTEs,
starting with constant functions, and then later showing that 2DFTs are closed
under all the combinators used in \RTEs.
  
\noindent
\textbf{Constant functions:} We start with the constant function $d\in\D$ for
which it is easy to construct a 2DFT $\Aa$ such that $\sem{d}=\sem{\Aa}$.  For
$d=\bot$, we take $\Aa$ such that $\dom{\Aa}=\emptyset$ (for instance we use a
single state and an empty transition function).  Assume now that
$d\in\Gamma^*$.  The 2DFT scans the word up to the right end marker, outputs
$d$ and stops.  Formally, we let $\Aa = (\{q\},\Sigma,\Gamma,\delta,q,\{q\})$
s.t.\ $\delta(q,a) = (q, \varepsilon, +1)$ for all $a\in\Sigma \cup \{\leftend\}$ and
$\delta(q,\rightend) = (q,d,+1)$.  Clearly, $\sem{\Aa}(w)=d$ for all
$w\in\Sigma^*$.

The inductive steps follow directly from:

\begin{lemma}\label{lem:induction-RTE-to-2DFT}
  Let $K\subseteq\Sigma^*$ be regular, and let $f$ and
  $g$ be \RTEs with $\sem{f}=\sem{M_f}$ and $\sem{g}=\sem{M_g}$ for 2DFTs $M_f$
  and $M_g$ respectively. Then, one can construct
  \begin{enumerate}
    \item\label{item:ifthenelse} a 2DFT $\Aa$ such that $\sem{\Ifthenelse{K}{f}g}=\sem{\Aa}$.  
    
    \item\label{item:Hadamard} a 2DFT $\Aa$ such that $\sem{\Aa}=\sem{f \odot g}$.  
    
    \item\label{item:Cauchy} 2DFTs $\Aa$, $\Bb$ such that $\sem{\Aa}=\sem{f \lrdot g}$ and
    $\sem{\Bb}=\sem{f \rldot g}$.  
    
    \item\label{item:Kleene} 2DFTs $\Aa$, $\Bb$ such that $\sem{\Aa}=\sem{\lrplus{f}}$ and
    $\sem{\Bb}=\sem{\rlplus{f}}$.  
   
    \item\label{item:twoplus} 2DFTs $\Aa$, $\Bb$ such that $\sem{\Aa}=\sem{\twoplus{K}{f}}$ and
    $\sem{\Bb}=\sem{\rltwoplus{K}{f}}$.  
  \end{enumerate}    
\end{lemma}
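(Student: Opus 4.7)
The plan is to prove each item by a direct 2DFT construction. We use freely that 2DFTs can (i) simulate any DFA in either direction, and in particular for the reverse of a regular language, (ii) reset the head to $\leftend$ or $\rightend$ and make several passes over the input, and (iii) restrict themselves to a factor of the input by treating marked positions as virtual endmarkers. Since 2DFT domains are regular, DFAs for $\dom{f}$, $\dom{g}$, and their reverses and Kleene iterations are available. All state sets produced below are finite products of the components involved.

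\textbf{Items (1), (2), (3).} For $\Ifthenelse{K}{f}{g}$, a DFA $B_K$ for $K$ is run left-to-right until $\rightend$, its final state is kept in the control, then the head returns to $\leftend$ and the 2DFT simulates either $M_f$ or $M_g$. For $f\odot g$, simulate $M_f$ to completion, producing $f(w)$; then reset to $\leftend$ and simulate $M_g$, appending $g(w)$. Both sub-simulations must accept, yielding domain $\dom{f}\cap\dom{g}$. For $f\lrdot g$, a position $i$ is a valid split iff a DFA for $\dom{f}$ accepts $w[1..i]$ and a DFA for the reverse of $\dom{g}$ accepts the reverse of $w[i+1..n]$; the 2DFT sweeps left-to-right, and at each position makes a short side-trip to the right and back to check the suffix condition, counting valid splits with a saturating counter of threshold $2$. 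If exactly one split position exists, simulate $M_f$ on $w[1..i]$ (with $i+1$ treated as a virtual $\rightend$) then $M_g$ on $w[i+1..n]$ (with $i$ as a virtual $\leftend$). For $f\rldot g$, the two sub-simulations are reordered.

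\textbf{Items (4), (5).} For $\lrplus{f}$, the key subroutine locates the unique next factor starting at the current position $s$: a position $e$ ends the next factor iff $w[s..e]\in\dom{f}$ and $w[e+1..n]\in\dom{f}^*$, detected by scanning forward from $s$ with a DFA for $\dom{f}$ and, at each candidate $e$, making a side-trip that runs a reverse DFA for $\dom{f}^*$ on the suffix. Candidate $e$'s are counted with a saturating counter; if exactly one is found, simulate $M_f$ on $w[s..e]$, set $s:=e+1$, and iterate until $s$ exits at $\rightend$. This also handles $\varepsilon\in\dom{f}$ correctly by an upfront check (forcing $\bot$). The reverse $\rlplus{f}$ uses the symmetric right-to-left iteration. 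For $\twoplus{K}{f}$, the same machinery with $K$ in place of $\dom{f}$ identifies the unique $K$-factorization $w=u_1\cdots u_n$; then, for each consecutive pair $(u_i,u_{i+1})$ taken in order, simulate $M_f$ on the substring $u_iu_{i+1}$, with the start of $u_i$ and the end of $u_{i+1}$ acting as virtual endmarkers (a move of $M_f$ past a virtual boundary is resolved by feeding it $\leftend$ or $\rightend$ without advancing the physical head). The reverse $\rltwoplus{K}{f}$ first walks to the rightmost pair by re-running the factorization subroutine, then iterates leftward.

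\textbf{Main obstacle.} The delicate step is the unambiguity check for Kleene-plus and 2-chained Kleene-plus: we must certify that the chosen factorization is the \emph{only} one, not merely \emph{some} valid one, for otherwise the output must be $\bot$. This is handled locally by the saturating ``count up to $2$'' over candidate split positions during each forward scan from $s$, which converts a non-local uniqueness property into a finite-state check repeated segment by segment. Once this is in place, items (3)--(5) are variations on the same split-and-simulate theme, and items (1)--(2) are straightforward multi-pass constructions.
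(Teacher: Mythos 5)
Items \eqref{item:ifthenelse} and \eqref{item:Hadamard} of your construction coincide with the paper's. The gap is in items \eqref{item:Cauchy}--\eqref{item:twoplus}, and it sits exactly at the point you flag as delicate. Your key primitive --- ``at each position make a side-trip to the right and back to check the suffix condition'' --- is not implementable by a 2DFT. Deciding whether $w[i{+}1..n]\in\dom{g}$ requires reading the entire suffix, so the trip ends at $\rightend$; and a two-way automaton has no way to relocate the interior position $i$ afterwards, since it cannot mark cells or count. The same objection defeats the candidate-end detection in item \eqref{item:Kleene} and the ``virtual endmarkers'' of item \eqref{item:twoplus}: while $M_f$ wanders arbitrarily inside $u_iu_{i+1}$, the simulating machine must recognize the (unmarked) boundaries of that factor, which is again a global property of the prefix and suffix, not a locally checkable one. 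So the saturating counter argument, which presupposes a per-position suffix-membership oracle, does not by itself convert the uniqueness condition into a finite-state check.

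This is precisely the difficulty the paper's proof is organized around. It first establishes that the unambiguous domain (e.g.\ $(\dom{f}\cdot\dom{g})\setminus K$ for the Cauchy product, where $K$ is the regular set of ambiguously factorizable words) is regular; it then builds a \emph{one-way nondeterministic unambiguous} transducer $\mathcal{D}$ that guesses the split points and writes a physical separator $\#$ into the output, composes $\mathcal{D}$ with a 2DFT $\Tt$ operating on the $\#$-marked word, observes that the composition is a 2NUFT, and finally invokes the equivalence of 2NUFTs and 2DFTs \cite{Chytil1977} to recover determinism. Your construction could be repaired either by following that route, or by maintaining the state of a co-deterministic automaton for the suffix language during the forward sweep via the Hopcroft--Ullman zig-zag (backward-simulation) technique; but one of these devices is indispensable, and its absence is a genuine hole in the argument as written.
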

  
\begin{proof}
  \eqref{item:ifthenelse} \textbf{If then else.}
  Let $\mathcal{B}$ be a complete DFA that accepts the regular language $K$.
  The idea of the proof is to construct a 2DFT $\Aa$ which first runs
  $\mathcal{B}$ on the input $w$ until the end marker $\rightend$ is reached in
  some state $q$ of $\mathcal{B}$.  Then, $w\in K$ iff $q\in F$ is some
  accepting state of $\mathcal{B}$. The automaton $\Aa$ moves left all
  the way to $\leftend$, and starts running either $M_f$ or $M_g$ depending on 
  whether $q\in F$ or not. 
  Since $\mathcal{B}$ is complete, it is clear that
  $\dom{\Aa}=\dom{\Ifthenelse{K}{f}g}$ and the output of $\Aa$ coincides with
  $\sem{M_f}$ iff the input is in $K$, and otherwise coincides with $\sem{M_g}$.

  \medskip\noindent\eqref{item:Hadamard} \textbf{Hadamard product.}
  Given an input $w$, the constructed 2DFT $\Aa$ first runs $M_f$.  Instead of
  executing a transition $p\xrightarrow{\rightend/\gamma,+1}q$ with $q$ a final
  state of $M_f$, it executes $p\xrightarrow{\rightend/\gamma,-1}\mathsf{reset}$
  where $\mathsf{reset}$ is a new state.  While in the $\mathsf{reset}$ state,
  it moves all the way back to $\leftend$ and it starts running $M_g$ by
  executing $\mathsf{reset}\xrightarrow{\leftend/\gamma',+1}q'$ if
  $\delta_g(q_0,\leftend) = (q', \gamma', +1)$ where $\delta_g$ is the transition
  function of $M_g$ and $q_0$ is the initial state of $M_g$.  The final states
  of $\Aa$ are those of $M_g$,
  and its initial state is the initial state of $M_f$.  Clearly,
  $\dom{\Aa}=\dom{M_f}\cap\dom{M_g}$ and the output of $\Aa$ is the
  concatenation of the outputs of $M_f$ and $M_g$.

  \medskip\noindent\eqref{item:Cauchy} \textbf{Cauchy product.}
  The domain of a 2DFT is a regular language, accepted by the 2DFA obtained by
  ignoring the outputs.  Since 2DFAs are effectively equivalent to (1)DFAs, we
  can construct from $M_f$ and $M_g$ two DFAs 
  $\mathcal{C}_f=(Q_f,\Sigma,\delta_f,s_f,F_f)$ and 
  $\mathcal{C}_g=(Q_g,\Sigma,\delta_g,s_g,F_g)$
  such that $\Lang{\mathcal{C}_f}=\dom{f}$ and $\Lang{\mathcal{C}_g}=\dom{g}$.
   
  Now, the set $K$ of words $w$ having at least two factorizations
  $w=u_1v_1=u_2v_2$ with $u_1,u_2\in\dom{f}$, $v_1,v_2\in\dom{g}$ and $u_1\neq
  u_2$ is also regular.  This is easy since $K$ can be written as
  $K=\bigcup_{p\in F_f,q\in Q_g} L_p\cdot M_{p,q}\cdot R_q$ where
  \begin{itemize}
    \item $L_p$ is the set of words which admit a run in $\mathcal{C}_f$
    from its initial state to the final state $p\in F_f$,

    \item $M_{p,q}$ is the set of words which admit a run in $\mathcal{C}_f$
    from state $p$ to some final state in $F_f$, and also admit a run in 
    $\mathcal{C}_g$ from its initial state to state $q\in Q_g$,
  
    \item $R_{q}$ is the set of words which admit a run in $\mathcal{C}_g$
    from state $q$ to some final state in $F_g$, and also admit a run in 
    $\mathcal{C}_g$ from its initial state to some final state in $F_g$.
  \end{itemize}
  Therefore, we have $\dom{f\lrdot g}=\dom{f\rldot g}
  =(\dom{f}\cdot\dom{g})\setminus K$ is a regular language and we construct a
  complete DFA $\mathcal{C}=(Q,\Sigma,\delta,q_0,F)$ which accepts this language.
  
  \begin{enumerate}
    \item From $\mathcal{C}_f$, $\mathcal{C}_g$ and $\mathcal{C}$ we construct a
    1NUFT $\mathcal{D}$ such that $\dom{\mathcal{D}}=\dom{f\lrdot g}$ and on an
    input word $w=u\cdot v$ with $u\in\dom{f}$ and $v\in\dom{g}$ it produces the
    output $u\#v$ where $\#\notin\Sigma$ is a new symbol.
    On an input word $w\in\Sigma^*$, the transducer $\mathcal{D}$ runs a copy of
    $\mathcal{C}$.
    Simultaneously, $\mathcal{D}$ runs a copy of $\mathcal{C}_f$ on some prefix
    $u$ of $w$, copying each input letter to the output.  Whenever $\mathcal{C}_f$
    is in a final state after reading $u$, the transducer $\mathcal{D}$ may
    non-deterministically decide to stop running $\mathcal{C}_f$, to output $\#$,
    and to start running $C_g$ on the corresponding suffix $v$ of $w$ ($w=u\cdot
    v$) while copying again each input letter to the output.
    The transducer $\mathcal{D}$ accepts if $\mathcal{C}$ accepts $w$ and
    $\mathcal{C}_g$ accepts $v$. Then, we have 
    $u\in\Lang{\mathcal{C}_f}=\dom{f}$, $v\in\Lang{\mathcal{C}_g}=\dom{g}$ and 
    $w=u\cdot v\in\Lang{\mathcal{C}}=\dom{f\lrdot g}$. The output produced by 
    $\mathcal{D}$ is $u\#v$.
    The only non-deterministic 
    choice in an accepting run of $\mathcal{D}$ is unambiguous since a word
    $w\in\Lang{\mathcal{C}}=\dom{f\lrdot g}$ has a unique
    factorization $w=u\cdot v$ with $u\in\dom{f}$ and $v\in\dom{g}$.
    
    \item We construct a 2DFT $\Tt$ which takes as input words of the form
    $u\#v$ with $u,v\in\Sigma^*$, runs $M_f$ on $u$ and then $M_g$ on $v$.  To
    do so, $u$ is traversed in either direction depending on $M_f$, and the
    symbol $\#$ is interpreted as the right end marker $\rightend$.  We explain
    how $\Tt$ simulates a transition of $M_f$ moving to the right of
    $\rightend$, producing some output $\gamma$ and going to a state $q$.  If
    $q$ is not final, then $\Tt$ moves to the right of $\#$ and then all the way
    to the end and rejects.  If $q$ is final, then $\Tt$ stays on $\#$
    (simulated by moving right and then back left), producing the output
    $\gamma$, but goes to the initial state of $M_g$ instead.  $\Tt$ then runs
    $M_g$ on $v$, interpreting $\#$ as $\leftend$.  When $M_g$ moves to the
    right of $\rightend$, $\Tt$ does the same and accepts iff $M_g$ accepts.
    
    \item In a similar manner, we construct a 2DFT $\Tt'$ which takes as input
    strings of the form $u\#v$, first runs $M_g$ on $v$ and then runs $M_f$ on
    $u$.  Assume that $M_g$ wants to move to the right of $\rightend$ going to
    state $q$.  If $q$ is not final then $\Tt'$ also moves to the right of
    $\rightend$ and rejects.  Otherwise, $\Tt'$ traverses back to $\leftend$ and
    runs $M_f$ on $u$.  When $M_f$ wants to move to the right of $\#$ going to
    some state $q$ and producing $\gamma$, $\Tt'$ moves also to the right of 
    $\#$ producing $\gamma$ and then all the way right producing $\varepsilon$.
    After moving to the right of $\rightend$, it accepts if $q$ is a final state
    of $M_f$ and rejects otherwise.
  \end{enumerate}
  We construct a 2NUFT $\Aa'$ as the composition of $\mathcal{D}$ and $\Tt$.
  The composition of a 1NUFT and a 2DFT is a 2NUFT \cite{Chytil1977}, hence
  $\Aa'$ is a 2NUFT. Moreover, $\sem{\Aa'}=\sem{f\lrdot g}$.  Using the
  equivalence of 2NUFT and 2DFT, we can convert $\Aa'$ into an equivalent 2DFT
  $\Aa$.
  In a similar way, to obtain $\sem{f\rldot g}$, the 2NUFT $\Bb'$ is obtained as
  a composition of $\mathcal{D}$ and $\Tt'$ and is then converted to an 
  equivalent 2DFT $\Bb$.

  \medskip\noindent\eqref{item:Kleene} \textbf{Kleene-plus.}
  The proof is similar to case \eqref{item:Cauchy}.  First, we show
  that $\dom{\lrplus{f}}$ is regular.  Notice that if $\varepsilon\in\dom{f}$
  then $\dom{\lrplus{f}}=\emptyset$, hence we assume below that
  $\varepsilon\notin\dom{f}$.  As in case \eqref{item:Cauchy}, the
  language $K$ of words $w$ having at least two factorizations $w=u_1v_1=u_2v_2$
  with $u_1,u_2\in\dom{f}$, $v_1,v_2\in\dom{f}^*$ and $u_1\neq u_2$ is regular.
  Hence, $K'=\dom{f}^*\cdot K$ is regular and contains all words in $\dom{f}^+$
  having several factorizations as products of words in $\dom{f}$.  We deduce
  that $\dom{\lrplus{f}}=\dom{f}^+\setminus K'$ is regular and we can construct 
  a complete DFA $\mathcal{C}$ recognizing this domain.
  
  As in case \eqref{item:Cauchy}, from $\mathcal{C}_f$ and
  $\mathcal{C}$, we construct a 1NUFT $\mathcal{D}$ which takes as input $w$ and
  outputs $u_1\#u_2\#\cdots\#u_n$ iff there is an unambiguous decomposition of
  $w$ as $u_1u_2 \cdots u_n$, with each $u_i\in\dom{f}$.
  We then construct a 2DFT $\Tt$ that takes as input words of the form
  $u_1\#u_2\#\cdots\#u_n$ with each $u_i\in\Sigma^*$ and runs $M_f$ on each
  $u_i$ from left to right, i.e., starting with $u_1$ and ending with $u_n$.
  The transducer $\Tt$ interprets $\#$ as $\leftend$ (resp.\ $\rightend$) when
  it is reached from the right (resp.\ left).  The simulation by $\Tt$ reading
  $\#$ of a transition of $M_f$ moving to the right of $\rightend$ is as in case
  \eqref{item:Cauchy}, except that $\Tt$ goes to the initial state of $M_f$.
  
  The 2NUFT $\Aa'$ is then obtained as the composition of $\mathcal{D}$ with the
  2DFT $\Tt$.  Finally, a 2DFT $\Aa$ equivalent to the 2NUFT $\Aa'$ is
  constructed.
  Likewise, $\Bb$ is obtained using the composition of $\mathcal{D}$ with a 2DFT
  $\Tt'$ that runs $M_f$ on each factor $u_i$ from right to left.

  \medskip\noindent\eqref{item:twoplus} \textbf{2-chained Kleene-plus.}
  As in case \eqref{item:Kleene}, we construct the 1NUFT $\mathcal{D}$ which takes
  as input $w$ and outputs $u_1\#u_2\#\cdots\#u_n$ iff there is an
  unambiguous decomposition of $w$ as $u_1u_2 \cdots u_n$, with each $u_i\in K$.
  We then construct a 2DFT $\mathcal{D'}$ that takes as input words of the form
  $u_1\#u_2\#\cdots\#u_n$ with each $u_i\in\Sigma^*$ and produces
  $u_1u_2\#u_2u_3\#\cdots\#u_{n-1}u_n$.
  The 2NUFT $\Aa'$ is then obtained as the composition of $\mathcal{D}'$ with
  the 2DFT $\Tt$ constructed for case \eqref{item:Kleene}.
  Finally, a 2DFT $\Aa$ equivalent to the 2NUFT $\Aa'$ is constructed.
  The output produced by $\Aa$ is thus $\sem{M_f}(u_1u_2)\cdot
  \sem{M_f}(u_2u_3)\cdots \sem{M_f}(u_{n-1}u_n)$.
  We proceed similarly for $\Bb$.
\end{proof}

\subsection{Unambiguous forest factorization}

In Section~\ref{sec:2DFT-RTE}, we prove that, given a 2DFT $\Aa$, we can
obtain an \RTE $C$ such that $\sem{\Aa}=\sem{C}$.  We use the fact that any
$w\in\Sigma^*$ in the domain of $\Aa$ can be factorized unambiguously into a
good rational expression.  The unambiguous factorization of words in $\Sigma^*$
guides the construction of the combinator expression for $\sem{\Aa}(w)$ over
$\Gamma$ in an inductive way. 

For rational expressions over $\Sigma$ we will use the following syntax:
$$
F ::= \emptyset \mid \varepsilon \mid a \mid F\cup F \mid F\cdot F \mid F^+
$$
where $a\in\Sigma$. For reasons that will be clear below, we prefer to use the 
Kleene-plus instead of the Kleene-star, hence we also add $\varepsilon$ 
explicitely in the syntax. An expression is said to be $\varepsilon$-free if it 
does not use $\varepsilon$.

Let $(S,\cdot,\unitS)$ be a \emph{finite} monoid and $\varphi\colon\Sigma^*\to S$ be a
morphism.
We say that a rational expression $F$ is $\varphi$-\emph{good} (or simply
\emph{good} when $\varphi$ is clear from the context) when
\begin{enumerate}[nosep]
  \item the rational expression $F$ is unambiguous,
  
  \item for each subexpression $E$ of $F$ we have $\varphi(\Lang{E})=\{s_E\}$ is
  a singleton set,

  \item for each subexpression $E^+$ of $F$ we have $s_E\cdot s_E=s_E$ is an
  idempotent.
\end{enumerate}
Notice that $\emptyset$ cannot be used in a good expression
since it does not satisfy the second condition.

\begin{theorem}[Unambiguous Forest Factorization \cite{GastinKrishna-UFF}]\label{thm:U-forest}
  For each $s\in S$, there is an $\varepsilon$-free \emph{good} rational
  expression $F_s$ such that
  $\Lang{F_s}=\varphi^{-1}(s)\setminus\{\varepsilon\}\subseteq\Sigma^+$.
  Therefore, $G=\varepsilon\cup\bigcup_{s\in S}F_s$ is an unambiguous rational
  expression over $\Sigma$ such that $\Lang{G}=\Sigma^*$.
\end{theorem}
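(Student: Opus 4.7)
The plan is to prove a slightly stronger statement by induction on the cardinality of an auxiliary subsemigroup. Specifically, I would show: for every subsemigroup $T\subseteq S$ and every $t\in T$, there is an $\varepsilon$-free $\varphi$-good rational expression $F_{T,t}$ whose language is exactly the set of non-empty words $w\in\Sigma^{+}$ with $\varphi(w)=t$ and such that every non-empty factor of $w$ maps under $\varphi$ into $T$. Taking $T=S$ yields the $F_s$ required by the theorem, and the final disjoint union $G=\varepsilon\cup\bigcup_{s\in S}F_s$ is then automatically unambiguous because the preimages $\varphi^{-1}(s)\setminus\{\varepsilon\}$ partition $\Sigma^{+}$.

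For the base case $|T|=1$, the unique element $e\in T$ must be idempotent (since $T$ is a subsemigroup), and the target language is $(\bigcup_{a\in\Sigma,\;\varphi(a)=e}a)^{+}$, which is good: the inner union has image $\{e\}$, and the Kleene-plus is applied to a subexpression whose image is the idempotent $e$.

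For the inductive step with $|T|\geq 2$ and a fixed $t\in T$, I would split into two cases using the $\cJ$-order on $T$. In the first case, every word $w$ with image $t$ and all factors in $T$ admits a canonical factorisation $w=uv$ into two non-empty pieces whose images lie in proper subsemigroups $T_1,T_2\subsetneq T$; this always happens when $t$ is not $\cJ$-minimal in $T$, or when some letter used in $w$ has image outside the minimal $\cJ$-class. I would then build $F_{T,t}$ as a disjoint union $\bigcup_{t_1 t_2 = t}F_{T_1,t_1}\cdot F_{T_2,t_2}$ ranging over the finitely many splittings, applying the induction hypothesis to the strictly smaller $T_1,T_2$. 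In the remaining case, every word of image $t$ stays within the minimal $\cJ$-class of $T$; a standard Green's relations argument then produces an idempotent $e$ in that class together with a factorisation of every such word into factors whose image is exactly $e$, and a Kleene-plus over the good expression for $e$ (obtained recursively from a strictly smaller subsemigroup) yields $F_{T,t}$ with Kleene-plus applied only to an idempotent subexpression, as required by goodness.

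The main obstacle is enforcing unambiguity at every step. For the concatenation case I would canonise the split point by always choosing, say, the shortest prefix whose image equals a prescribed value $t_1\in T_1$, and then verify that distinct target pairs $(t_1,t_2)$ with $t_1 t_2=t$ yield disjoint sub-languages so that the outer union is unambiguous. For the iteration case, the delicate point is that a Simon-style iteration over an idempotent is intrinsically ambiguous (an iterated factor can often be recut), so uniqueness has to be recovered by iterating \emph{primitive} factors whose image generates a fixed $\cH$-class of $e$, exploiting that the group-like structure inside an $\cH$-class forces the factorisation to be essentially unique. Threading these canonical choices through the nested induction while simultaneously maintaining that every subexpression $E$ satisfies $\varphi(\Lang{E})$ is a singleton and that every Kleene-plus is applied to an idempotent subexpression is the real technical burden of the argument.
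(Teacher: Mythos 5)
The paper itself contains no proof of this statement: Theorem~\ref{thm:U-forest} is imported from \cite{GastinKrishna-UFF}, with only the remark that its proof follows the lines of recent proofs of Simon's factorization forest theorem. So there is no in-paper argument to compare against, and I can only assess your sketch on its own terms. Your general toolkit (induction via Green's relations, idempotents licensing Kleene-plus, canonical split points for unambiguity) is the right family of ideas, but the specific induction you set up does not go through.

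The gap is in the inductive step. Your invariant requires $F_{T,t}$ to describe the words $w$ with $\varphi(w)=t$ \emph{all of whose non-empty factors} map into $T$, and Case~1 asks to split $w=uv$ so that the two pieces satisfy this invariant for \emph{proper} subsemigroups $T_1,T_2\subsetneq T$. This fails already for tiny monoids. Take $\Sigma=\{a\}$, $\varphi(a)=x$, and $T=\{x,0\}$ with $x^2=x0=0x=00=0$. For $t=0$ the target language is $\{a^n : n\geq 2\}$; here $t$ is $\cJ$-minimal but the letter $a$ has image $x$ outside the minimal $\cJ$-class, so by your case split these words fall into Case~1. Yet the only proper subsemigroup of $T$ is $\{0\}$, which contains no letter image, so no non-empty factor of $a^n$ has all its factors mapping into a proper subsemigroup and no factorization of the required shape exists. (A good expression does exist, e.g.\ $(aa)^{+}\cup a(aa)^{+}$, but your induction cannot reach it.) The root cause is that ``all factors of $u$ map into $T_1$'' is far stronger than ``$\varphi(u)\in T_1$'': a factor of $w$ typically generates the same subsemigroup as $w$ itself, so the measure $|T|$ does not decrease. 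The known proofs instead induct on the $\cJ$-class structure of $S$, splitting $w$ at the canonical positions where the prefix image drops below a fixed $\cJ$-class and handling the part confined to one $\cJ$-class via the $\cR$/$\cL$ grid. Finally, your unambiguity argument for the iteration (``primitive factors generating an $\cH$-class; the group structure forces uniqueness'') is asserted rather than established; this is exactly the delta between Simon's theorem and its unambiguous refinement, and it comes from fixing canonical cut points determined by the word, not from the group structure of an $\cH$-class alone. As it stands the proposal has a genuine gap and does not constitute a proof.
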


Theorem~\ref{thm:U-forest} can be seen as an unambiguous version of Imre 
Simon's forest factorization theorem \cite{Simon_1990}. Its proof, which
can be found in \cite{GastinKrishna-UFF}, follows the same lines of the recent 
proofs of Simon's theorem, see e.g.\ \cite{Colcombet_2010,ColcombetFactForest}.

In the rest of the section, we assume Theorem~\ref{thm:U-forest}, and use it in
obtaining an \RTE corresponding to $\Aa$.  For the purposes of this paper, we
work with the transition monoid of the two-way transducer.

\subsection{Transition monoid of 2NFAs}\label{sec:TrM-2NFA}

Consider a 2-way possibly non-deterministic automaton (2NFA) $\A$.  Let $\TrMon$
be the transition monoid of $\A$ which is obtained by
quotienting the free monoid $(\Sigma^*,\cdot,\varepsilon)$ by a congruence which
equate words behaving alike in the underlying automaton. 
In a one way automaton, the canonical morphism $\TrMorph\colon\Sigma^*\to\TrMon$
is such that $\TrMorph(w)$ consists of the set of pairs $(p,q)$ such that there
is a run
from state $p$ to state $q$ reading $w$.  In the case of two-way automaton, we
also consider the starting side (left/right) and ending side (left/right) of the
reading head while going from state $p$ to $q$.  Hence, an element of $\TrMon$
is a set $X$ of tuples $(p,d,q)$ with $p,q\in Q$ states of $\A$ and
$d\in\{\leftright,\leftleft,\rightright,\rightleft\}$ a direction amongst
``left-left'' ($\leftleft$), ``left-right'' ($\leftright$),
``right-left''($\rightleft$) and ``right-right''($\rightright$).

In the case of two-way automata, the canonical morphism
$\TrMorph\colon\Sigma^*\to\TrMon$ is such that $\TrMorph(w)$ is the set of
triples $(p,d,q)$ which are compatible with $w$.  For instance,
$(p,\leftright,q)\in\TrMorph(w)$ iff $\A$ has a run starting in state $p$ on the
left of $w$ and which exits $w$ on its right and in state $q$.  Likewise,
$(p,\rightright,q)\in\TrMorph(w)$ iff $\A$ has a run starting in state $p$ on
the right of $w$ and which exits $w$ on its right and in state $q$.  The
explanation is similar for other directions.  It is well-known that $\TrMon$ is
a monoid and that $\TrMorph$ is a morphism.

Consider the 2DFT $\Aa$ on the left of Figure~\ref{fig:intro} and its underlying
input 2DFA $\Bb$.  In the transition monoid of $\Bb$, we have
$\TrMorph(abb)=\{(q_1,\leftleft,q_5),(q_1,\rightright,q_2),
(q_2,\leftleft,q_4),(q_2,\rightleft,q_5),
(q_3,\leftleft,q_4),(q_3,\rightleft,q_5),
(q_4,\leftleft,q_4),(q_4,\rightright,q_1),
(q_5,\leftleft,q_5),(q_5,\rightright,q_6),
(q_6,\leftright,q_2),(q_6,\rightright,q_1)\}$.

Let $(p,d,q)\in\TrMorph(w)$.  If $w=a \in \Sigma$, then we know that reading $a$
in state $p$, $\A$ may move in direction $d$ and enter state $q$.  If
$w=w_1\cdot w_2$ for $w_1, w_2 \in \Sigma^+$, then we can possibly decompose
$(p,d,q)$ into several ``steps'' depending on the behaviour of $\A$ on $w$
starting in state $p$.  As an example, see Figure~\ref{steps}, where we
decompose $(p, \leftright, q)\in\TrMorph(w)$.  We show only those elements of
$\TrMorph(w_1)$ and $\TrMorph(w_2)$ which help in the decomposition; the
pictorial depiction is visually intuitive.

\begin{figure}[h]
  \centerline{\includegraphics[scale=0.28]{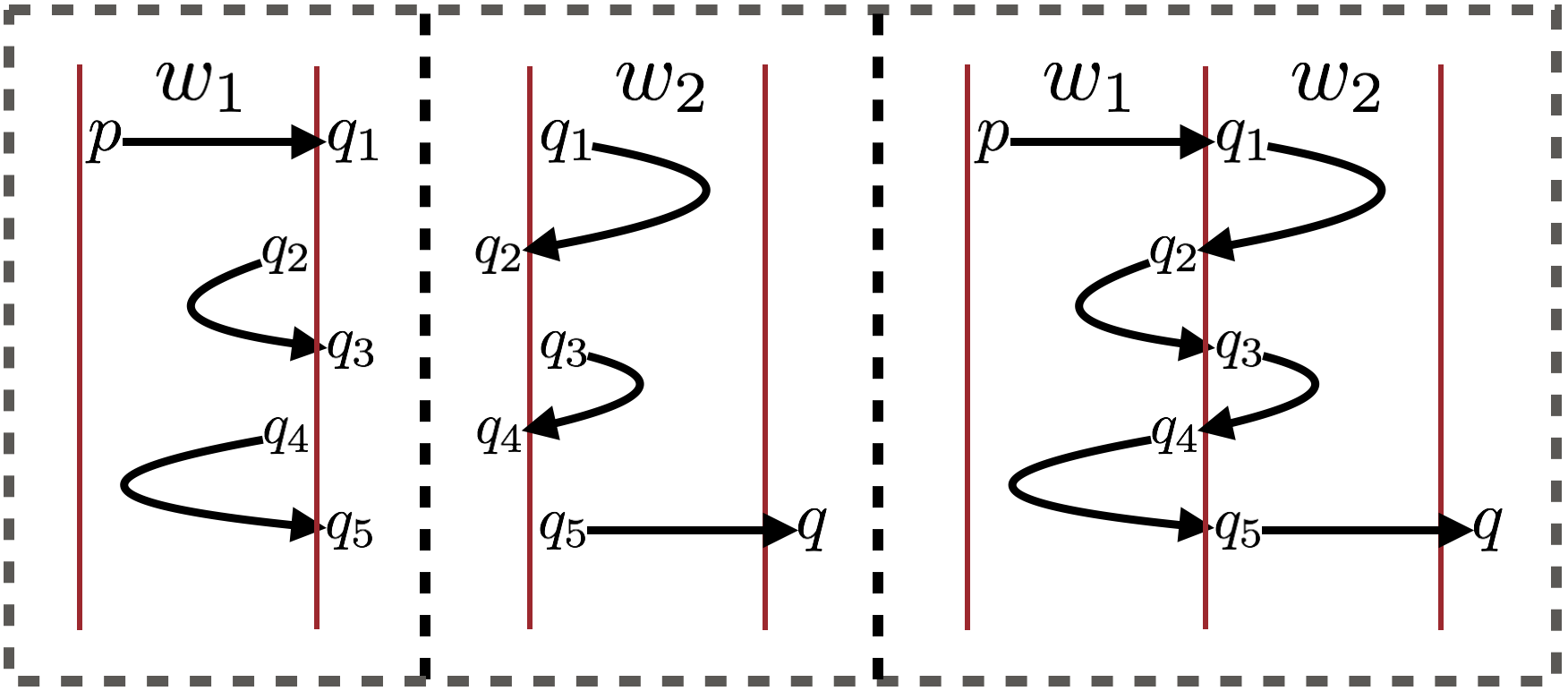}}
  \caption{The first and second pictures are illustrations of subsets of
  $\TrMorph(w_1)$ and $\TrMorph(w_2)$ respectively.  $(p, \leftright, q_1),
  (q_2, \rightright, q_3), (q_4, \rightright, q_5) \in \TrMorph(w_1)$ while
  $(q_1, \leftleft, q_2), (q_3, \leftleft, q_4), (q_5, \leftright, q) \in
  \TrMorph(w_2)$.  The third picture shows that $(p, \leftright, q) \in
  \TrMorph(w_1\cdot w_2)$: $(p, \leftright, q)$ consists of ``steps''
  $(p,\leftright,q_1), (q_1,\leftleft,q_2), (q_2,\rightright,q_3),
  (q_3,\leftleft,q_4), (q_4,\rightright,q_5), (q_5,\leftright,q)$ alternately
  from $\TrMorph(w_1)$ and $\TrMorph(w_2)$.}
  \label{steps}
\end{figure}

\begin{example}
  Let $\Sigma=\{a,b\}$ and let $\Aa$ be the following 1DFT:
  \raisebox{-4mm}{\includegraphics[page=2,scale=1]{gpicture-pics.pdf}}\,. \\
  Let $\TrMon$ be the transition monoid of $\Aa$ and let
  $\TrMorph\colon\Sigma^*\to\TrMon$ be the canonical morphism.  The expression
  $F=a^+(ba)^+$ is not $\TrMorph$-good: one of the reasons why $F$ is not
  $\TrMorph$-good is that the subexpression $a^+$ is such that $\TrMorph(a)$ is
  not an idempotent; the same is true for the subexpression $(ba)^+$.  The
  expression $F'=aba \cup aaba \cup a(aa)^+ba \cup a(baba)^+ \cup
  a(aa)^+(baba)^+$ is not $\TrMorph$-good, even though each of the expressions
  $aba, aaba, a(aa)^+ba, a(baba)^+$ and $a(aa)^+(baba)^+$ are $\TrMorph$-good.
  $F'$ is not $\TrMorph$-good since $\TrMorph(\Lang{F'})$ is not a singleton.
  The expression $F''=aba \cup (aa)^{+}\cup a(aa)^+ba $ is $\TrMorph$-good.
  \label{eg:tm}
\end{example}

\subsection{2DFT to RTE}\label{sec:2DFT-RTE}

Consider a deterministic and complete 2-way transducer $\A$.  Let $\TrMon$ be
the transition monoid of the underlying input automaton.  We can apply the
unambiguous factorization theorem to the morphism
$\TrMorph\colon\Sigma^*\to\TrMon$ in order to obtain, for each $s \in \TrMon$,
an $\varepsilon$-free good rational expression $F_s$ for
$\TrMorph^{-1}(s)\setminus\{\varepsilon\}$.  We use the unambiguous expression
$G=\varepsilon\cup\bigcup_{s\in\TrMon}F_s$ as a \emph{guide} when constructing
\RTEs corresponding to the 2DFT $\A$.

\begin{lemma}\label{lem:C_E}
  Let $F$ be an $\varepsilon$-free $\TrMorph$-good rational expression and let
  $\TrMorph(F)=s_F$ be the corresponding element of the transition monoid
  $\TrMon$ of $\A$.  We can construct a map $C_F\colon s_F\to\RTE$ such that
  for each step $x=(p,d,q)\in s_F$ the following invariants hold:
  \begin{enumerate}[nosep,label=($\mathsf{I}_{\arabic*}$),ref=$\mathsf{I}_{\arabic*}$]
    \item\label{Inv1} $\dom{C_F(x)}=\Lang{F}$,
    
    \item\label{Inv2} for each $u\in\Lang{F}$, $\sem{C_F(x)}(u)$ is the output
    produced by $\A$ when running step $x$ on $u$ (i.e., running $\A$ on $u$
    from $p$ to $q$ following direction $d$).
  \end{enumerate}
\end{lemma}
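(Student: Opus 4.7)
The plan is to proceed by structural induction on the $\varepsilon$-free good rational expression $F$. Since $F$ is good and $\varepsilon$-free, there are four cases: $F=a$ with $a\in\Sigma$, $F=F_1\cup F_2$, $F=F_1\cdot F_2$, and $F=E^+$ with $s_E\cdot s_E=s_E$ idempotent. The base and union cases are straightforward; the concatenation case requires decomposing the step $x$ into alternating substeps on the two factors; the Kleene-plus case is the hard part, where the 2-chained Kleene-plus combinators and the idempotency of $s_E$ come into play.

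For $F=a$, any step $x=(p,d,q)\in\TrMorph(a)$ is realized by the single transition $\delta(p,a)=(q,\gamma,\pm 1)$, and I set $C_a(x)=\Ifthenelse{a}{\gamma}{\bot}$. For $F=F_1\cup F_2$, goodness forces $s_{F_1}=s_{F_2}=s_F$ and unambiguity gives $\Lang{F_1}\cap\Lang{F_2}=\emptyset$, so I set $C_F(x)=\Ifthenelse{\Lang{F_1}}{C_{F_1}(x)}{C_{F_2}(x)}$. Invariants \ref{Inv1} and \ref{Inv2} are immediate in both cases.

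For the concatenation $F=F_1\cdot F_2$, the key observation is that a step $x\in s_{F_1}\cdot s_{F_2}$ unfolds, on any input $u_1 u_2$ with $u_j\in\Lang{F_j}$, into a canonical alternating sequence of substeps $\gamma_1,\ldots,\gamma_m$ (each lying in $s_{F_1}$ or $s_{F_2}$); by determinism of $\A$ this sequence depends only on $s_{F_1}$, $s_{F_2}$, and $x$, not on the particular words chosen. For each $\gamma_i$ on the $F_j$-side I build an auxiliary \RTE $h_i$ with $\dom{h_i}=\Lang{F}$ by forming the unambiguous Cauchy product of $C_{F_j}(\gamma_i)$ with the domain-check $\Ifthenelse{\Lang{F_{3-j}}}{\varepsilon}{\bot}$ on the other factor; unambiguity of $F_1\cdot F_2$ guarantees $h_i$ is well defined on $\Lang{F}$ and evaluates on $u_1 u_2$ to exactly the output that $\A$ produces during the substep $\gamma_i$. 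Setting $C_F(x)=h_1\odot h_2\odot\cdots\odot h_m$ concatenates these partial outputs in temporal order, recovering the full output of $\A$ on step $x$.

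The Kleene-plus case $F=E^+$ is the main obstacle. Idempotency $s_E\cdot s_E=s_E$ gives $s_E^n=s_E$ for all $n\geq 1$, so the substep pattern produced by a step $x\in s_E$ on $w=u_1\cdots u_n\in\Lang{E}^+$ is structurally uniform in $n$. My plan is to assign each substep of the run of $x$ on $w$ to an adjacent pair $(u_i,u_{i+1})$ so that every internal pair contributes via the \emph{same} auxiliary function $g$, built from $C_E$ using the already-handled concatenation case on $E\cdot E$ (hence $\dom{g}=\Lang{E\cdot E}$). The combinator $\twoplus{\Lang{E}}{g}$ (or $\rltwoplus{\Lang{E}}{g}$ when the corresponding pass through $w$ proceeds right-to-left) then yields the aggregated internal contribution $g(u_1 u_2)\cdot g(u_2 u_3)\cdots g(u_{n-1}u_n)$. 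Prefix/suffix contributions from $u_1$ and $u_n$ are handled by ordinary Cauchy (or reverse Cauchy) products with suitable instances of $C_E$, and when $x$ makes several passes through $w$ the pass-components are combined by Hadamard product. The delicate point, where idempotency is indispensable, is to check that such a pair-assignment of substeps exists without double counting and that every internal pair indeed yields the same $g$; this rests on a careful combinatorial analysis of the substep decomposition of $x$ across a product of identical idempotent factors.
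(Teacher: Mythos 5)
Your proposal follows essentially the same route as the paper's proof: structural induction on the good expression, with the concatenation case handled via the canonical alternating substep decomposition (your per-substep Hadamard factors padded with domain checks are an equivalent repackaging of the paper's pairwise Cauchy products $C_{E_1}(x_{2k-1})\lrdot C_{E_2}(x_{2k})$), and the Kleene-plus case handled by Hadamard-combining the contribution of the first/last factor with a 2-chained Kleene-plus of a ``boundary cycle'' function of domain $\Lang{E}^2$, exactly as in the paper's $C_E(x)=\big(C_F(x)\lrdot(\Ifthenelse{F^*}{\varepsilon}\bot)\big)\odot\twoplus{F}{C'}$. The one imprecision is that the boundary cycle is not literally a step of $s_E\cdot s_E$ on $E\cdot E$ (it starts at the seam between the two factors rather than at an end of the word), so it cannot be obtained by directly invoking the concatenation case as you suggest; but the same pairing technique constructs it, and the uniformity across internal pairs that you defer as the ``delicate point'' is precisely what idempotency delivers --- the run exits every prefix $u_1\cdots u_\ell$ on the right in the same state, so the same cycle of substeps recurs at every seam, as the paper verifies.
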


\begin{proof}
  The proof is by structural induction on the rational expression.  For each
  subexpression $E$ of $F$ we let $\TrMorph(E)=s_E$ be the corresponding element
  of the transition monoid $\TrMon$ of $\A$.  We start with atomic regular
  expressions.  Since $F$ is $\varepsilon$-free and $\emptyset$-free, we do not
  need to consider $E=\varepsilon$ or $E=\emptyset$.

\begin{description}
  \item[atomic] Assume that $E=a\in\Sigma$ is an atomic subexpression.  Since
  the 2DFT $\A$ is deterministic and complete, for each state $p\in Q$ we have
  \begin{itemize}[nosep]
    \item either $\delta(p,a)=(q,\gamma,1)$ and we let
    $C_a((p,\leftright,q))=C_a((p,\rightright,q))=\Ifthenelse{a}{\gamma}\bot$,
  
    \item or $\delta(p,a)=(q,\gamma,-1)$ and we let
    $C_a((p,\leftleft,q))=C_a((p,\rightleft,q))=\Ifthenelse{a}{\gamma}\bot$.
  \end{itemize}
  Clearly, invariants \eqref{Inv1} and \eqref{Inv2} hold for all 
  $x\in\TrMorph(a)=s_E$.
  
  \item[Union] Assume that $E=E_1\cup E_2$.  Since the expression is good, we
  deduce that $s_E=s_{E_1}=s_{E_2}$.  For each $x\in s_E$ we define
  $C_E(x)=\Ifthenelse{E_1}{C_{E_1}(x)}{C_{E_2}(x)}$.  
  Since $E$ is unambiguous
  we have $\Lang{E_1}\cap\Lang{E_2}=\emptyset$.  Using \eqref{Inv1} for $E_1$
  and $E_2$, we deduce that
  \begin{align*}
    \dom{C_E(x)} & = (\Lang{E_1}\cap\dom{C_{E_1}(x)}) \cup 
    (\dom{C_{E_2}(x)}\setminus\Lang{E_1}) 
     = \Lang{E_1}\cup\Lang{E_2}=\Lang{E} \,.
  \end{align*}
  Therefore, \eqref{Inv1} holds for $E$.  Now, for each $u\in\Lang{E}$, either
  $u\in\Lang{E_1}$ and $\sem{C_E(x)}(u)=\sem{C_{E_1}(x)}(u)$ or $u\in\Lang{E_2}$
  and $\sem{C_E(x)}(u)=\sem{C_{E_2}(x)}(u)$.  In both cases, applying
  \eqref{Inv2} for $E_1$ or $E_2$, we deduce that $\sem{C_E(x)}(u)$ is the
  output produced by $\A$ when running step $x$ on $u$.

  \item[concatenation] Assume that $E=E_1\cdot E_2$ is a concatenation.  
  Since the expression is good, we deduce that
  $s_E=s_{E_1}\cdot s_{E_2}$. Let $x\in s_E$.
  \begin{itemize}[nosep]
    \item If $x=(p,\leftright,q)$ then, by definition of the product in the 
    transition monoid $\TrMon$, there is a unique sequence of steps 
    $x_1=(p,\leftright,q_1)$,
    $x_2=(q_1,\leftleft,q_2)$,
    $x_3=(q_2,\rightright,q_3)$,
    $x_4=(q_3,\leftleft,q_4)$, \ldots,
    $x_i=(q_{i-1},\rightright,q_i)$,
    $x_{i+1}=(q_i,\leftright,q)$ 
    with $i\geq1$, 
    $x_1,x_3,\ldots,x_i\in s_{E_1}$ and
    $x_2,x_4,\ldots,x_{i+1}\in s_{E_2}$ (see Figure \ref{steps}). 
    We define
    $$
    C_E(x)=(C_{E_1}(x_1)\lrdot C_{E_2}(x_2))\odot
    (C_{E_1}(x_3)\lrdot C_{E_2}(x_4))\odot
    \cdots\odot(C_{E_1}(x_i)\lrdot C_{E_2}(x_{i+1})) \,.
    $$
    Notice that when $i=1$ we simply have $C_E(x)=C_{E_1}(x_1)\lrdot 
    C_{E_2}(x_2)$ with $x_2=(q_1,\leftright,q)$.
    
    The concatenation $\Lang{E}=\Lang{E_1}\cdot\Lang{E_2}$ is unambiguous.
    Therefore, for all $y\in s_{E_1}$ and $z\in s_{E_2}$, using \eqref{Inv1} for
    $E_1$ and $E_2$, we obtain $\dom{C_{E_1}(y)\lrdot C_{E_2}(z)}=\Lang{E}$.  We
    deduce that $\dom{C_E(x)}=\Lang{E}$ and \eqref{Inv1} holds for $E$.

    Now, let $u\in\Lang{E}$ and let $u=u_1u_2$ be its unique factorization with
    $u_1\in\Lang{E_1}$ and $u_2\in\Lang{E_2}$.  The step $x=(p,\leftright,q)$
    performed by $\A$ on $u$ is actually the concatenation of steps $x_1$ on
    $u_1$, followed by $x_2$ on $u_2$, followed by $x_3$ on $u_1$, followed by
    $x_4$ on $u_2$, \ldots, until $x_{i+1}$ on $u_2$.  Using \eqref{Inv2} for
    $E_1$ and $E_2$, we deduce that the output produced by $\A$ while running
    step $x$ on $u$ is
    $$
    \sem{C_{E_1}(x_1)}(u_1) \cdot \sem{C_{E_2}(x_2)}(u_2) 
    \cdots
    \sem{C_{E_1}(x_i)}(u_1) \cdot \sem{C_{E_2}(x_{i+1})}(u_2)
    = \sem{C_E(x)}(u) \,.
    $$
    \item If $x=(p,\leftleft,q)$ then, following the definition of the product
    in the transition monoid $\TrMon$, we distinguish two cases.  
    
    Either $x\in s_{E_1}$ and we let
    $C_E(x)=C_{E_1}(x)\lrdot(\Ifthenelse{E_2}{\varepsilon}{\bot})$.  Since
    $\dom{\Ifthenelse{E_2}{\varepsilon}{\bot}}=\Lang{E_2}$, we deduce as above that
    $\dom{C_E(x)}=\Lang{E}$.  Moreover, let $u\in\Lang{E}$ and $u=u_1u_2$ be its
    unique factorization with $u_1\in\Lang{E_1}$ and $u_2\in\Lang{E_2}$.  The
    step $x=(p,\leftleft,q)$ performed by $\A$ on $u$ reduces to the step $x$ on
    $u_1$.  Using \eqref{Inv2} for $E_1$, we deduce that the output produced by
    $\A$ while making step $x$ on $u$ is
    $\sem{C_{E_1}(x)}(u_1)=\sem{C_E(x)}(u)$.

    \begin{figure}[h]
      \begin{center}
        \includegraphics[scale=0.28]{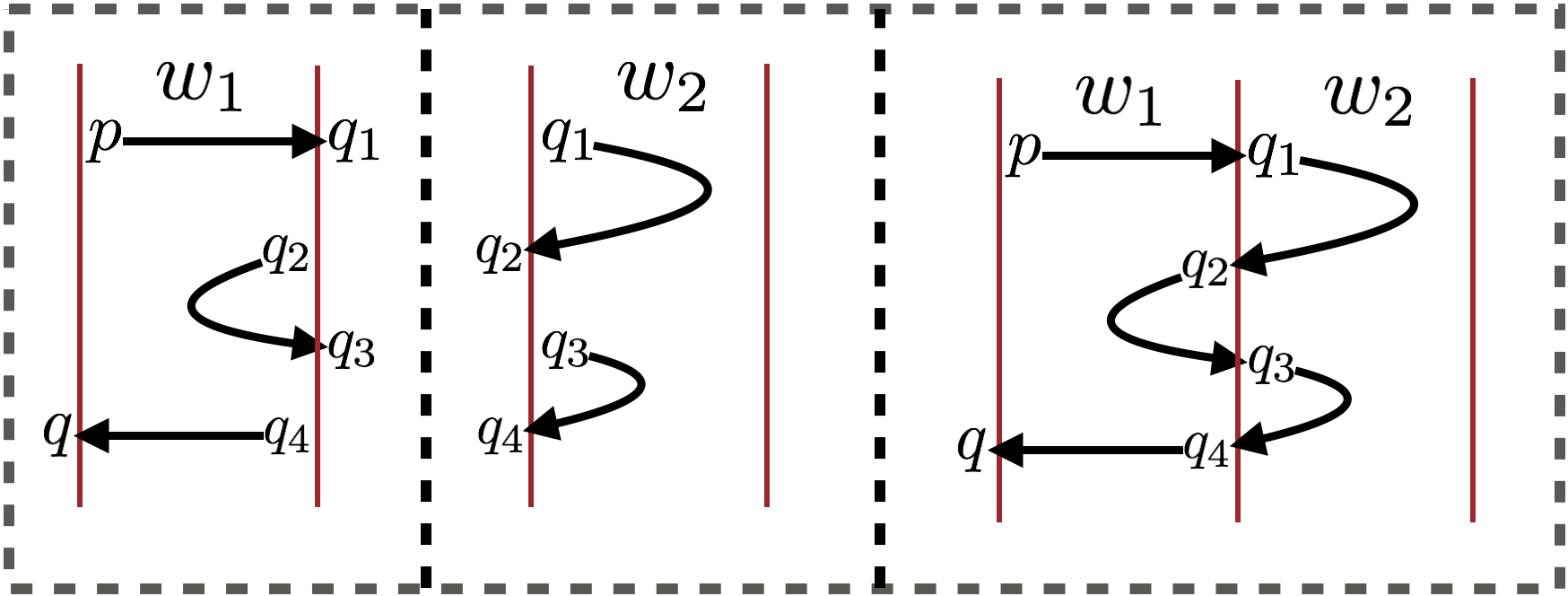}
      \end{center}
      \caption{Let $w=w_1\cdot w_2\in\Lang{E}$ with $w_1\in\Lang{E_1}$,
      $w_2\in\Lang{E_2}$.  We have $(p,\leftright,q_1), (q_2, \rightright, q_3),
      (q_4, \rightleft,q) \in \TrMorph(w_1)$ and $(q_1, \leftleft, q_2),
      (q_3,\leftleft, q_4) \in \TrMorph(w_2)$.  Then $(p,\leftleft,q)$ is
      composed of ``steps'' $(p,\leftright,q_1), (q_1,\leftleft,q_2),
      (q_2,\rightright,q_3), (q_3,\leftleft,q_4), (q_4,\rightleft,q)$
      alternately from $\TrMorph(w_1)$ and $\TrMorph(w_2)$.}
      \label{steps-1}
    \end{figure}

    Or there is a unique sequence of steps (see Figure~\ref{steps-1})
    $x_1=(p,\leftright,q_1)$,
    $x_2=(q_1,\leftleft,q_2)$,
    $x_3=(q_2,\rightright,q_3)$,
    $x_4=(q_3,\leftleft,q_4)$, \ldots,
    $x_i=(q_{i-1},\rightleft,q)$ with $i\geq3$, 
    $x_1,x_3,\ldots,x_i\in s_{E_1}$ and
    $x_2,x_4,\ldots,x_{i-1}\in s_{E_2}$. We define
    $$
    C_E(x)=(C_{E_1}(x_1)\lrdot C_{E_2}(x_2))\odot
    (C_{E_1}(x_3)\lrdot C_{E_2}(x_4))\odot
    \cdots\odot(C_{E_1}(x_i)\lrdot(\Ifthenelse{E_2}{\varepsilon}{\bot}) ) \,.
    $$
    As for the first item, we can prove that invariants \eqref{Inv1} and 
    \eqref{Inv2} are satisfied for $E$.
  
    \item The cases $x=(p,\rightleft,q)$ or $x=(p,\rightright,q)$ are handled
    symmetrically.
    For instance, when $x=(p,\rightleft,q)$, the unique
    sequence of steps is
    $x_1=(p,\rightleft,q_1)$,
    $x_2=(q_1,\rightright,q_2)$,
    $x_3=(q_2,\leftleft,q_3)$,
    $x_4=(q_3,\rightright,q_4)$, \ldots,
    $x_i=(q_{i-1},\leftleft,q_i)$,
    $x_{i+1}=(q_i,\rightleft,q)$ 
    with $i\geq1$, 
    $x_1,x_3,\ldots,x_i\in s_{E_2}$ and
    $x_2,x_4,\ldots,x_{i+1}\in s_{E_1}$ (see Figure \ref{steps-2}).  We define
    \begin{align*}
      C_E(x)=((\Ifthenelse{E_1}{\varepsilon}\bot)\lrdot C_{E_2}(x_1))
      &\odot (C_{E_1}(x_2)\lrdot C_{E_2}(x_3))\odot\cdots\odot{}
      \\
      (C_{E_1}(x_{i-1})\lrdot C_{E_2}(x_{i}))
      &\odot (C_{E_1}(x_{i+1})\lrdot(\Ifthenelse{E_2}{\varepsilon}{\bot})) \,.
    \end{align*}
  \end{itemize}
  
  \begin{figure}[h]
    \begin{center}
      \includegraphics[scale=0.28]{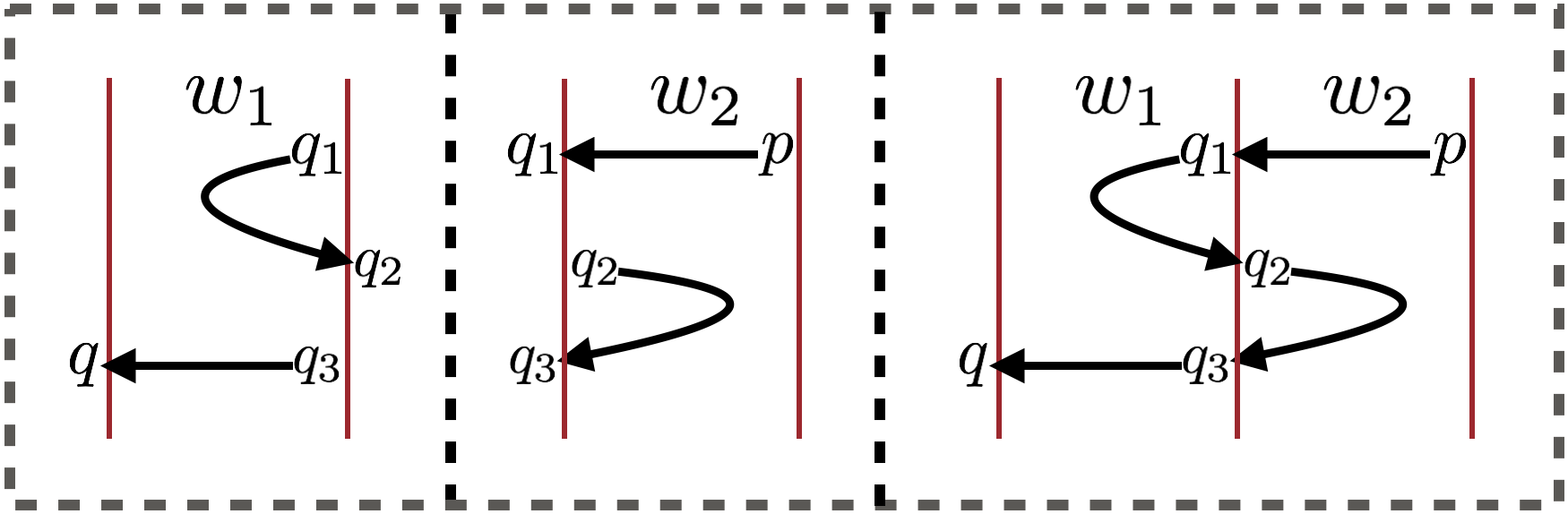}
    \end{center}
    \caption{Let $w=w_1\cdot w_2\in\Lang{E}$ with $w_1\in\Lang{E_1}$,
    $w_2\in\Lang{E_2}$.  We have $(p,\rightleft,q_1), (q_2, \leftleft, q_3) \in
    \TrMorph(w_2)$ and $(q_1, \rightright, q_2), (q_3,\rightleft, q) \in
    \TrMorph(w_1)$.  Then $(p,\rightleft,q)\in\TrMorph(w)$ is composed of
    ``steps'' $(p,\rightleft,q_1), (q_1, \rightright, q_2), (q_2, \leftleft,
    q_3), (q_3,\rightleft,q)$ alternately from $\TrMorph(w_2)$ and
    $\TrMorph(w_1)$.}
    \label{steps-2}
  \end{figure}
  
  \item[Kleene-plus] Assume that $E=F^+$.  Since the expression is good, we
  deduce that $s_E=s_F=s$ is an idempotent of the transition monoid $\TrMon$.
  Let $x\in s$.
  \begin{itemize}[nosep]
    \item If $x=(p,\leftleft,q)$.  Since $F^+$ is unambiguous, a word
    $u\in\Lang{F^+}$ admits a unique factorization $u=u_1u_2\cdots u_n$ with
    $n\geq1$ and $u_i\in\Lang{F}$.  Now, $\TrMorph(u_1)=s_E$ and since
    $x=(p,\leftleft,q)\in s_E$ the unique run $\rho$ of $\A$ starting in state
    $p$ on the left of $u_1$ exits on the left in state $q$.  Therefore,
    the unique run of $\A$ starting in state $p$ on the left of $u=u_1u_2\cdots
    u_n$ only visits $u_1$ and is actually $\rho$ itself. Therefore, we set 
    $C_E(x)=C_F(x)\lrdot(\Ifthenelse{F^*}{\varepsilon}\bot)$ and we can easily check 
    that (\ref{Inv1}--\ref{Inv2}) are satisfied.
    
    \item Similarly for $x=(p,\rightright,q)$ 
    we set $C_E(x)=(\Ifthenelse{F^*}{\varepsilon}\bot)\lrdot C_F(x)$.
    
    \item If $x=(p,\leftright,q)$.  Recall that $s$ is an idempotent, hence
    $x\in s^2$.  We distinguish two cases.
    
    Either $y=(q,\leftright,q)\in s$ and we set
    $C_E(x)=\Ifthenelse{F}{C_F(x)}{\big(C_F(x)\lrdot\lrplus{(C_F(y))}\big)}$.
    
    \begin{figure}[t]
      \begin{center}
        \includegraphics[scale=0.29]{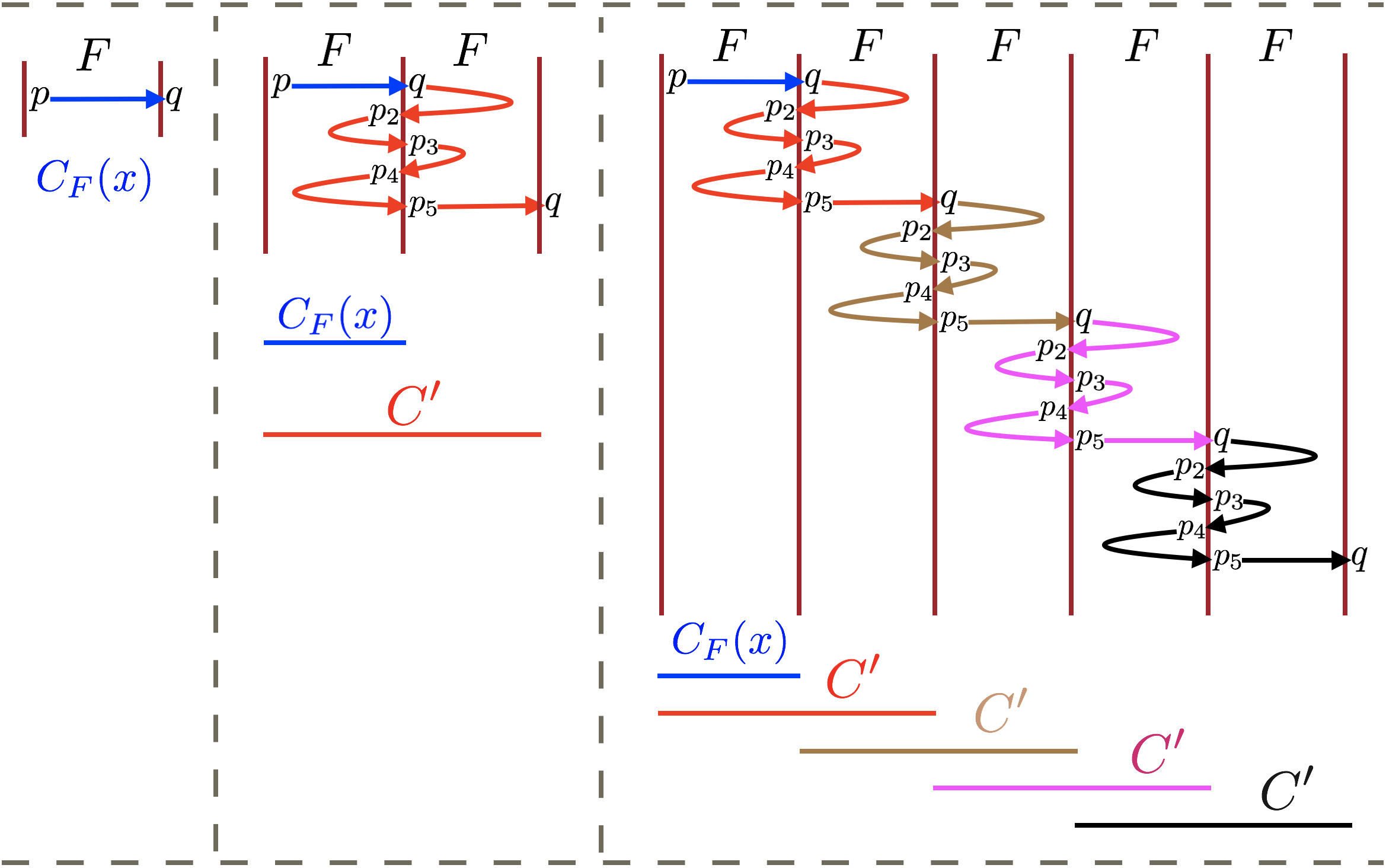}
      \end{center}
      \caption{In the Kleene-plus $E=F^+$, a step $x=(p,\leftright,q)\in s_E$ on
      some $u=u_1u_2\cdots u_n$ with $u_\ell\in\Lang{F}$ is obtained by composing
      the following steps in $s_F$: $x_1=x$, $x_2=(q,\leftleft,p_2)$,
      $x_3=(p_2,\rightright,p_3)$, $x_4=(p_3,\leftleft,p_4)$,
      $x_5=(p_4,\rightright,p_5)$, $x_6=(p_5,\leftright,q)$.}
      \label{kleene}
    \end{figure}

    Or there exists a unique sequence of steps in $s$:
    $x_1=x$,
    $x_2=(q,\leftleft,p_2)$,
    $x_3=(p_2,\rightright,p_3)$,
    $x_4=(p_3,\leftleft,p_4)$, \ldots,
    $x_i=(p_{i-1},\rightright,p_i)$,
    $x_{i+1}=(p_i,\leftright,q)$ 
    with $i\geq3$ (see Figure~\ref{kleene}). We define
    \begin{align*}
      C_E(x) &= 
      \big(C_F(x)\lrdot(\Ifthenelse{F^*}{\varepsilon}\bot)\big)
      \odot\twoplus{F}{C'}
      \\
      C' &= 
      \big((\Ifthenelse{F}{\varepsilon}\bot)\lrdot C_F(x_2)\big)
      \odot(C_F(x_3)\lrdot C_F(x_4))\odot
      \cdots\odot(C_F(x_i)\lrdot C_F(x_{i+1})) 
    \end{align*}
    Since the expression is good, the Kleene-plus $E=F^+$ is unambiguous.  We
    have $\dom{C_F(x_j)}=\Lang{F}$ for $1\leq j\leq i+1$ by \eqref{Inv1}.  Also
    $\dom{\Ifthenelse{F^*}{\varepsilon}\bot}=\Lang{F^*}$.  Since $F^+$ is unambiguous,
    the concatenation $F\cdot F^*$ is also unambiguous and we get
    $\dom{C_F(x)\lrdot(\Ifthenelse{F^*}{\varepsilon}\bot)}=\Lang{F}\cdot\Lang{F^*}=\Lang{E}$.
    Also, the product $F\cdot F$ is unambiguous and we deduce that
    $\dom{C_F(x_j)\lrdot C_F(x_{j+1})}=\Lang{F}^2$ for $1\leq j\leq i$ and
    $\dom{(\Ifthenelse{F}{\varepsilon}\bot)\lrdot C_F(x_2)}=\Lang{F}^2$.  Therefore,
    $\dom{C'}=\Lang{F}^2$ and using once again that $F^+$ is unambiguous, we
    deduce that $\dom{\twoplus{F}{C'}}=\Lang{F^+}=\Lang{E}$.  We deduce that
    $\dom{C_E(x)}=\Lang{E}$ and \eqref{Inv1} holds for $E$.
    
    Let now $u\in\Lang{F^+}=\dom{C_E(x)}$.  We have to show that the output
    $\gamma\in\D$ produced by $\A$ when running step $x$ on $u$ is
    $\sem{C_E(x)}(u)$.  There is a unique factorization $u=u_1u_2\cdots u_n$ with
    $n\geq1$ and $u_\ell\in\Lang{F}$ for $1\leq\ell\leq n$.  
    
    Assume first that $n=1$ (see Figure~\ref{kleene} left).  By definition, we
    have $\sem{\twoplus{F}{C'}}(u)=\varepsilon$ and
    $\sem{C_F(x)\lrdot(\Ifthenelse{F^*}{\varepsilon}\bot)}(u)=\sem{C_F(x)}(u)$ which,
    by induction, is the output $\gamma$ produced by $\A$ running step $x$ on
    $u$.  Therefore, $\sem{C_E(x)}(u)=\gamma\cdot\varepsilon=\gamma$.
    
    Assume now that $n\geq2$ (see Figure~\ref{kleene} middle for $n=2$ and right
    for $n=5$).  For $1\leq\ell\leq n$ and $1\leq j\leq i+1$, we denote
    $\gamma^\ell_j=\sem{C_F(x_j)}(u_\ell)$ the output produced by $\A$ when
    running step $x_j$ on $u_\ell$.  We can check (see Figure~\ref{kleene}) that
    the output $\gamma$ produced by $\A$ when running $x$ on $u=u_1u_2\cdots
    u_n$ is
    $$
    \gamma=\gamma_1^1
    (\gamma_2^{2}\gamma_3^{1}\gamma_4^{2}\cdots\gamma_i^{1}\gamma_{i+1}^{2})
    (\gamma_2^{3}\gamma_3^{2}\gamma_4^{3}\cdots\gamma_i^{2}\gamma_{i+1}^{3})
    \cdots
    (\gamma_2^{n}\gamma_3^{n-1}\gamma_4^{n}\cdots\gamma_i^{n-1}\gamma_{i+1}^{n}) \,.
    $$
    
    We have
    $\sem{C'}(u_\ell u_{\ell+1})=\gamma_2^{\ell+1}\gamma_3^\ell\gamma_4^{\ell+1}
    \cdots\gamma_i^{\ell}\gamma_{i+1}^{\ell+1}$
    for $1\leq\ell<n$. Therefore, we obtain
    $\gamma=\gamma_1^1\sem{C'}(u_1u_2)\sem{C'}(u_2u_3)\cdots\sem{C'}(u_{n-1}u_n)$.
    Since $\sem{C_F(x)\lrdot(\Ifthenelse{F^*}{\varepsilon}\bot)}(u)=\gamma_1^1$ we 
    deduce that $\gamma=\sem{C_E(x)}(u)$.
    
    \item The case of $x=(p,\rightleft,q)$ can be handled similarly. 
    \qedhere
  \end{itemize}
\end{description}
\end{proof}

Lemma~\ref{lem:C_E} is the main ingredient in the construction of an \RTE 
equivalent to a 2DFT.

\begin{proof}[Proof of Theorem~\ref{thm:2DFT=RTE}\eqref{item:2DFTtoRTE}]
  First, we let $C_\varepsilon=\sem{\A}(\varepsilon)\in\Gamma^*\cup\{\bot\}$.
  Then, we will define for each $s\in\TrMon$, an \RTE $C_s$ such that
  $\dom{C_s}=\dom{\A}\cap(\TrMorph^{-1}(s)\setminus\{\varepsilon\})$ and
  $\sem{C_s}(u)=\sem{\A}(u)$ for all $u\in\dom{C_s}$. Assuming an arbitrary 
  enumeration $s_1,s_2,\ldots,s_m$ of $\TrMon$, we define the final \RTE as
  $$
  C_\A=
  \Ifthenelse{\varepsilon}{C_\varepsilon}{
  (\Ifthenelse{\TrMorph^{-1}(s_1)}{C_{s_1}}{
  (\Ifthenelse{\TrMorph^{-1}(s_2)}{C_{s_2}}{\cdots
  (\Ifthenelse{\TrMorph^{-1}(s_{m-1})}{C_{s_{m-1}}}{C_{s_m}})
  })})} \,.
  $$
  It remains to define the \RTE $C_s$ for $s\in\TrMon$.
  We first define \RTEs for steps in the 2DFT $\A$ on some input $\leftend u$
  with $u\in\TrMorph^{-1}(s)\setminus\{\varepsilon\}$.  Such a step must exit on
  the right since there are no transitions of $\A$ going left when reading
  $\leftend$.  So either the step $(q_0,\leftright,q)$ starts on the left in the
  initial state $q_0$ and exits on the right in some state $q$.  Or the step
  $(p,\rightright,q)$ starts on the right in state $p$ and exits on the right in
  state $q$. See Figure~\ref{fig:leftend}.
  
  Let $s_\leftend$ be the set of steps $(p,\leftright,q),(p,\rightright,q)$ such
  that there is a transition $\delta(p,\leftend)=(q,\gamma_p,+1)$ in $\A$.
  From the initial state $q_0$ of $\A$, there is a unique sequence of steps
  $x_1=(q_0,\leftright,q_1)$,
  $x_2=(q_1,\leftleft,q_2)$,
  $x_3=(q_2,\rightright,q_3)$,
  $x_4=(q_3,\leftleft,q_4)$, \ldots,
  $x_i=(q_{i-1},\rightright,q_i)$,
  $x_{i+1}=(q_i,\leftright,q)$ 
  with $i\geq1$, 
  $x_1,x_3,\ldots,x_i\in s_\leftend$ and
  $x_2,x_4,\ldots,x_{i+1}\in s$ (see Figure~\ref{fig:leftend} left). 
  We define
  $$
  C_{\leftend F_s}((q_0,\leftright,q))=\gamma_{q_0}\odot C_{F_s}(x_2)\odot
  \gamma_{q_2}\odot C_{F_s}(x_4)\odot
  \cdots\odot\gamma_{q_{i-1}}\odot C_{F_s}(x_{i+1}) \,.
  $$
  Notice that when $i=1$ we simply have $C_{\leftend
  F_s}((q_0,\leftright,q))=\gamma_{q_0}\odot C_{F_s}((q_1,\leftright,q))$.
  Since $\dom{C_{F_s}(x_i)}=\Lang{F_s}=\TrMorph^{-1}(s)\setminus\{\varepsilon\}$
  for $i=2,4,\ldots,i+1$, we deduce that $\dom{C_{\leftend
  F_s}((q_0,\leftright,q))} =\TrMorph^{-1}(s)\setminus\{\varepsilon\}$.
  Moreover, for each $u\in\TrMorph^{-1}(s)\setminus\{\varepsilon\}$, the output
  produced by $\A$ performing step $(q_0,\leftright,q)$ on $\leftend u$ is
  $\sem{C_{\leftend F_s}((q_0,\leftright,q))}(u)$.
   
  \begin{figure}[h]
    \begin{center}
      \includegraphics[scale=0.18]{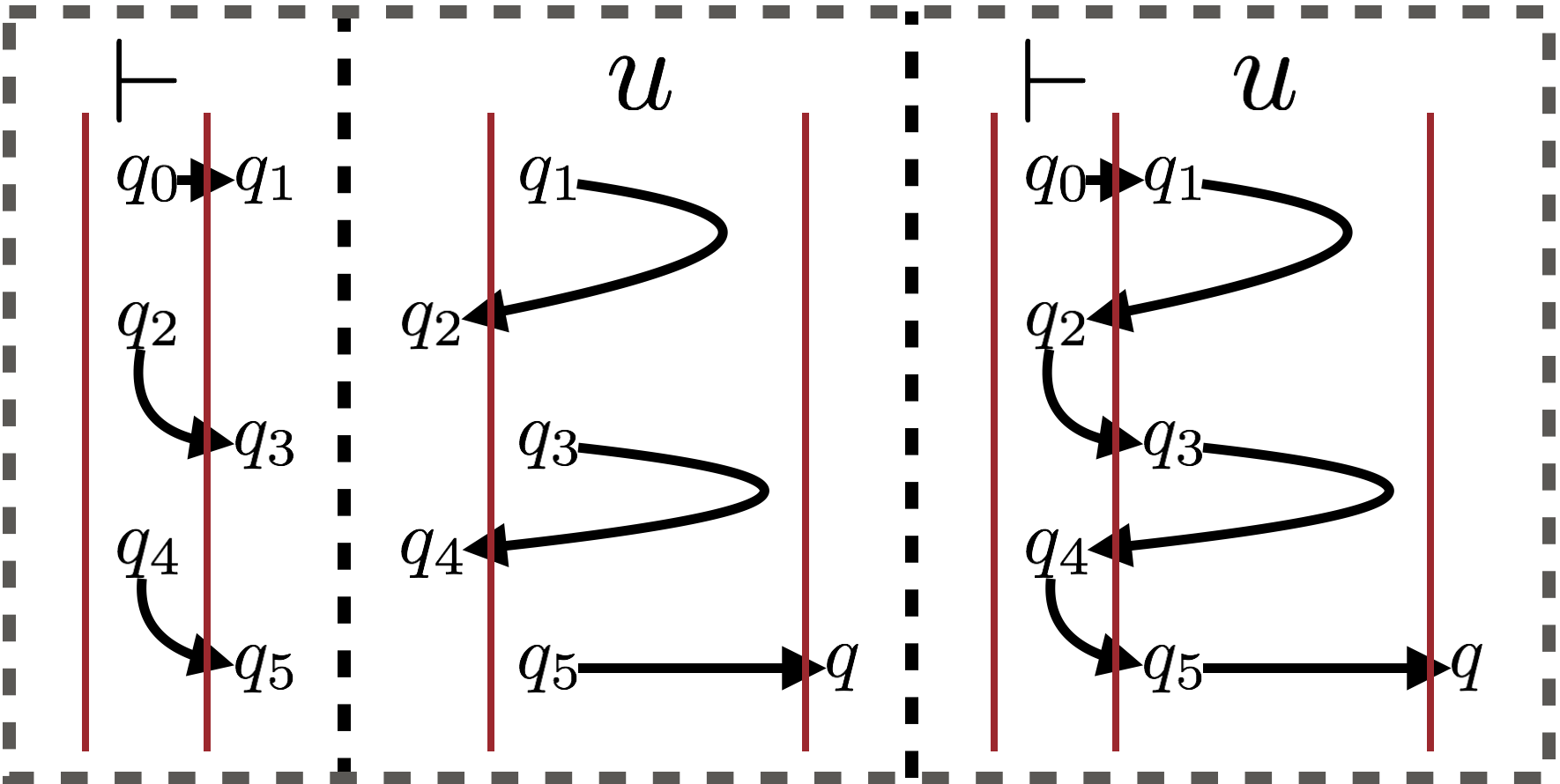}
      \hfil
      \includegraphics[scale=0.18]{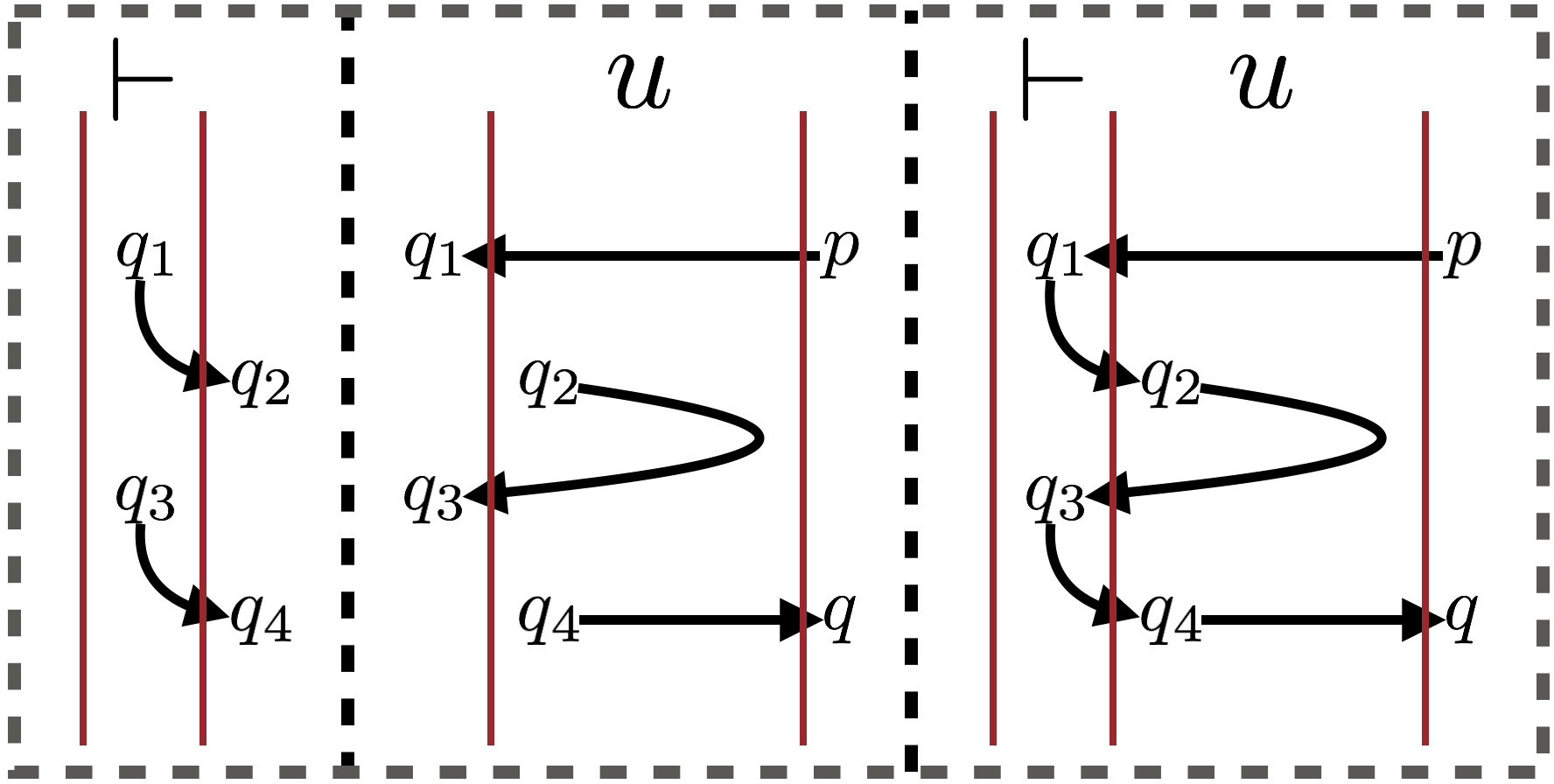}
    \end{center}
    \caption{}
    \label{fig:leftend}
  \end{figure}
  
  Let $p$ be a state of $\A$. Either there is a step $(p,\rightright,q)\in s$ 
  and we let $C_{\leftend F_s}((p,\rightright,q))=C_{F_s}((p,\rightright,q))$.
  Or, there is a unique sequence of steps
  $x_1=(p,\rightleft,q_1)$,
  $x_2=(q_1,\rightright,q_2)$,
  $x_3=(q_2,\leftleft,q_3)$,
  $x_4=(q_3,\rightright,q_4)$, \ldots,
  $x_i=(q_{i-1},\leftright,q)$
  with $i\geq3$, 
  $x_1,x_3,\ldots,x_i\in s$ and
  $x_2,x_4,\ldots,x_{i-1}\in s_\leftend$ (see Figure~\ref{fig:leftend} right).  
  We define
  $$
  C_{\leftend F_s}((p,\rightright,q))=C_{F_s}(x_1)\odot\gamma_{q_1}\odot 
  C_{F_s}(x_3)\odot\gamma_{q_3}\odot 
  \cdots\odot\gamma_{q_{i-2}}\odot C_{F_s}(x_{i}) \,.
  $$
  As above, we have $\dom{C_{\leftend F_s}((p,\rightright,q))}
  =\TrMorph^{-1}(s)\setminus\{\varepsilon\}$.  Moreover, for each
  $u\in\TrMorph^{-1}(s)\setminus\{\varepsilon\}$, the output produced by $\A$
  performing step $(p,\rightright,q)$ on $\leftend u$ is $\sem{C_{\leftend
  F_s}((p,\rightright,q))}(u)$.

  \begin{figure}[h]
    \begin{center}
      \includegraphics[scale=0.17]{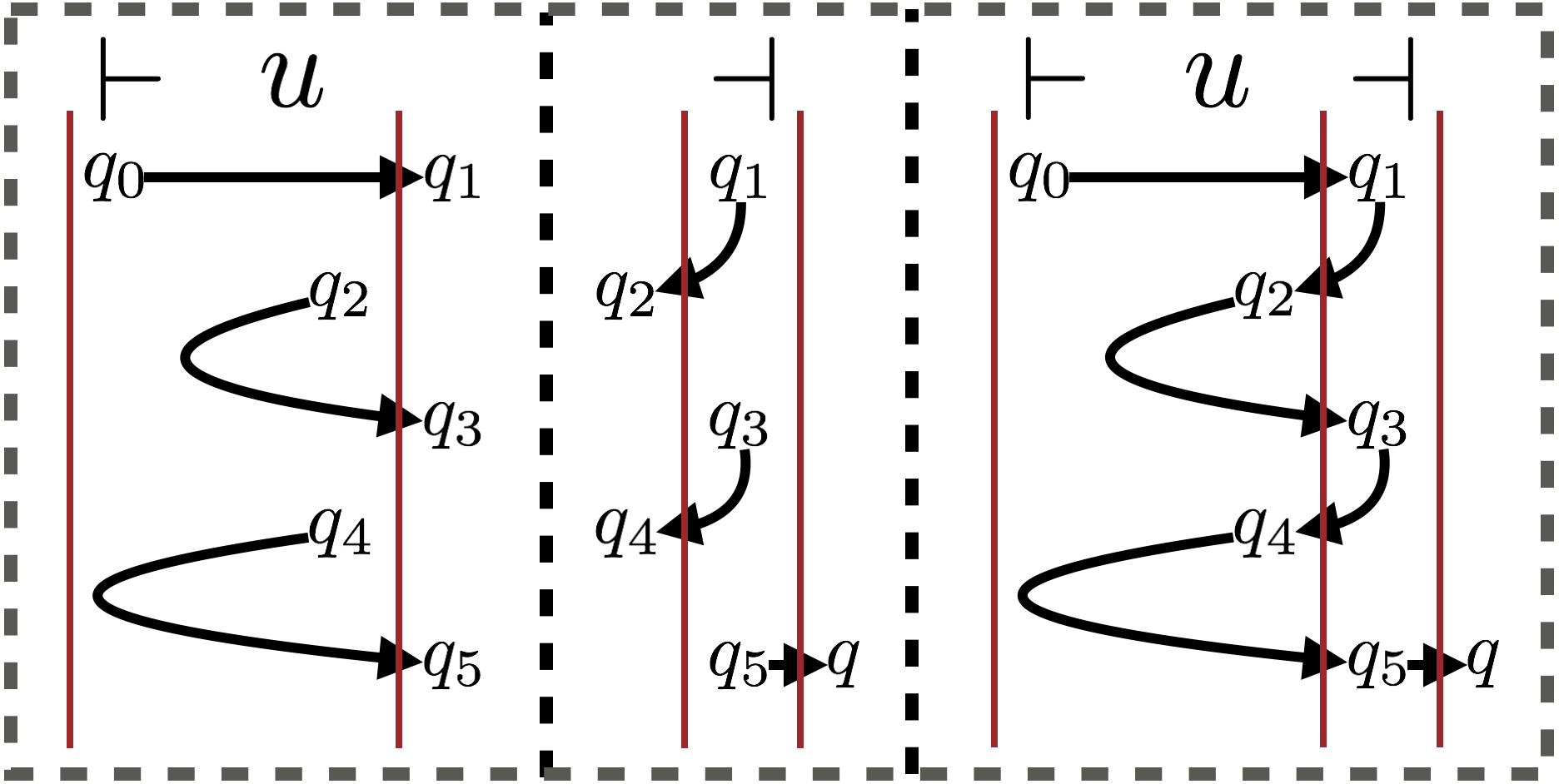}
    \end{center}
    \caption{}
    \label{steps-fin-3}
  \end{figure}
  
  Similarly, let $s_\rightend$ be the set of steps $(p,\leftleft,q)$ such that
  there is a transition $\delta(p,\rightend)=(q,\gamma_p,-1)$ in $\A$ or
  $(p,\leftright,q)$ such that there is a transition
  $\delta(p,\rightend)=(q,\gamma_p,+1)$ in $\A$. From the initial state $q_0$ 
  of $\A$, there is a unique sequence of steps 
  $x_1=(q_0,\leftright,q_1)$,
  $x_2=(q_1,\leftleft,q_2)$,
  $x_3=(q_2,\rightright,q_3)$,
  $x_4=(q_3,\leftleft,q_4)$, \ldots,
  $x_i=(q_{i-1},\rightright,q_i)$,
  $x_{i+1}=(q_i,\leftright,q)$ 
  with $i\geq1$, 
  and $x_1,x_3,\ldots,x_i$ are steps where $C_{\leftend F_s}$ is defined and
  $x_2,x_4,\ldots,x_{i+1}\in s_\rightend$ (see Figure~\ref{steps-fin-3}). 
  
  Notice that this sequence of steps corresponds to an accepting run iff $q\in
  F$ is an accepting state of $\A$.  Therefore, either $q\notin F$ and
  $\dom{\A}\cap(\TrMorph^{-1}(s)\setminus\{\varepsilon\})=\emptyset$ so we set
  $C_s=\bot$.  Or, $q\in F$ and
  $\TrMorph^{-1}(s)\setminus\{\varepsilon\}\subseteq\dom{\A}$ so we define
  $$
  C_s=C_{\leftend F_s}(x_1)\odot\gamma_{q_1}\odot 
  C_{\leftend F_s}(x_3)\odot\gamma_{q_3}\odot 
  \cdots\odot C_{\leftend F_s}(x_{i})\odot\gamma_{q_i} \,.
  $$
  We have $\dom{C_s}=\TrMorph^{-1}(s)\setminus\{\varepsilon\}$ and for all 
  $u\in\dom{C_s}$ we have $\sem{C_s}(u)=\sem{\A}(u)$.
\end{proof}

\section{Infinite Words}
\label{sec:inf}

In this section, we start looking at regular functions on infinite words.  As in
Section~\ref{sec:fin}, we restrict our attention to two way transducers as the
model for computing regular functions.  Given a finite alphabet $\Sigma$, let
$\Sigma^{\omega}$ denote the set of infinite words over $\Sigma$, and let
$\Sigma^{\infty}=\Sigma^* \cup \Sigma^{\omega}$ be the set of all finite or
infinite words over $\Sigma$.

\subsection{Two-way transducers over $\omega$-words (\twoDMTla)}
Let $\Sigma$ be a finite input alphabet and let $\Gamma$ be a finite output alphabet.  
Let $\leftend$ be a left end marker symbol not in $\Sigma$ and let 
$\Sigma_\leftend=\Sigma\cup\{\leftend\}$. 
The input word is presented as $\leftend w$ where $w \in \Sigma^{\omega}$. 

Let $\Rr$ be a finite set of \emph{look-ahead} $\omega$-regular languages.
For the $\omega$-regular languages in $\Rr$, we may use any finite descriptions
such as $\omega$-regular expressions or automata.  Below, we will use
\emph{complete unambiguous \buchi automata} (\CUBA) \cite{Carton:2003rt}, also
called \emph{backward deterministic \buchi automata} \cite{Wilke-fsttcs17}).  A
deterministic two-way transducer (\twoDMTla) over $\omega$-words is given by
$\A=(Q,\Sigma,\Gamma,q_0,\delta,\mathcal{F},\Rr)$, where $Q$ is a finite set of
states, $q_0 \in Q$ is a unique initial state, and $\delta\colon Q \times
\Sigma_\leftend \times \Rr \mapsto Q \times \Gamma^*\times \{-1,+1\}$ is the
partial transition function.  We request that for every pair $(q,a)\in
Q\times\Sigma_\leftend$, the subset $\Rr(q,a)$ of languages $R\in\Rr$ such that
$\delta(q,a,R)$ is defined forms a partition of $\Sigma^\omega$.  This ensures
that $\A$ is complete and behaves deterministically.  The set $\mathcal{F}
\subseteq 2^Q$ specifies the Muller acceptance condition.  As in the finite
case, the reading head cannot move left while on $\leftend$.  A configuration is
represented by $w'qaw''$ where $w'a\in\leftend\Sigma^*$, $w''\in\Sigma^\omega$
and $q$ is the current state, scanning letter $a$.  From configuration $w'qaw''$,
let $R$ be the unique $\omega$-regular language in $\Rr(q,a)$ such that $w''\in
R$, the automaton outputs $\gamma$ and moves to
$$
\begin{cases}
  w'aq'w'' & \text{if } \delta(q,a,R)=(q',\gamma,+1) \\
  w'_1q'baw'' & \text{if } \delta(q,a,R)=(q',\gamma,-1) \text{ and } w'=w'_1b 
  \,.
\end{cases}
$$
The output $\gamma \in \Gamma^*$ is appended at the end of the output produced
so far.  A run $\rho$ of $\A$ on $w\in\Sigma^{\omega}$ is a sequence of
transitions starting from the initial configuration $q_0\leftend w$ where the
reading head is on $\leftend$:
$$
q_0\leftend w \xrightarrow{\gamma_1} w'_1q_1w''_1 \xrightarrow{\gamma_2}
w'_2q_2w''_2 \xrightarrow{\gamma_3} w'_3q_3w''_3 \xrightarrow{\gamma_4} 
w'_4q_4w''_4 \cdots
$$
We say that $\rho$ reads the whole word $w$ if $\supr\{|w'_n| \mid n>0\}=\infty$.
The set of states visited by $\rho$ infinitely often is denoted
$\infi(\rho)\subseteq Q$.  The word $w$ is accepted by $\A$, i.e.,
$w\in\dom{\A}$ if $\rho$ reads the whole word $w$ and $\infi(\rho)\in\Ff$ is a
Muller set.  In this case, we let
$\sem{\A}(w)=\gamma_1\gamma_2\gamma_3\gamma_4\cdots$ be the output produced by
$\rho$.

The notation \twoDMTla signifies the use of the look-ahead (la) using the
$\omega$-regular languages in $\Rr$.  It must be noted that without look-ahead,
the expressive power of two-way transducers over infinite words is lesser than
regular transformations over infinite words \cite{lics12}.  A classical
example of this is given in Example \ref{eg-two-way-inf}, where the look-ahead
is necessary to obtain the required transformation.

\begin{example}\label{eg-two-way-inf}
  On the right of Figure~\ref{fig:intro} we have
  an \twoDMTla $\Aa'$ over $\Sigma=\{a,b,\#\}$ that defines
  the transformation $\sem{\Aa'}(u_1\#u_2\#\cdots \#u_n\#v)=u^R_1u_1\#u^R_2u_2\#
  \cdots \#u^R_nu_n\#v$ where $u_1, \ldots, u_n \in (a+b)^*$, $v \in
  (a+b)^{\omega}$ and $u^R$ denotes the reverse of $u$.  The Muller acceptance
  set is $\{\{q_5\}\}$.  From state $q_1$ reading $\leftend$, or state $q_4$ 
  reading $\#$, $\Aa'$ uses the look ahead
  partition $\Rr(q_1,\leftend)=\Rr(q_4,\#)=
  \{\Sigma^*\#\Sigma^\omega, (\Sigma\setminus \{\#\})^\omega\}$,
  which indicates the presence or absence of a $\#$ in the remaining suffix of
  the word being read.  For all other transitions, the look-ahead
  langage is $\Sigma^\omega$, hence it is omitted.  Also, to keep the picture 
  light, the automaton is not complete, i.e., we have omitted the transitions 
  going to a sink state.
  It can be seen that any maximal string $u$ between two consecutive occurrences
  of $\#$ is replaced with $u^Ru$; the infinite suffix over $\{a,b\}^{\omega}$
  is then reproduced as it is.
\end{example}

\begin{remark}
  Note that, an equivalent way to define \twoDMTla is using look-behind and
  look-ahead automata \cite{lics12} instead of $\omega$-regular languages in
  $\Rr$.  See Appendix \ref{app:two-way-eq} for a proof of equivalence.
\end{remark}

\subsection{$\omega$-Regular Transducer Expressions (\oRTE)}\label{sec:oRTE}

As in the case of finite words, we define regular transducer expressions for
infinite words.  Let $\Sigma$ and $\Gamma$ be finite input and output alphabets
and let $\bot$ stand for undefined.  We define the output domain as
$\D=\Gamma^{\infty} \cup \{\bot\}$ , with the usual concatenation of a finite
word on the left with a finite or infinite word on the right.  Again, $\bot$
acts as zero and the unit is the empty word $1_{\D}=\varepsilon$.

The syntax of \emph{$\omega$-Regular Transducer Expressions} (\oRTE) from
$\Sigma^\omega$ to $\D$ is defined by:
$$
C ::= 
\Ifthenelse{L}{C}{C}  \mid C \odot C \mid
E \lrdot C \mid E^\omega \mid \twoomega{K}{E}
$$
where $K\subseteq\Sigma^+$ ranges over \emph{regular} languages of \emph{finite
non-empty words}, $L\subseteq\Sigma^{\omega}$ ranges over $\omega$-\emph{regular}
languages of \emph{infinite words} and $E$ is an \RTE over finite words as
defined in Section~\ref{sec:RegExp}.
The semantics $\sem{E}\colon\Sigma^*\to\Gamma^*\cup\{\bot\}$ of the finitary
combinator expressions $E\in\RTE$ is unchanged (see Section~\ref{sec:RegExp}).
The semantics of an \oRTE $C$ is a function $\sem{C}\colon\Sigma^\omega\to\D$.
Given a regular language $K\subseteq\Sigma^+$, an $\omega$-regular language
$L\subseteq\Sigma^\omega$, and functions
$f\colon\Sigma^*\to\Gamma^*\cup\{\bot\}$, $g,h\colon\Sigma^{\omega}\to\D$,
we define
\begin{description}
  \item[If then else.]  We have 
  $\dom{\Ifthenelse{L}{g}{h}}=(\dom{g}\cap L)\cup(\dom{h}\setminus L)$.  
  
  Moreover, $(\Ifthenelse{L}{g}{h})(w)$ is defined
  as $g(w)$ for $w\in\dom{g}\cap L$, and $h(w)$ for $w\in\dom{h}\setminus L$.

  \item[Hadamard product.] We have $\dom{g\odot h}=g^{-1}(\Gamma^*)\cap\dom{h}$. 
  
  Moreover, $(g \odot h)(w)=g(w)\cdot h(w)$ for 
  $w\in\dom{g}\cap\dom{h}$ with $g(w)\in\Gamma^*$.
  
  \item[Unambiguous Cauchy product.] If $w\in\Sigma^\omega$ admits a unique 
  factorization $w=u\cdot v$ with $u\in\dom{f}$ and $v\in\dom{g}$ then we set 
  $(f\lrdot g)(w)=f(u)\cdot g(v)$. Otherwise, we set 
  $(f\lrdot g)(w)=\bot$.

  \item[Unambiguous $\omega$-iteration.]  If $w\in\Sigma^\omega$ admits a unique
  infinite factorization $w=u_1 u_2 u_3 \cdots$ with $u_i\in\dom{f}$
  for all $i\geq1$ then we set $f^\omega(w)=f(u_1)f(u_2)f(u_3)\cdots\in\Gamma^\infty$.
  Otherwise, we set $f^\omega(w)=\bot$.

  \item[Unambiguous 2-chained $\omega$-iteration.]  If $w\in\Sigma^\omega$
  admits a unique factorization $w=u_1 u_2 u_3 \cdots$ with $u_i\in K$ for all
  $i\geq1$ and if moreover $u_iu_{i+1}\in\dom{f}$ for all $i\geq1$ then we set
  $\twoomega{K}{f}(w)=f(u_1u_2)f(u_2u_3)f(u_3u_4)\cdots$.  Otherwise, we set
  $\twoomega{K}{f}(w)=\bot$.
\end{description}

\begin{remark}\label{rem:omega-iteration-redundant}
  Let $C_\varepsilon=(\Ifthenelse{\Sigma}{\varepsilon}\bot)^\omega$.
  We have $\dom{C_\varepsilon}=\Sigma^\omega$ and 
  $\sem{C_\varepsilon}(w)=\varepsilon$ for all $w\in\Sigma^\omega$.
  Now, for $\gamma\in\Gamma^+$, let 
  $C_\gamma=(\Ifthenelse{\Sigma}{\gamma}\bot)\lrdot C_\varepsilon$. 
  We have $\dom{C_\gamma}=\Sigma^\omega$ and 
  $\sem{C_\gamma}(w)=\gamma$ for all $w\in\Sigma^\omega$.
  Therefore, we can freely use constants $\gamma\in\Gamma^*$ when defining 
  \oRTEs.
\end{remark}

\begin{remark}
  We can express the $\omega$-iteration with the 2-chained $\omega$-iteration 
  as follows: \\
  $f^\omega =
  \twoomega{\dom{f}}{f\lrdot(\Ifthenelse{\dom{f}}{\varepsilon}\bot)}$.
  \label{derived}
\end{remark}

\begin{example}
  We now give the \oRTE for the transformation given in Example
  \ref{eg-two-way-inf}.

  \noindent
  Let $E_1=\Ifthenelse{a}{a}{(\Ifthenelse{b}{b}{(\Ifthenelse{\#}{\#}{\bot})})}$,
  $E_2=\Ifthenelse{a}{a}{(\Ifthenelse{b}{b}{\bot})}$ and
  $E_3=\Ifthenelse{a}{a}{(\Ifthenelse{b}{b}{(\Ifthenelse{\#}{\varepsilon}{\bot})})}$.
  Then $\dom{E_1}=\dom{E_3}=(a+b+\#)$ and $\dom{E_2}=(a+b)$.
  \\
  Let $E_4=\Ifthenelse{((a+b)^*\#)}{(\rlplus{E_3} \odot \lrplus{E_1})}{\bot}$.
  We have $\dom{E_4}=(a+b)^*\#$ and, for $u\in(a+b)^*$, $\sem{E_4}(u\#)=u^Ru\#$ 
  where $u^R$ denotes the reverse of $u$.
  Next, let $C_1={\lrplus{E_4}\lrdot E_2^{\omega}}$.
  Then, $\dom{C_1} = [(a+b)^*\#]^+(a+b)^{\omega}$, and
  $\sem{C_1}(u_1\#u_2\#\cdots u_n\#v)=u_1^Ru_1\#u_2^Ru_2\# \cdots \#u_n^Ru_n\#v$
  when $u_i \in (a+b)^*$ and $v \in (a+b)^{\omega}$. Finally, let
  $C=\Ifthenelse{(a+b)^{\omega}}{E_2^{\omega}}{C_1}$. We have 
  $\dom{C}=[(a+b)^*\#]^*(a+b)^{\omega}$ and $\sem{C}=\sem{\Aa'}$ where $\Aa'$ is 
  the transducer on the right of Figure~\ref{fig:intro}.
\label{omega-eg}
\end{example}

The main theorem connecting  \twoDMTla and \oRTE is as follows.
\begin{theorem}\label{thm:2WST=oRTE}
  \twoDMTla and \oRTEs define the same class of functions. More precisely,
  \begin{enumerate}
    \item\label{item:oRTEto2WST} given an \oRTE $C$, we can construct an \twoDMTla $\A$ such that 
    $\sem{\A}=\sem{C}$.

    \item\label{item:2WSTto-oRTE} given an \twoDMTla $\A$, we can construct an \oRTE C such that 
    $\sem{\A}=\sem{C}$,
  \end{enumerate}
\end{theorem}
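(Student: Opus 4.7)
}

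For direction~\eqref{item:oRTEto2WST}, I would proceed by structural induction on the \oRTE $C$, in the spirit of Lemma~\ref{lem:induction-RTE-to-2DFT}. The base cases reduce to the finitary construction: by Theorem~\ref{thm:2DFT=RTE}\eqref{item:RTEto2DFT}, every \RTE $E$ appearing inside $C$ is realisable by a 2DFT over finite words, which I would package as an ``\twoDMTla fragment'' that stops when it reads a specified end marker. The inductive cases follow closely the finite-word constructions, but several combinators require look-ahead:
\begin{itemize}
  \item For $\Ifthenelse{L}{g}{h}$ with $L\subseteq\Sigma^\omega$, build a \BDBA (equivalently a \CUBA) recognising $L$ and use it as a single look-ahead test on the initial transition, then delegate the rest of the run to the \twoDMTla for $g$ or $h$.
  \item For $E\lrdot C$, use look-ahead to identify the unique split $w=u\cdot v$ with $u\in\dom{E}$ and $v\in\dom{C}$. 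Concretely, associate with each position a look-ahead language ``the suffix starts with a word in $\dom{E}$ followed by a word in $\dom{C}$''; unambiguity of the decomposition guarantees that this uniquely marks the cut point. Once the split is identified, run the 2DFT for $E$ on $u$, then the \twoDMTla for $C$ on $v$, composing transducers as in case~\eqref{item:Cauchy} of Lemma~\ref{lem:induction-RTE-to-2DFT}.
  \item For $E^\omega$ and $\twoomega{K}{E}$, use look-ahead to detect the boundaries of the unique infinite factorisation, then iterate the 2DFT for $E$ (resp.\ a 2DFT computing $f(u_iu_{i+1})$) between consecutive boundaries, with a Muller set chosen so that acceptance holds iff infinitely many boundaries are crossed.
  \item The Hadamard product $g\odot h$ needs the extra check that $g(w)\in\Gamma^*$; I would first run (a copy of) the \twoDMTla for $g$ with an $\omega$-regular look-ahead characterising the subset of $\dom{g}$ on which the output is finite, emit $g(w)$, then reset and run the \twoDMTla for $h$.
\end{itemize}
Closure under composition with 1NUFTs and the equivalence between 2NUFT and 2DFT extend to \twoDMTla with look-ahead without surprises, so the induction goes through.

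For direction~\eqref{item:2WSTto-oRTE}, the proof mirrors Theorem~\ref{thm:2DFT=RTE}\eqref{item:2DFTtoRTE} but must handle the $\omega$-regular look-ahead and the infinite run. I would first introduce a transition monoid $\TrMon(\A)$ for an \twoDMTla: fix a \BDBA $\Bb$ recognising each look-ahead language in $\Rr$, take the product \BDBA whose states $r$ deterministically encode the relevant suffix information, and define $\TrMorph(w)$ on finite factors $w$ as the set of triples $(p,d,q)$ parametrised by an \emph{entry state} $r$ of $\Bb$, recording the two-way behaviour of $\A$ on $w$ assuming the suffix starts in $\Bb$-state $r$. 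Because $\Bb$ is backward deterministic, the look-ahead state at the left end of $w$ is a function of $w$ and of $r$, so $\TrMon$ is a well-defined finite monoid and $\TrMorph$ is a morphism. Applying Theorem~\ref{thm:U-forest} to $\TrMorph$ yields $\TrMorph$-good expressions $F_s$ for the finite $\equiv_{\TrMorph}$-classes.

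Next I would lift this to an unambiguous ``good'' $\omega$-regular expression for $\dom{\A}\subseteq\Sigma^\omega$. By a standard argument (Ramsey/Büchi), each accepting run of $\A$ on $w$ induces an infinite factorisation $w=u_0 u_1 u_2\cdots$ whose tail $u_1 u_2\cdots$ consists of factors all mapping to a common idempotent $e\in\TrMon$. Combining this with the unambiguous factorisation theorem gives, for every pair $(s,e)$ with $e$ idempotent and $se=s$, a good $\omega$-expression of the form $F_s\cdot(F_e)^\omega$ (with $(F_e)^\omega$ unambiguous because of idempotency and goodness of $F_e$), and $\dom{\A}$ is a finite disjoint union of such good $\omega$-expressions. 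The \oRTE will be built from the corresponding combinators $\lrdot$ and $\twoomega{\cdot}{\cdot}$, glued by $\Ifthenelse{}{}{}$ on the finitely many pairs $(s,e)$.

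Finally, I would define \oRTEs for the outputs along the structure $F_s\cdot(F_e)^\omega$. The finite prefix $F_s$ is handled exactly as in the proof of Theorem~\ref{thm:2DFT=RTE}\eqref{item:2DFTtoRTE}, yielding \RTEs $C_{F_s}(x)$ for each step $x$ in $s$. For the infinite tail, the key observation is that on a word $u_1 u_2 u_3\cdots$ with each $u_i\in\Lang{F_e}$ and $e$ idempotent, the run of $\A$ can be analysed ``two blocks at a time'': the output contribution of the boundary between $u_i$ and $u_{i+1}$, together with the traversals inside $u_i u_{i+1}$ that do not escape this window, is computed by an \RTE $C'(u_i u_{i+1})$ built exactly as in the Kleene-plus case of Lemma~\ref{lem:C_E} (the unique sequence of steps $x_1,\ldots,x_{i+1}$ within $s^2=s$). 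The full output on the infinite tail is then $\twoomega{\Lang{F_e}}{C'}$, prefixed and suffixed with the finite boundary terms computed by the \RTE case. The Hadamard product $\odot$ handles the finite pieces, and the $\Ifthenelse{}{}{}$ construct dispatches according to the (finitely many) pairs $(s,e)$; Muller acceptance is encoded in the finite selection, since the Muller set is determined by the states in $e$ visited during one idempotent block, and hence by $e$ alone.

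The main obstacle I expect is the rigorous setup of step~2 above: defining the transition monoid so that it simultaneously captures two-way behaviour and \BDBA-based look-ahead, and proving that Ramsey-style composition produces a \emph{finite} partition of $\dom{\A}$ into good $\omega$-expressions $F_s\cdot(F_e)^\omega$ with $se=s$, $e^2=e$, all of which unambiguous. Once that algebraic backbone is in place, the inductive construction of the \oRTE on the tail reuses the Kleene-plus case of Lemma~\ref{lem:C_E} almost verbatim, with $\lrplus{(\cdot)}$ replaced by $(\cdot)^\omega$ and $\twoplus{K}{\cdot}$ replaced by $\twoomega{K}{\cdot}$.
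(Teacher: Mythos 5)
Your overall strategy coincides with the paper's: direction~\eqref{item:oRTEto2WST} is the same structural induction, using look-ahead to branch on $L$, to locate the unique cut point of the Cauchy product, and to detect factor boundaries for the ($2$-chained) $\omega$-iteration; direction~\eqref{item:2WSTto-oRTE} follows the same pipeline (replace the look-ahead by a \BDBA, build a transition monoid parametrised by \BDBA states, factor the domain unambiguously as a union of $F\cdot G^\omega$ with $G$ mapping to an idempotent, and redo the Kleene-plus analysis on the tail with $\twoomega{\Lang{G}}{C'}$). However, two points in direction~\eqref{item:2WSTto-oRTE} are genuine gaps rather than routine details. The transition monoid you describe is too coarse. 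You claim that ``the Muller set is determined by the states in $e$ visited during one idempotent block, and hence by $e$ alone''; this is false for a monoid of triples $(p,d,q)$, since two factors can induce exactly the same steps while visiting different intermediate states. The paper therefore enriches steps to quadruples $(q,d,X,q')$ recording the set $X$ of states visited \emph{inside} the factor, and reads off the infinitely-visited set from the looping step of the tail. Likewise, your monoid records, for each \BDBA state $p$ at the right end of a factor, the induced state $r^p$ at the left end, but not whether the \BDBA run across the factor meets a B\"uchi-accepting state (the bit $b^p$ in the paper). That bit is indispensable: on a periodic tail all of whose factors map to the idempotent $e$ there may be several fixed points $p=r^p$, and only the one with $b^p=1$ corresponds to the unique final run of the backward-deterministic automaton, hence to the actual look-ahead behaviour of $\A$; picking the wrong $p$ yields the wrong output expression.

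The second gap is the source of the unambiguous $\omega$-decomposition. A Ramsey/B\"uchi argument gives, for each word, the \emph{existence} of some factorisation with an idempotent tail, but combining it with the finite-word Theorem~\ref{thm:U-forest} does not yield a single \emph{unambiguous} rational expression $\bigcup_k F_k\cdot G_k^\omega$ covering $\Sigma^\omega$ with good $F_k,G_k$ --- and unambiguity is exactly what the semantics of $\twoomega{\cdot}{\cdot}$ and the if-then-else dispatch require. You correctly identify this as the main obstacle, but the route you sketch does not close it; the paper instead invokes a dedicated $\omega$-version of the unambiguous forest factorisation theorem (Theorem~\ref{thm:U-forest-omega}). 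With those two ingredients supplied, the rest of your plan matches the paper's Lemmas~\ref{lem:C_F^p}--\ref{lem:C_E_F^omega}.
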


The proof of \eqref{item:oRTEto2WST} is given in the next section, while the
proof of \eqref{item:2WSTto-oRTE} will be given in Section~\ref{sec:2WSTto-oRTE}
after some preparatory work on backward deterministic B\"uchi automata
(Section~\ref{sec:BDBA}) which are used to remove the look-ahead of \twoDMTla
(Section~\ref{sec:look-ahead}), and the notion of transition monoid for
\twoDMTla (Section~\ref{sec:TrM-2DMTla}) used in the unambiguous forest
factorization theorem extended to infinite words
(Theorem~\ref{thm:U-forest-omega}).

\subsection{\oRTE to \twoDMTla}

In this section, we prove one direction of Theorem~\ref{thm:2WST=oRTE}: given an
\oRTE $C$, we can construct an \twoDMTla $\A$ such that $\sem{\A}=\sem{C}$.
The proof is by structural induction and follows immediately from

\begin{lemma}\label{lem:induction-oRTE-to-2DMT}
  Let $K\subseteq\Sigma^*$ be regular and $L\subseteq\Sigma^\omega$ be
  $\omega$-regular.  Let $f$ be an \RTE with $\sem{f}=\sem{M_f}$ for some 2DFT
  $M_f$.  Let $g,h$ be \oRTEs with $\sem{g}=\sem{M_g}$ and $\sem{h}=\sem{M_h}$
  for \twoDMTla $M_g$ and $M_h$ respectively.  Then, one can construct
  \begin{enumerate}[nosep]
    \item\label{item:oifthenelse} an \twoDMTla $\Aa$ such that
    $\sem{\Ifthenelse{L}{g}h}=\sem{\Aa}$,
    
    \item\label{item:oHadamard} an \twoDMTla $\Aa$ such that 
    $\sem{\Aa}=\sem{g\odot h}$,
    
    \item\label{item:oCauchy} an \twoDMTla $\Aa$ such that 
    $\sem{\Aa}=\sem{g\lrdot h}$,
    
    \item\label{item:omega} an \twoDMTla $\Aa$ such that 
    $\sem{\Aa}=\sem{f^{\omega}}$,
   
    \item\label{item:twoomega} an \twoDMTla $\Aa$ such that 
    $\sem{\Aa}=\sem{\twoomega{K}{f}}$.
  \end{enumerate}    
\end{lemma}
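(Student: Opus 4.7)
The plan is to prove each of the five closure properties by an explicit \twoDMTla construction exploiting the power of $\omega$-regular look-ahead. Throughout we use that the domains of \RTEs and \oRTEs are (respectively) regular and $\omega$-regular, as are the domains of 2DFTs and \twoDMTla, and that $\omega$-regular languages are closed under boolean operations. Hence look-ahead partitions of given machines may be freely refined into finer partitions whenever needed, and new $\omega$-regular tests may be added to $\Rr$.

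For item~\eqref{item:oifthenelse} (if-then-else), I would construct $\Aa$ as the disjoint union of copies of $M_g$ and $M_h$, whose initial transition on $\leftend$ splits on whether $w\in L$: each look-ahead language $R$ of $M_g$ is refined to $R\cap L$, and each look-ahead $R$ of $M_h$ to $R\setminus L$. The Muller family of $\Aa$ is the disjoint union of those of $M_g$ and $M_h$, after tagging states with a branch bit. For item~\eqref{item:oCauchy} (Cauchy product, understood as $f\lrdot h$ in accordance with the syntax $E\lrdot C$), the unique factorisation $w=uv$ with $u\in\dom{f}$ and $v\in\dom{h}$ lets $\Aa$ locate the split using the look-ahead ``is the remaining suffix in $\dom{h}$?'' synchronised with a forward DFA for $\dom{f}$; before this position $\Aa$ simulates $M_f$ treating the split position as a virtual $\rightend$ (so a right move of $M_f$ from $\rightend$ into a final state triggers the switch), and after it simulates $M_h$ on the suffix, inheriting $M_h$'s Muller condition.

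For items~\eqref{item:omega} and~\eqref{item:twoomega} ($\omega$-iteration and 2-chained $\omega$-iteration), the construction follows the same template but iterates infinitely. A look-ahead to the $\omega$-regular property ``the suffix admits a (necessarily unique) factorisation into $\dom{f}$-blocks (resp.\ $K$-blocks)'', together with a DFA for $\dom{f}$ (resp.\ $K$), lets $\Aa$ detect each successive split point. For~\eqref{item:omega}, $\Aa$ then simulates $M_f$ on each block $u_i$ independently, treating block boundaries as virtual $\leftend$ and $\rightend$; for~\eqref{item:twoomega}, $\Aa$ instead simulates $M_f$ on each overlapping concatenation $u_iu_{i+1}$ using virtual end markers around the pair, and then slides one block to the right. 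The Muller acceptance of $\Aa$ is chosen so that the ``next block'' loop is taken infinitely often, e.g.\ by visiting a fresh distinguished state at every block boundary and requiring this state to be in the accepted infinity set.

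The main obstacle is item~\eqref{item:oHadamard} (Hadamard product), for which $\Aa$ must detect when the output of $M_g$ on $w$ stabilises to $\varepsilon$. Since $M_g$ is deterministic, for each state $q$ of $M_g$ the condition ``starting from a configuration in state $q$ at the current position, $M_g$ will never again emit a non-empty output'' is $\omega$-regular in the remaining suffix; the plan is to add these finitely many languages as refinement tests inside $\Aa$'s look-ahead partition. Then $\Aa$ runs in three phases: at $\leftend$ use look-ahead to verify $w\in g^{-1}(\Gamma^*)\cap\dom{h}$; simulate $M_g$ step-by-step, consulting the state-indexed look-ahead to detect the last non-empty output and then move back to $\leftend$; finally simulate $M_h$, inheriting its Muller condition. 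Formalising the state-indexed ``no more output'' look-aheads, and arguing that they faithfully trigger exactly on the last non-$\varepsilon$ transition of $M_g$'s two-way run, is the delicate step, and will likely rely on the transition-monoid analysis developed later for \twoDMTla (Section~\ref{sec:TrM-2DMTla}) to argue that such $\omega$-regular tests exist.
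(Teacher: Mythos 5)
Your overall template matches the paper's: refine the look-ahead partitions with new $\omega$-regular tests, branch at $\leftend$ for if-then-else, use state-indexed look-aheads to decide when to stop $M_g$ for the Hadamard product, and use look-aheads plus a DFA for $\dom{f}$ (resp.\ $K$) to locate block boundaries for the Cauchy product and the $\omega$-iterations. (The paper routes items 3 and 5 through an unambiguous one-way transducer that physically inserts $\#$-separators and then composes with a two-way transducer, rather than your direct product with ``virtual end markers''; your variant is workable but would additionally need a reverse-run argument to maintain the DFA state under backward moves of $M_f$.) Two points, however, are genuine gaps. First, unambiguity: the semantics of $f\lrdot h$, $f^\omega$ and $\twoomega{K}{f}$ is $\bot$ on every input admitting \emph{more than one} factorisation, and your parenthetical ``(necessarily unique)'' is false --- a word may factor into $\dom{f}$-blocks or $K$-blocks in several ways. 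Your machine, which switches at the first position where ``prefix in $\dom{f}$'' and ``suffix in $\dom{h}$'' both hold, would then accept such a word and output $f(u_1)\cdot h(v_1)$ instead of $\bot$. The paper guards against this by explicitly constructing the $\omega$-regular language of ambiguously factorisable words (a finite union $\bigcup_{p,q} L_p\cdot M_{p,q}\cdot R_q$ read off from the DFA for $\dom{f}$ and the Muller automaton for $\dom{h}$) and excluding it from the domain with an initial look-ahead test; your construction needs the same guard.

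Second, the Hadamard product. The test ``from state $q$ at the current position, $M_g$ never again emits a non-empty output'' is \emph{not} a property of the remaining suffix: the future run of $M_g$ may move back into the prefix, and whether it outputs anything there depends on letters the look-ahead cannot see, so as stated this is not an admissible look-ahead language. The paper's key idea is to strengthen the test to ``the remaining run stays within the suffix \emph{and} emits only $\varepsilon$''; this \emph{is} a suffix property, recognised by the look-ahead automaton $A_q$ obtained from $M_g$ by deleting every transition with non-empty output and every transition on $\leftend$, and entering at state $q$. One must then argue that such a trigger point always exists whenever $g(w)\in\Gamma^*$ --- it does, because an accepting run reads the whole word, visits each position finitely often, and emits non-$\varepsilon$ output only finitely often --- so the first forward transition whose target state passes the test correctly marks the point after which $M_g$ contributes nothing. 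No transition-monoid machinery is needed for this; the delicate step you defer is resolved by the extra ``never move left again'' condition.
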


\begin{proof}
  Throughout the proof, we let $M_g = (Q_g, \Sigma, \Gamma, s_g, \delta_g
  \mathcal{F}_g, \Rr_g)$ and $M_h = (Q_h, \Sigma, \Gamma, s_h, \delta_h,
  \mathcal{F}_h, \Rr_h)$ be the be the \twoDMTla such that $\sem{M_g}=\sem{g}$
  and $\sem{M_h}= \sem{h}$.

  \medskip\noindent\eqref{item:oifthenelse} \textbf{If then else.}
  The set of states of $\Aa$ is
  $Q_{\Aa} = \{q_0\} \cup Q_{g} \cup Q_{h}$ with $q_0\notin Q_g\cup Q_h$.
	In state $q_0$, we have the transitions $\delta_{\Aa}(q_0, (\leftend,R\cap
	L))=(q, \gamma, +1)$ if $\delta_g(s_g, (\leftend, R)) = (q, \gamma, +1)$ and
	$\delta_{\Aa}(q_0, (\leftend, R'\setminus L))=(q',\gamma',+1)$ if
	$\delta_h(s_h, (\leftend, R')) = (q', \gamma', +1)$.  This invokes $M_g$
	($M_h$) iff the input $w$ is in $L$ (not in $L$).  The Muller set
	$\mathcal{F}$ is simply a union $\mathcal{F}_g \cup \mathcal{F}_h$ of the
	respective Muller sets of $M_g$ and $M_h$.  It is clear that $\sem{\Aa}$
	coincides with $\sem{M_g}$ iff the input string is in $L$, and otherwise,
	$\sem{\Aa}$ coincides with $\sem{M_h}$.

  \medskip\noindent\eqref{item:oHadamard} \textbf{Hadamard product.}
  We create a look ahead which indicates the position where we can stop reading
  the input word $w$ for the transducer $M_g$.
	The look ahead should satisfy two conditions for this purpose:
	\begin{itemize}[nosep]
		\item We cannot visit any position  to the left of the current position in 
    the remaining run of $M_g$ on $w$.
		\item The output produced by running $M_g$ on the suffix should be $\varepsilon$.
	\end{itemize}
	To accommodate these two conditions, we create look ahead automata $A_q$ for
	each state $q \in Q_g$ and let $L_q=\dom{A_q}$.  The structure of $A_q$ is same
	as $M_g$ except that we
	\begin{itemize}[nosep]
		\item add a new initial state $\iota_q$ and the transition
		$\delta_q(\iota_q, \leftend, \Sigma^\omega) = (q, \varepsilon, +1)$,

    \item remove all transitions from $M_g$ where the output is $\gamma \neq
    \varepsilon$,

    \item remove all transitions from $M_g$ where the input symbol is
    $\leftend$.
	\end{itemize}
	We explain the construction of the \twoDMTla $\Aa$ such that $\sem{g \odot h}=\sem{\Aa}$.
	The set of states of $\Aa$ are $Q_{\Aa} = Q_{g} \cup Q_{h} \cup \{\mathsf{reset}\}$.
  Backward transitions in $\Aa$ and $M_g$ are same:
  $\delta_\Aa(q,a,R)=(q',\gamma,-1)$ iff $\delta_g(q,a,R)=(q',\gamma,-1)$.
  Forward
  transitions of $M_g$ are divided into two depending on the look ahead.  If we
  have $\delta_g(q,a,R)=(q',\gamma,+1)$ in $M_g$ for an $a\in\Sigma_\leftend$,
  then \par
  \hfil$\delta_\Aa(q,a,R\setminus L_{q'})=(q',\gamma,+1)$ 
  \hfil and \hfil
  $\delta_\Aa(q,a,R\cap L_{q'})=(\mathsf{reset},\gamma,+1)$.
  \\
	From the $\mathsf{reset}$ state, we go to the left until $\leftend$ is reached
	and then start running $M_h$.  So, $\delta_\Aa(\mathsf{reset}, a, \Sigma^\omega) =
	(\mathsf{reset}, \varepsilon, -1)$ for all $a \in \Sigma$ and
	$\delta_\Aa(\mathsf{reset}, \leftend, R) = (q'', \gamma, +1)$ if $\delta_h(s_h,
	\leftend, R) = (q'', \gamma, +1)$.  The accepting set is same as the Muller
	accepting set $\Ff_h$ of $M_h$.

  \medskip\noindent\eqref{item:oCauchy} \textbf{Cauchy product.}
	From the transducers $M_f$ and $M_g$, we can construct a DFA $\Dd_f = (Q_f,
	\Sigma, \delta_f, s_f, F_f)$ that accepts $\dom{M_f}$ and a 
  deterministic Muller automaton (DMA)
	$\Dd_g = (Q_g, \Sigma, \delta_g, s_g, \mathcal{F}_g)$ that
	accepts $\dom{M_g}$.  
	
	Now, the set $L$ of words $w$ having at least two factorizations $w = u_1v_1 =
	u_2v_2$ with $u_1,u_2\in\dom{f}$, $v_1,v_2\in\dom{g}$ and $u_1\neq u_2$ is
	$\omega$-regular.  This is easy since $L$ can be written as $L=\bigcup_{p\in
	F_f,q\in Q_g} L_p\cdot M_{p,q}\cdot R_q$ where
	\begin{itemize}[nosep]
		\item $L_p\subseteq\Sigma^{*}$ is the regular set of words which admit a run
		in $\Dd_f$ from its initial state to state $p$,
	
		\item $M_{p,q}\subseteq\Sigma^{*}$ is the regular set of words which admit a
		run in $\Dd_f$ from state $p$ to some final state in $\Dd_f$, and also admit
		a run in $\Dd_g$ from the initial state to some state $q$ in $\Dd_g$,
	
		\item $R_{q}\subseteq\Sigma^{\omega}$ is the $\omega$-regular set of words
		which (i) admit an \emph{accepting} run from state $q$ in $\Dd_g$ and also (ii)
		admit an \emph{accepting} run in $\Dd_g$ from its initial state $s_g$.
 	\end{itemize}	
	Therefore, $\dom{f\lrdot g} = (\dom{f}\cdot\dom{g})\setminus L$ is
	$\omega$-regular.
	
  First we construct an $\omega$-1DMT\textsubscript{la} $\Dd$ such that $\dom{\Dd} =
  \dom{f \lrdot g}$ and on an input word $w = uv$ with $u\in\dom{f}$ and
  $v\in\dom{g}$, it produces the output $u \# v$ where $\# \notin \Sigma$ is a
  new symbol.  From its initial state while reading $\leftend$, $\Dd$ uses the 
  look-ahead to check whether the input word $w$ is in $\dom{f\lrdot g}$ or not.
  If yes, it moves right and enters the initial state of $\Dd_f$. If not, it 
  goes to a sink state and rejects.
  While running $\Dd_f$, $\Dd$ copies each input letter to output.  
  Upon reaching a final state of $\Dd_f$, we use the look-ahead $\dom{g}$ to 
  see whether we should continue running $\Dd_f$ or we should switch to 
  $\Dd_g$. Formally, if $\delta_f(q,a)=q'\in F_f$ the corresponding transitions 
  of $\Dd$ are \par
  \hfil$\delta_\Dd(q,a,\dom{g})=(s_g,a \#,+1)$ 
  \hfil and \hfil
  $\delta_\Dd(q,a,\Sigma^{\omega}\setminus\dom{g})=(q',a,+1)$.
  \\
  While running $\Dd_g$, $\Dd$ copies each input letter to output.  
  Accepting sets of $\Dd$ are the accepting sets of the DMA
  $\Dd_g$.  Thus, $\Dd$ produces an output $u \# v$ for an input string $w=uv$
  which is in $\dom{f\lrdot g}$ such that $u \in \dom f$ and $v \in \dom g$.
	
	Next we construct an \twoDMTla $\Tt$ which takes input words of the form $u \# v$
	with $u \in \Sigma^*$ and $v \in \Sigma^\omega$, runs $M_f$ on $u$ and $M_g$
	on $v$.  To do so, $u$ is traversed in either direction depending on $M_f$ and
	the symbol $\#$ is interpreted as right end marker $\rightend$ for $M_f$.
	While simulating a transition of $M_f$ moving right of $\rightend$, producing
	the output $\gamma$ and reaching state $q$, there are two possibilities.  If
	$q$ is not a final state of $M_f$ then $\Tt$ moves to the right of $\#$, goes
	to some sink state and rejects.  If $q$ is a final state of $M_f$, then $\Tt$
	stays on $\#$ producing the output $\gamma$ and goes to the initial state of
	$M_g$.  Then, $\Tt$ runs $M_g$ on $v$ interpreting $\#$ as $\leftend$.  The
	Muller accepting set of $\Tt$ is same as $M_g$.
  	
  We construct an \twoDMTla $\Aa$ as the composition of $\Dd$ and $\Tt$.  Regular
  transformations are definable by \twoDMTla \cite{lics12} and are closed under
  composition \cite{Courcelle:1997:EGP:278918.278932}.  Thus the composition of
  an $\omega$-1DMT\textsubscript{la} and an \twoDMTla is an \twoDMTla. We deduce that $\Aa$
  is an \twoDMTla. Moreover $\sem{\Aa} = \sem{f \lrdot g}$.

  \medskip\noindent\eqref{item:omega} \textbf{$\omega$-iteration.}
	By the remark above Example~\ref{omega-eg}, this is a derived operator
	and hence the result follows from the next case.

  \medskip\noindent\eqref{item:twoomega} \textbf{2-chained $\omega$-iteration.}
	First we show that the set of words $w$ in $\Sigma^\omega$ having an
	unambiguous decomposition $w = u_1 u_2 \cdots$ with $u_i \in K$ for each $i$ is
	$\omega$-regular.  As in case \eqref{item:oCauchy} above, the
	language $L$ of words $w$ having at least two factorizations $w=u_1v_1=u_2v_2$
	with $u_1,u_2\in K$, $v_1,v_2\in K^\omega$ and $u_1\neq u_2$ is $\omega$-regular.
	Hence, $L'=K^*\cdot L$ is $\omega$-regular and contains all words in $\Sigma^\omega$
	having several factorizations as products of words in $K$.  We deduce
	that $\Sigma^\omega \setminus L'$ is $\omega$-regular.
	
	As in case \eqref{item:oCauchy} above, we construct an
	$\omega$-1DMT\textsubscript{la} $\Dd$ which takes as input $w$ and outputs
	$u_1 \# u_2 \# \cdots$ iff there is an unambiguous decomposition of $w$ as $u_1
	u_2 \cdots$ with each $u_i \in K$.
	We then construct an $\omega$-2DMT $\Dd'$ that takes as input words of the
	form $u_1 \# u_2 \# \cdots$ with each $u_i \in \Sigma^*$ and produces $u_1u_2
	\# u_2u_3 \# \cdots$.
	
	Next we construct an $\omega$-2DMT $\Tt$ that takes as input words of the form
	$w_1 \# w_2 \# \cdots $ with each $w_i \in \Sigma^*$ and runs $M_f$ on each
	$w_i$ from left to right.  The transducer $\Tt$ interprets $\#$ as $\leftend$
	(resp.\ $\rightend$) when it is reached from the right (resp.\ from left).
  While simulating a transition of $M_f$ moving right of $\rightend$, we 
  proceed as in case \eqref{item:oCauchy} above, except
	that $\Tt$ goes to the initial state of $M_f$ instead.
	
	The \twoDMTla $\Aa$ is then obtained as the composition of $\Dd$, $\Dd'$ and $\Tt$.
	The output produced by $\Aa$ is thus $\sem{M_f}(u_1u_2) \sem{M_f}(u_2u_3)
	\cdots$.
\end{proof}

\subsection{Backward deterministic \buchi automata (\BDBA)}\label{sec:BDBA}

A \buchi automaton over the input alphabet $\Sigma$ is a tuple
$\Bb=(P,\Sigma,\Delta,\Fin)$ where $P$ is a finite set of states, $\Fin\subseteq
P$ is the set of final (accepting) states, and $\Delta\subseteq
P\times\Sigma\times P$ is the transition relation.  A run of $\Bb$ over an
infinite word $w=a_1a_2a_3\cdots$ is a sequence
$\rho=p_0,a_1,p_1,a_2,p_2,\ldots$ such that $(p_{i-1},a_i,p_i)\in\Delta$ for all
$i\geq 1$.  The run is final (accepting) if $\infi(\rho)\cap\Fin\neq\emptyset$ where
$\infi(\rho)$ is the set of states visited infinitely often by $\rho$. 

The \buchi automaton $\Bb$ is \emph{backward deterministic} (\BDBA) or
\emph{complete unambiguous} (\CUBA) if for all infinite words
$w\in\Sigma^\omega$, there is \emph{exactly one} run $\rho$ of $\Bb$ over $w$
which is final, this run is denoted $\Bb(w)$.  The fact that we request at
least/most one final run on $w$ explains why the automaton is called
complete/unambiguous.  Wlog, we may assume that all states of $\Bb$ are
\emph{useful}, i.e., for all $p\in P$ there exists some $w\in\Sigma^\omega$ such
that $\Bb(w)$ starts from state $p$.  In that case, it is easy to check that the
transition relation is \emph{backward deterministic and complete}: for all
$(p,a)\in P\times\Sigma$ there is exactly one state $p'$ such that
$(p',a,p)\in\Delta$.  We write $p'\xleftarrow{a}p$ and state $p'$ is denoted
$\Delta^{-1}(p,a)$.  In other words, the inverse of the transition relation
$\Delta^{-1}\colon P\times\Sigma\to P$ is a total function.

For each state $p\in P$, we let $\Lang{\Bb,p}$ be the set of infinite words 
$w\in\Sigma^\omega$ such that $\Bb(w)$ starts from $p$. For every subset 
$I\subseteq P$ of initial states, the language $\Lang{\Bb,I}=\bigcup_{p\in 
I}\Lang{\Bb,p}$ is $\omega$-regular. 

\begin{example}\label{ex:BDBA}
  For instance, the automaton $\Bb$ below is a \BDBA. Morover, we have 
  $\Lang{\Bb,p_2}=(\Sigma\setminus\{\#\})^{\omega}$, 
  $\Lang{\Bb,p_4}=(\#\Sigma^{*})^{\omega}$, and
  $\Lang{\Bb,\{p_1,p_3,p_4\}}=\Sigma^{*}\#\Sigma^{\omega}$.
  
  \medskip
  \centerline{\includegraphics[page=4,scale=1]{gpicture-pics.pdf}}
\end{example}

Deterministic \buchi automata (\DBA) are 
strictly weaker than non-deterministic \buchi automata (\NBA) but backward 
determinism keeps the full expressive power.

\begin{theorem}[Carton \& Michel \cite{Carton:2003rt}]\label{thm:BDBA}
  A language $L\subseteq\Sigma^\omega$ is $\omega$-regular iff $L=\Lang{\Bb,I}$ 
  for some \BDBA $\Bb$ and initial set $I$.
\end{theorem}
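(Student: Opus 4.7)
The plan is to prove the two directions separately. The easy direction is $(\Leftarrow)$: a \BDBA $\Bb=(P,\Sigma,\Delta,\Fin)$ with initial set $I\subseteq P$ can be viewed as a nondeterministic \buchi automaton with initial states $I$; by definition $w\in\Lang{\Bb,I}$ iff its unique final run $\Bb(w)$ starts from some $p\in I$, which is in particular a final run over $w$ starting in $I$. Conversely, any final run over $w$ starting in $I$ must be $\Bb(w)$, so the two notions of acceptance coincide and $\Lang{\Bb,I}$ is the Büchi language of this \NBA, hence $\omega$-regular.

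For the harder direction $(\Rightarrow)$, given an $\omega$-regular $L$, I would start from a deterministic Muller automaton $M$ recognizing $L$, which exists by McNaughton's theorem. The idea is to build a \BDBA $\Bb$ whose state after reading a prefix $w_1\cdots w_n$ encodes a ``summary of the future'' depending only on the suffix $w_{n+1}w_{n+2}\cdots$, and rich enough to decide membership in $L$. Since the summary of $a\cdot u$ is computed from the summary of $u$ together with $a$, the transition relation becomes backward deterministic and complete. The initial set $I$ is then the set of summaries of words in $L$.

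The main steps are: (i)~define a right-congruence $\sim$ on $\Sigma^\omega$ with finitely many classes, such that the class of $u$ determines, for every fixed state $q\in Q$ of $M$, whether the run of $M$ from $q$ on $u$ is accepting; (ii)~take as states of $\Bb$ the $\sim$-classes augmented with auxiliary bookkeeping needed to witness the Muller condition of $M$ by a single Büchi set; (iii)~define $\Delta$ by the backward action of letters on these states, obtaining a complete backward deterministic transition relation; (iv)~design $\Fin$ so that, along each $w\in\Sigma^\omega$, there is \emph{exactly one} run that visits $\Fin$ infinitely often, and moreover this happens iff $w\in L$.

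The hard part will be step~(iv), where one must guarantee \emph{both} completeness (every $w\in L$ has at least one final run) \emph{and} unambiguity (no $w$ has two distinct final runs). Naive summary constructions typically either duplicate final runs (when several guessed suffix behaviours all happen to be compatible with acceptance) or miss them. The technical core of Carton--Michel's proof is a Safra-like bookkeeping performed backwards, which selects a canonical witness of acceptance (e.g.\ tracking a rightmost recurrence pattern predicted by the summary) and makes $\Fin$ fire only along this canonical run. I would therefore spend most of the proof effort on this bookkeeping and its correctness, and defer to \cite{Carton:2003rt} for the precise combinatorial construction.
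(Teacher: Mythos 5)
The paper does not prove this statement at all: it is imported verbatim from Carton and Michel, with only the remark that their construction turns an \NBA with $m$ states into a \BDBA with $(3m)^m$ states. So the only part of your proposal that can be checked against a complete argument is the $(\Leftarrow)$ direction, and that part is correct and complete: since a \BDBA has \emph{exactly one} final run over every word, any final run of the induced \NBA $(P,\Sigma,\Delta,I,\Fin)$ over $w$ must be $\Bb(w)$, so \buchi acceptance from $I$ coincides with $\Bb(w)$ starting in $I$, and $\Lang{\Bb,I}$ is $\omega$-regular.

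For $(\Rightarrow)$ you give a plan rather than a proof, and you explicitly defer the combinatorial core to \cite{Carton:2003rt} --- which is exactly what the paper itself does, so this is not a gap relative to the source. Two cautions if you intend to carry the plan out. First, your step~(i) is too coarse on its own: the summary $q\mapsto(\text{run of }M\text{ from }q\text{ on }u\text{ accepting?})$ is indeed backward-computable and finite, but the resulting automaton has no acceptance structure, and a backward-deterministic complete transition relation still admits many runs over a given word (only the tail is unconstrained); the entire difficulty, as you note, is choosing $\Fin$ so that exactly one run is final. Second, your description of the Carton--Michel bookkeeping as ``Safra-like'' is misleading: their construction starts from an \NBA (not a deterministic Muller automaton) and its states are ranking functions on the \NBA's state set --- each state of the original automaton is assigned a rank in $\{1,\dots,m\}$ together with one of three tags, whence the $(3m)^m$ bound --- selecting a canonical final run by a direct ordering argument rather than by Safra trees. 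Neither point invalidates your outline, but a self-contained proof would have to replace step~(iv) by that ranking construction and verify both existence and uniqueness of the final run there.
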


The proof in \cite{Carton:2003rt} is constructive, starting with an \NBA with 
$m$ states, they construct an equivalent \BDBA with $(3m)^m$ states.

A crucial fact on \BDBA is that they are easily closed under boolean operations.
In particular, the complement, which is quite difficult for \NBAs, becomes
trivial with \BDBAs: $\Lang{\Bb,P\setminus
I}=\Sigma^\omega\setminus\Lang{\Bb,I}$.  For intersection and union, we simply
use the classical cartesian product of two automata $\Bb_1$ and $\Bb_2$.  This
clearly preserves the backward determinism.  For intersection, we use a
generalized \buchi acceptance condition, i.e., a conjunction of \buchi
acceptance conditions.  For \BDBAs, generalized and classical \buchi acceptance
conditions are equivalent \cite{Carton:2003rt}.  We obtain immediately

\begin{corollary}\label{cor:BDBA}
  Let $\Rr$ be a finite family of $\omega$-regular languages.  There is a \BDBA
  $\Bb$ and a tuple of initial sets $(I_R)_{R\in\Rr}$ such that
  $R=\Lang{\Bb,I_R}$ for all $R\in\Rr$.
\end{corollary}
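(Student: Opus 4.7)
The plan is to apply Theorem~\ref{thm:BDBA} componentwise and then take a cartesian product, noting that backward determinism is preserved by products and that generalized \buchi acceptance collapses to classical \buchi acceptance on \BDBA{}s (as stated just above the corollary, citing \cite{Carton:2003rt}).

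First, enumerate $\Rr=\{R_1,\ldots,R_n\}$ and, using Theorem~\ref{thm:BDBA}, pick for each $i$ a \BDBA $\Bb_i=(P_i,\Sigma,\Delta_i,\Fin_i)$ together with an initial set $J_i\subseteq P_i$ such that $R_i=\Lang{\Bb_i,J_i}$. Define $\Bb$ as the synchronous product $\Bb_1\times\cdots\times\Bb_n$: the state set is $P=P_1\times\cdots\times P_n$ and there is a transition $(p_1,\ldots,p_n)\xrightarrow{a}(p'_1,\ldots,p'_n)$ iff $p_i\xrightarrow{a}p'_i$ in $\Delta_i$ for every $i$. Equip $\Bb$ with the generalized \buchi condition consisting of the $n$ sets $\{(p_1,\ldots,p_n)\in P\mid p_i\in\Fin_i\}$ for $1\leq i\leq n$.

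Next, I would check the two conditions making $\Bb$ a \BDBA. Backward determinism is immediate: since each $\Delta_i^{-1}\colon P_i\times\Sigma\to P_i$ is a total function, so is $\Delta^{-1}\colon P\times\Sigma\to P$, acting componentwise. For the complete-unambiguous property, fix $w\in\Sigma^\omega$. In each $\Bb_i$ there is a unique final run $\rho_i=\Bb_i(w)$, and the tuple $\rho=(\rho_1,\ldots,\rho_n)$ is then a run of $\Bb$ on $w$ that satisfies the generalized acceptance condition, hence is accepting. Conversely, any run $\rho'$ of $\Bb$ on $w$ accepted by the generalized condition projects to runs $\rho'_i$ of $\Bb_i$ on $w$ each hitting $\Fin_i$ infinitely often, so by unambiguity of $\Bb_i$ we must have $\rho'_i=\rho_i$, hence $\rho'=\rho$. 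Thus $\Bb$ has exactly one accepting run on every infinite word. Finally, invoke the equivalence between generalized and classical \buchi acceptance on \BDBA{}s to replace the generalized condition with a classical \buchi set, yielding a genuine \BDBA.

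Finally, set $I_{R_i}=\{(p_1,\ldots,p_n)\in P\mid p_i\in J_i\}$ for each $i$. Given $w\in\Sigma^\omega$, the unique accepting run $\Bb(w)$ starts in $I_{R_i}$ iff its $i$-th component, which is $\Bb_i(w)$, starts in $J_i$, iff $w\in R_i$. Hence $R_i=\Lang{\Bb,I_{R_i}}$ for all $i$, as required. The only subtle point is the generalized-to-classical \buchi conversion for \BDBA{}s; but this is precisely the fact recalled from \cite{Carton:2003rt} just before the corollary, so no additional work is needed.
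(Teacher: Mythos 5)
Your proof is correct and follows essentially the same route as the paper: a cartesian product of the componentwise \BDBAs obtained from Theorem~\ref{thm:BDBA}, a generalized \buchi condition for the conjunction, and the cited equivalence of generalized and classical \buchi acceptance on \BDBAs, with the initial sets $I_{R}$ defined by projection. You merely spell out the uniqueness-of-accepting-run argument that the paper leaves implicit in its ``we obtain immediately.''
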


\subsection{Replacing the look-ahead of an \twoDMTla with a \BDBA}\label{sec:look-ahead}

Let $\A=(Q,\Sigma,\Gamma,q_0,\delta,\mathcal{F},\Rr)$ be an \twoDMTla.  By
Corollary~\ref{cor:BDBA} there is a \BDBA $\Bb=(P,\Sigma,\Delta,\Fin)$ and a
tuple $(I_R)_{R\in\Rr}$ of initial sets for the finite family $\Rr$ of
$\omega$-regular languages used as look-ahead by the automaton $\A$.  Recall
that for every pair $(q,a)\in Q\times\Sigma_\leftend$, the subset $\Rr(q,a)$ of
languages $R\in\Rr$ such that $\delta(q,a,R)$ is defined forms
a partition of $\Sigma^\omega$.  We deduce that $(I_R)_{R\in\Rr(q,a)}$
is a partition of $P$.

We construct an \twoDMT
$\tilde{\A}=(Q,\tilde{\Sigma},\Gamma,q_0,\tilde{\delta},\mathcal{F})$ \emph{without
look-ahead} over the extended alphabet $\tilde{\Sigma}=\Sigma\times P$ which is
equivalent to $\A$ in some sense made precise below.  Intuitively, in a pair 
$(a,p)\in\tilde{\Sigma_\leftend}$, the state $p$ of $\Bb$ gives the 
look-ahead information required by $\A$. Formally, the deterministic 
transition function $\tilde{\delta}\colon Q\times\tilde{\Sigma_\leftend}\to 
Q\times\Gamma^*\times\{-1,+1\}$ is defined as follows: for $q\in Q$ and 
$(a,p)\in\tilde{\Sigma_\leftend}$ we let 
$\tilde{\delta}(q,(a,p))=\delta(q,a,R)$ for 
the unique $R\in\Rr(q,a)$ such that $p\in I_R$.

\begin{example}\label{ex:tilde-A}
  For instance, the automaton $\tilde{\A}$ constructed from the automaton on 
  the right of Figure~\ref{fig:intro} and the \BDBA $\Bb$ of 
  Example~\ref{ex:BDBA} is depicted below, where $\bullet$ stands for an 
  arbitrary state of $\Bb$. 
  
  \medskip
  \centerline{\includegraphics[page=5,scale=1]{gpicture-pics.pdf}}
\end{example}

Let $w=a_1a_2a_3\cdots\in\Sigma^\omega$ and let
$\Bb(w)=p_0,a_1,p_1,a_2,p_2,\ldots$ be the unique final run of $\Bb$ on $w$.  We
define $\tilde{\leftend w}=
(\leftend,p_0)(a_1,p_1)(a_2,p_2)\cdots\in\tilde{\Sigma_\leftend}^\omega$.
We can easily check by induction that the unique run of $\A$ on $w$
$$
q_0\leftend w \xrightarrow{\gamma_1} w'_1q_1w''_1 \xrightarrow{\gamma_2}
w'_2q_2w''_2 \xrightarrow{\gamma_3} w'_3q_3w''_3 \xrightarrow{\gamma_4} 
w'_4q_4w''_4 \cdots
$$
corresponds to the unique run of $\tilde{\A}$ on $\tilde{\leftend w}$
$$
q_0\tilde{\leftend w} \xrightarrow{\gamma_1} \tilde{w'_1}q_1\tilde{w''_1} \xrightarrow{\gamma_2}
\tilde{w'_2}q_2\tilde{w''_2} \xrightarrow{\gamma_3} \tilde{w'_3}q_3\tilde{w''_3} \xrightarrow{\gamma_4} 
\tilde{w'_4}q_4\tilde{w''_4} \cdots
$$
where for all $i>0$ we have $\tilde{\leftend w}=\tilde{w'_i}\tilde{w''_i}$ and
$|w'_i|=|\tilde{w'_i}|$.  Indeed, assume that in a configuration $w'qaw''$ with
$\leftend w=w'aw''$ the transducer $\A$ takes the transition
$q\xrightarrow{(a,R)}(q',\gamma,+1)$ and reaches configuration $w'aq'w''$.
Then, $w''\in R$ and the corresponding configuration $\tilde{w'}q(a,p)\tilde{w''}$ with
$\tilde{\leftend w}=\tilde{w'}(a,p)\tilde{w''}$ and $|w'|=|\tilde{w'}|$ is such 
that $p\in I_R$. Therefore, the
transducer $\tilde{\A}$ takes the transition $q\xrightarrow{(a,p)}(q',\gamma,+1)$ and
reaches configuration $\tilde{w'}(a,p)q'\tilde{w''}$. The proof is similar for 
backward transitions. We have shown that $\A$ and $\tilde{\A}$ are equivalent 
in the following sense:

\begin{lemma}\label{lem:removing-look-ahead}
  For all words $w\in\Sigma^\omega$, the transducer $\A$ starting from $\leftend
  w$ accepts iff the transducer $\tilde{\A}$ starting from $\tilde{\leftend w}$
  accepts, and in this case they compute the same output in $\Gamma^\infty$.
\end{lemma}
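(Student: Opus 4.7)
The plan is to establish a precise step-by-step correspondence between the unique run of $\A$ on $\leftend w$ and the unique run of $\tilde{\A}$ on $\tilde{\leftend w}$, and then read off both the acceptance equivalence and the equality of outputs directly from that correspondence.

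The central observation is about the \BDBA annotation. If $\Bb(w)=p_0,a_1,p_1,a_2,p_2,\ldots$ is the unique final run of $\Bb$ on $w=a_1a_2\cdots$, then for every $i\geq0$ the tail $p_i,a_{i+1},p_{i+1},\ldots$ is itself the unique final run of $\Bb$ on the suffix $w''_i=a_{i+1}a_{i+2}\cdots$; this is immediate from backward determinism and from the fact that visiting $\Fin$ infinitely often is a tail property. Consequently $w''_i\in\Lang{\Bb,p_i}$, and since $R=\Lang{\Bb,I_R}$ for every $R\in\Rr$, we obtain the key equivalence $w''_i\in R \iff p_i\in I_R$. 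Because $(I_R)_{R\in\Rr(q,a)}$ partitions $P$, there is thus a unique $R\in\Rr(q,a_{i+1})$ both containing $w''_i$ and containing $p_i$ in $I_R$; this is exactly the $R$ picked by $\tilde{\delta}(q,(a_{i+1},p_i))$ in the construction of $\tilde{\A}$.

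With this in hand, I would carry out a straightforward induction on the step index $n$ of the run of $\A$ to show the invariant: for every $n\geq0$, if the $n$-th configuration of $\A$ is $w'_n q_n w''_n$ with $\leftend w=w'_n w''_n$, then the $n$-th configuration of $\tilde{\A}$ is $\tilde{w'_n} q_n \tilde{w''_n}$ with $\tilde{\leftend w}=\tilde{w'_n}\tilde{w''_n}$ and $|w'_n|=|\tilde{w'_n}|$, and the outputs produced at step $n$ coincide. Forward transitions use the look-ahead correspondence above; backward transitions depend only on the current letter and state, so the transition in $\tilde{\A}$ is defined verbatim from the corresponding transition of $\A$ via the definition of $\tilde{\delta}$. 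The inductive step for the left end marker $\leftend$ is handled identically, using $(\leftend,p_0)$ as the first letter of $\tilde{\leftend w}$.

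Since the two runs visit the same sequence of states, the one reads the whole word iff the other does, and $\infi(\rho)=\infi(\tilde{\rho})$, so Muller acceptance transfers either way; and the produced output $\gamma_1\gamma_2\gamma_3\cdots\in\Gamma^\infty$ is literally the same on both sides. There is no real obstacle here: the only non-trivial point is the look-ahead-to-$\Bb$-state translation, which is isolated in the equivalence $w''_i\in R \iff p_i\in I_R$ established above; everything else is a clean mechanical simulation.
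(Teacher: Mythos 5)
Your proof follows the paper's argument essentially verbatim: annotate each position of $w$ with the corresponding state of the unique final run of $\Bb$, establish the key equivalence $w''\in R\iff p\in I_R$ via the tail-of-the-final-run observation, and propagate a step-by-step configuration correspondence by induction, from which acceptance and output equality are immediate. One small correction: backward transitions of $\A$ also consult the look-ahead (the transition function has the form $\delta(q,a,R)$ regardless of the direction), so they are not ``defined verbatim from the state and letter'' but are handled by exactly the same annotation argument as forward transitions; nothing breaks.
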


\subsection{Transition monoid of an \twoDMTla}\label{sec:TrM-2DMTla}

We use the notations of the previous sections, in particular for the \twoDMTla 
$\A$, the \BDBA $\Bb$ and the corresponding \twoDMT $\tilde{\A}$.
As in the case of 2NFAs over finite words, we will define a congruence on 
$\Sigma^+$ such that two words $u,v\in\Sigma^+$ are equivalent iff they behave 
the same in the \twoDMTla $\A$, when placed in an arbitrary right context 
$w\in\Sigma^\omega$. The right context $w$ is abstracted with the first state 
$p$ of the unique final run $\Bb(w)$.

The \twoDMT $\tilde{\A}$ does not use look-ahead, hence, we may use for
$\tilde{\A}$ the classical notion of transition monoid.  Actually, in order to
handle the Muller acceptance condition of $\tilde{\A}$, we need a slight
extension of the transition monoid defined in Section~\ref{sec:TrM-2NFA}.  More
precisely, the abstraction of a finite word $\tilde{u}\in\tilde{\Sigma}^+$ will
be the set $\tilde{\TrMorph}(\tilde{u})$ of tuples $(q,d,X,q')$ with $q,q'\in
Q$, $X\subseteq Q$ and $d\in\{\leftright,\leftleft,\rightright,\rightleft\}$
such that the unique run of $\tilde{\A}$ on $\tilde{u}$ starting in state $q$ on
the left of $\tilde{u}$ if $d\in\{\leftright,\leftleft\}$ (resp.\ on the right
if $d\in\{\rightright,\rightleft\}$) exits in state $q'$ on the left of
$\tilde{u}$ if $d\in\{\leftleft,\rightleft\}$ (resp.\ on the right if
$d\in\{\leftright,\rightright\}$) and visits the set of states $X$ \emph{while
in} $\tilde{u}$ (i.e., including $q$ but not $q'$ unless $q'$ is also visited
before the run exits $\tilde{u}$).

For instance, with the automaton $\tilde{\A}$ of Example~\ref{ex:tilde-A}, we
have $(q_4,\leftright,\{q_2,q_3,q_4\},q_5)\in\tilde{\TrMorph}(\tilde{u})$ when
$\tilde{u}\in((a,p_1)+(b,p_1))^{*}(\#,p_1)((a,p_1)+(b,p_1))^{*}(\#,p_2)$.

We denote by
$\tilde{\TrMon}=\{\tilde{\TrMorph}(\tilde{u})\mid\tilde{u}\in\tilde{\Sigma}^+\}\cup\{\unit_{\tilde{\TrMon}}\}$
the transition monoid of $\tilde{\A}$ with unit $\unit_{\tilde{\TrMon}}$.  
The classical product is extended by taking the union of the sets $X$ occurring 
in a sequence of steps. For instance, if we have steps
$(q_0,\leftright,X_1,q_1)$, $(q_2,\rightright,X_3,q_3)$, \ldots, 
$(q_{i-1},\rightright,X_i,q_i)$ in $\tilde{\TrMorph}(\tilde{u})$ and
$(q_1,\leftleft,X_2,q_2)$, $(q_3,\leftleft,X_4,q_4)$, \ldots, 
$(q_i,\leftright,X_{i+1},q_{i+1})$ in $\tilde{\TrMorph}(\tilde{v})$ then there is a 
step $(q_0,\leftright,X_1\cup\cdots\cup X_{i+1},q_{i+1})$ in 
$\tilde{\TrMorph}(\tilde{u}\cdot\tilde{v})=
\tilde{\TrMorph}(\tilde{u})\cdot\tilde{\TrMorph}(\tilde{v})$.
We denote by $\tilde{\TrMorph}\colon\tilde{\Sigma}^*\to\tilde{\TrMon}$ the
canonical morphism.

Let $u=a_1\cdots a_n\in\Sigma^+$ be a finite word of length $n>0$ and let $p\in
P$.  We define the sequence of states $p_0,p_1,\ldots,p_n$ by $p_n=p$ and for
all $0\leq i<n$ we have $p_{i}\xleftarrow{a_{i+1}}p_{i+1}$ in $\Bb$.  Notice
that for all infinite words $w\in\Lang{\Bb,p}$, the unique run $\Bb(uw)$ starts
with $p_0,a_1,p_1,\ldots,a_n,p_n$. We define 
$\tilde{u}^p=(a_1,p_1)(a_2,p_2)\cdots(a_n,p_n)\in\tilde{\Sigma}^+$.

We are now ready to define the finite abstraction $\TrMorph(u)$ of a finite
word $u\in\Sigma^+$ with respect to the pair $(\A,\Bb)$: we let
$\TrMorph(u)=(r^p,b^p,s^p)_{p\in P}$ where for each $p\in P$,
$s^p=\tilde{\TrMorph}(\tilde{u}^p)\in\tilde{\TrMon}$ is the abstraction of
$\tilde{u}^p$ with respect to $\tilde{\A}$, $r^p\in P$ is the unique state of
$\Bb$ such that $r^p\xleftarrow{u}p$, $b^p=1$ if the word $\tilde{u}^p$ contains
a final state of $\Bb$ and $b^p=0$ otherwise.

The transition monoid of $(\A,\Bb)$ is the set $\TrMon=\{\TrMorph(u)\mid
u\in\Sigma^+\}\cup\{\unit_\TrMon\}$ where $\unit_\TrMon$ is the unit.  The
product of $\sigma_1=(r_1^p,b_1^p,s_1^p)_{p\in P}$ and
$\sigma=(r^p,b^p,s^p)_{p\in P}$ is defined to be
$\sigma_1\cdot\sigma=(r_1^{r^p},b_1^{r^p}\vee b^p, s_1^{r^p}\cdot s^p)_{p\in
P}$.  We can check that this product is associative, so that
$(\TrMon,\cdot,\unit_\TrMon)$ is a monoid.  Moreover, let $u,v\in\Sigma^+$ be
such that $\TrMorph(u)=\sigma_1$ and $\TrMorph(v)=\sigma$.  For each $p\in P$,
we can check that $\tilde{uv}^p=\tilde{u}^{r^p}\cdot\tilde{v}^p$.  We deduce
easily that $\TrMorph(uv)=\sigma_1\cdot\sigma=\TrMorph(u)\cdot\TrMorph(v)$.
Therefore, $\TrMorph\colon\Sigma^*\to\TrMon$ is a morphism.

\subsection{\twoDMTla to \oRTE}\label{sec:2WSTto-oRTE}

We prove in this section that from an \twoDMTla $\A$ we can construct an
equivalent \oRTE. The proof follows the ideas already used for finite words in
Section~\ref{sec:2DFT-RTE}.  We will use the following generalization to
infinite words of the unambiguous forest factorization Theorem~\ref{thm:U-forest}.

\begin{theorem}[Unambiguous Forest Factorization \cite{GastinKrishna-UFF}]\label{thm:U-forest-omega}
  Let $\varphi\colon\Sigma^*\to S$ be a morphism to a finite monoid
  $(S,\cdot,\unitS)$.  There is an \emph{unambiguous} rational expression
  $G=\bigcup_{k=1}^{m}F_k\cdot G_k^\omega$ over $\Sigma$ such that
  $\Lang{G}=\Sigma^\omega$ and for all $1\leq k\leq m$ the expressions $F_k$ and
  $G_k$ are $\varepsilon$-free $\varphi$-good rational expressions and $s_{G_k}$
  is an idempotent, where $\varphi(G_k)=\{s_{G_k}\}$.
\end{theorem}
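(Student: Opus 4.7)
The plan is to reduce Theorem~\ref{thm:U-forest-omega} to the finite-word version (Theorem~\ref{thm:U-forest}) via a Ramsey-type argument, together with a careful disambiguation step. The starting point is the classical observation (the Ramsey/Büchi lemma for finite monoids): for every $w\in\Sigma^\omega$ there exist elements $s,e\in S$ with $e\cdot e=e$ and $s\cdot e=s$, together with a factorization $w=u_0 u_1 u_2\cdots$ such that $u_0\in\Sigma^{*}$, $u_i\in\Sigma^{+}$ for $i\geq 1$, $\varphi(u_0)=s$ and $\varphi(u_i)=e$ for all $i\geq 1$. Let $P=\{(s,e)\in S\times S\mid e\cdot e=e,\ s\cdot e=s\}$; this is finite, and the family of languages $L_{(s,e)}=\varphi^{-1}(s)\cdot(\varphi^{-1}(e)\cap\Sigma^{+})^{\omega}$ for $(s,e)\in P$ covers $\Sigma^\omega$.

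Next I would apply Theorem~\ref{thm:U-forest} to $\varphi$, producing for every $t\in S$ an $\varepsilon$-free $\varphi$-good rational expression $F_t$ with $\Lang{F_t}=\varphi^{-1}(t)\setminus\{\varepsilon\}$. Since $e$ is idempotent, $F_e$ satisfies the idempotency requirement needed by the third clause of goodness, so the candidate expression for $L_{(s,e)}$ is $F_s\cdot F_e^{\omega}$. Moreover, for the expression $F_s\cdot F_e^{\omega}$ the subexpressions $F_s$ and $F_e$ are already good as rational expressions on finite words, which is exactly what the statement of Theorem~\ref{thm:U-forest-omega} requires of the $F_k$ and $G_k$.

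The main obstacle, and the heart of the argument, is to turn the covering $\bigcup_{(s,e)\in P} F_s\cdot F_e^{\omega}$ into an \emph{unambiguous} union. Two sources of ambiguity must be ruled out. First, an infinite word $w$ may belong to $L_{(s,e)}$ and also to $L_{(s',e')}$ for distinct pairs; I would handle this by fixing an arbitrary total order on $P$ and, for each $w$, retaining only the smallest pair $(s,e)$ for which $w$ admits a Ramsey factorization. This amounts to intersecting $L_{(s,e)}$ with the complement of the union of lower $L_{(s',e')}$'s, which is $\omega$-regular and can be pushed back into the rational expression by refining the monoid $S$ (replacing $\varphi$ by its product with the syntactic morphism of the relevant $\omega$-regular language and re-applying Theorem~\ref{thm:U-forest}). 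Second, even within a single $L_{(s,e)}$ the factorization $w=u_0 u_1 u_2\cdots$ is not unique; I would force uniqueness by greedily cutting at the shortest admissible prefix at each stage, which is a regular condition and again can be enforced by further refining the monoid before invoking Theorem~\ref{thm:U-forest}. After these refinements, for each chosen pair $(s,e)$ the prefix $u_0$ is uniquely determined and each $u_i$ is the shortest non-empty prefix of the remaining suffix that lies in $\varphi^{-1}(e)$ and leaves an infinite tail still factorizable.

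Once disambiguation is in place, I would enumerate the surviving pairs as $(s_1,e_1),\ldots,(s_m,e_m)$, set $F_k=F_{s_k}$ and $G_k=F_{e_k}$ (obtained from Theorem~\ref{thm:U-forest} applied to the refined morphism), and conclude that $G=\bigcup_{k=1}^{m} F_k\cdot G_k^{\omega}$ is unambiguous with $\Lang{G}=\Sigma^\omega$, each $F_k$ and $G_k$ $\varepsilon$-free and good, and $s_{G_k}=\varphi(\Lang{G_k})=\{e_k\}$ idempotent. The delicate part is not the Ramsey step itself but verifying that the greedy-cut disambiguation indeed produces unambiguous $\omega$-iterations $G_k^{\omega}$ and unambiguous concatenations $F_k\cdot G_k^{\omega}$; this is where the full strength of the finite-word unambiguous factorization theorem is needed, applied to the refined morphism that encodes both the choice of canonical pair and the greedy-cut condition.
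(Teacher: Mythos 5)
The paper does not actually prove Theorem~\ref{thm:U-forest-omega}: both the finite and the infinite versions of the unambiguous forest factorization theorem are imported as black boxes from \cite{GastinKrishna-UFF}, whose proof follows Colcombet-style deterministic factorization forests. So there is no in-paper proof to compare against; your proposal has to stand on its own, and as written it has a genuine gap.

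The gap is the disambiguation step, which is the entire mathematical content of the theorem. Your plan sets $G_k=F_{e_k}$, the good expression for the \emph{full} preimage of an idempotent under a (possibly refined) morphism. But $\bigl(\varphi^{-1}(e)\bigr)^\omega$ is essentially never an unambiguous $\omega$-iteration: since $\varphi^{-1}(e)$ is closed under concatenation (as $e$ is idempotent), whenever $u,v\in\varphi^{-1}(e)$ the word $uvuv\cdots$ factors both as $(u)(v)(u)(v)\cdots$ and as $(uv)(uv)\cdots$. Refining the monoid does not help, because the same closure property holds for any idempotent of any refinement; what is actually needed is that $G_k$ denote a carefully chosen proper subset of the preimage (a prefix-code-like language) that is still $\varphi$-good, and producing such subsets is precisely what the inductive forest-factorization argument delivers. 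Your greedy-cut repair acknowledges this but does not establish it: the language $K$ of greedy blocks is defined by a positional/contextual condition (``shortest prefix with image $e$ whose tail remains factorizable''), so it is not the preimage of any single element of a refined monoid, and Theorem~\ref{thm:U-forest} only hands you good expressions for full preimages. Moreover, even granting that $K$ is regular and $\varphi$-good, the unambiguity of $K^\omega$ does not follow from greediness alone --- greedy parsing selects \emph{a} canonical factorization, whereas unambiguity requires it to be the \emph{only} factorization into $K$-blocks, and a non-greedy block of one suffix can be the greedy block of another. A similar regress afflicts your treatment of the union: after subtracting the lower $L_{(s',e')}$'s you obtain an $\omega$-regular language that is no longer of the form $F\cdot G^\omega$ with good $F,G$, and re-decomposing it over linked pairs of a refined monoid simply restarts the problem. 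The Ramsey step and the appeal to the finite-word theorem are fine; the claim that one monoid refinement plus a greedy convention yields unambiguous concatenations and $\omega$-iterations is the unproved (and, for the iteration, false as stated) core.
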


We will apply this theorem to the morphism $\TrMorph\colon\Sigma^*\to\TrMon$
defined in Section~\ref{sec:TrM-2DMTla}.  We use the unambiguous expression
$G=\bigcup_{k=1}^{m}F_k\cdot G_k^\omega$ as a \emph{guide} when constructing
\oRTEs corresponding to the \twoDMTla $\A$.

\begin{lemma}\label{lem:C_F^p}
  Let $G$ be an $\varepsilon$-free $\TrMorph$-good rational expression and let
  $\TrMorph(G)=\sigma_G=(r_G^p,b_G^p,s_G^p)_{p\in P}$ be the corresponding
  element of the transition monoid $\TrMon$ of $(\A,\Bb)$.  For each state $p\in
  P$, we can construct a map $C_G^p\colon s_G^p\to\RTE$ such that for each step
  $x=(q,d,X,q')\in s_G^p$ the following invariants hold:
  \begin{enumerate}[nosep,label=($\mathsf{J}_{\arabic*}$),ref=$\mathsf{J}_{\arabic*}$]
    \item\label{oInv1} $\dom{C_G^p(x)}=\Lang{G}$,

    \item\label{oInv2} for each $u\in\Lang{G}$, $\sem{C_G^p(x)}(u)$ is the output
    produced by $\tilde{\A}$ when running step $x$ on $\tilde{u}^p$ (i.e., running
    $\tilde{\A}$ on $\tilde{u}^p$ from $q$ to $q'$ following direction $d$).
  \end{enumerate}
\end{lemma}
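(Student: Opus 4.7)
The plan is structural induction on the $\varepsilon$-free $\TrMorph$-good expression $G$, closely following the template of Lemma~\ref{lem:C_E} but with the added bookkeeping of the \BDBA decoration. The key observation is a concatenation rule for decorations: if $u = u_1 u_2$ and $u_2 \in \Lang{E}$ for some subexpression $E$ with $\TrMorph(E) = (r_E^{p'}, b_E^{p'}, s_E^{p'})_{p' \in P}$, then $\tilde{u}^p = \tilde{u_1}^{r_E^p} \cdot \tilde{u_2}^p$, since the backward determinism of $\Bb$ forces the \BDBA state at the boundary to depend only on $u_2$ and $p$. Goodness of $G$ ensures that $r_E^p$ is independent of the particular $u_2 \in \Lang{E}$ chosen, so the decoration to be used at any boundary is determined by the expression alone.

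For the base case $G = a \in \Sigma$, the determinism of $\tilde{\A}$ gives a unique step on $\tilde{a}^p = (a, p)$ of the form $x = (q, d, \{q\}, q')$ with output $\gamma$, and we set $C_a^p(x) = \Ifthenelse{a}{\gamma}{\bot}$. For $G = G_1 \cup G_2$, goodness forces $s_{G_1}^p = s_{G_2}^p = s_G^p$, and we take $C_G^p(x) = \Ifthenelse{G_1}{C_{G_1}^p(x)}{C_{G_2}^p(x)}$. For $G = G_1 \cdot G_2$, the product rule of $\TrMon$ yields $s_G^p = s_{G_1}^{r_{G_2}^p} \cdot s_{G_2}^p$, so any step $x \in s_G^p$ decomposes uniquely into an alternating zigzag of steps in $s_{G_1}^{r_{G_2}^p}$ and $s_{G_2}^p$ (with the $X$-component of $x$ being the union of the intermediate $X$-components). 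We combine the corresponding subexpressions $C_{G_1}^{r_{G_2}^p}(\cdot)$ and $C_{G_2}^p(\cdot)$ via Cauchy products $\lrdot$ and Hadamard products $\odot$ exactly as in Lemma~\ref{lem:C_E}, inserting $\Ifthenelse{G_i}{\varepsilon}{\bot}$ as a domain guard when a side is not traversed by $x$, and handling the symmetric cases $d \in \{\leftleft, \rightright, \rightleft\}$ analogously.

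The Kleene-plus case $G = F^+$ is where idempotency of $\sigma_F = \sigma_G$ is used crucially: we have $r_F^{r_F^p} = r_F^p$, so for any factorization $u = u_1 \cdots u_n$ with $u_i \in \Lang{F}$, the decoration stabilizes as $\tilde{u}^p = \tilde{u_1}^{r_F^p} \cdots \tilde{u_{n-1}}^{r_F^p} \tilde{u_n}^p$; all factors except the last one share the decoration $r_F^p$, while the last uses $p$. To mimic the Lemma~\ref{lem:C_E} construction while respecting these two decorations, I would first split on whether $n = 1$ via $\Ifthenelse{F}{C_F^p(x)}{\cdot}$. In the $n \geq 2$ branch, I would peel the last factor using the unambiguous decomposition $\Lang{F^+} = \Lang{F} \cup \Lang{F^+ \cdot F}$: for $u = w \cdot u_n$ with $w \in \Lang{F^+}$ and $u_n \in \Lang{F}$, every factor of $w$ is decorated by $r_F^p$ while $u_n$ is decorated by $p$. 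I would then reproduce the head piece, the interior 2-chained iteration $\twoplus{F}{\cdot}$, and the boundary pair contribution from Lemma~\ref{lem:C_E}, using $C_F^{r_F^p}$ for the steps touching factors of $w$ and $C_F^p$ for the steps touching $u_n$, combined by $\lrdot$ and $\odot$.

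The main obstacle is precisely this Kleene-plus case: one must carefully arrange the combinators so that each factor of the unambiguous decomposition receives its correct \BDBA decoration and that the output contributions from the zigzag are glued together in the right order. Once this is done, verification of (\ref{oInv1}) and (\ref{oInv2}) proceeds routinely as in Lemma~\ref{lem:C_E}, using the inductive hypothesis on the sub-maps $C_F^{p'}$ and the unambiguity properties guaranteed by goodness of $G$.
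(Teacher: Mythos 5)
Your proposal follows essentially the same route as the paper: structural induction guided by the decorated transition monoid, using the decoration-concatenation rule $\tilde{u_1u_2}^p=\tilde{u_1}^{r^p}\tilde{u_2}^p$ from backward determinism, goodness to fix boundary decorations, and idempotency ($r^{r^p}=r^p$) to stabilize them in the Kleene-plus. The only organizational difference is in the Kleene-plus case for $p\neq r^p$: the paper first builds $C_{F^+}^{p'}$ for the stable state $p'=r^p$ (where the construction of Lemma~\ref{lem:C_E} carries over with a uniform decoration) and then obtains $C_{F^+}^{p}$ by composing that expression, applied to the prefix $u_1\cdots u_{n-1}\in F^+$, with $C_F^p$ on the peeled last factor, whereas you inline the head piece, the interior 2-chained iteration and the mixed-decoration boundary pair directly — both packagings are correct.
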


\begin{proof}
  The proof is by structural induction on the rational expression.  For each
  subexpression $E$ of $G$ we let
  $\TrMorph(E)=\sigma_E=(r_E^p,b_E^p,s_E^p)_{p\in P}$ be the corresponding
  element of the transition monoid $\TrMon$ of $(\A,\Bb)$.  We start with atomic
  regular expressions.  Since $G$ is $\varepsilon$-free and $\emptyset$-free, we
  do not need to consider $E=\varepsilon$ or $E=\emptyset$.  The construction is
  similar to the one given in Section~\ref{sec:2DFT-RTE}.  The interesting cases
  are concatenation and Kleene-plus.

  \begin{description}
    \item[atomic] Assume that $E=a\in\Sigma$ is an atomic subexpression.  Notice
    that $\tilde{a}^p=(a,p)$ for all $p\in P$.  Since the \twoDMT $\tilde{\A}$ is
    deterministic and complete, for each state $q\in Q$ we have
    \begin{itemize}[nosep]
      \item either $\tilde{\delta}(q,(a,p))=(q',\gamma,1)$ and we let
      $C_a^p((q,\leftright,\{q\},q'))=C_a^p((q,\rightright,\{q\},q'))=\Ifthenelse{a}{\gamma}\bot$,
    
      \item or $\tilde{\delta}(q,(a,p))=(q',\gamma,-1)$ and we let
      $C_a^p((q,\leftleft,\{q\},q'))=C_a^p((q,\rightleft,\{q\},q'))=\Ifthenelse{a}{\gamma}\bot$.
    \end{itemize}
    Clearly, invariants \eqref{oInv1} and \eqref{oInv2} hold for all 
    $x\in s_E^p$.
    
    \item[Union] Assume that $E=E_1\cup E_2$.  Since $E$ is good,
    we deduce that
    $\sigma_E=\sigma_{E_1}=\sigma_{E_2}$.  For each $p\in P$ and $x\in s_E^p$ we
    define $C_E^p(x)=\Ifthenelse{E_1}{C_{E_1}^p(x)}{C_{E_2}^p(x)}$.  Since $E$ is
    unambiguous we have $\Lang{E_1}\cap\Lang{E_2}=\emptyset$.  As in
    Section~\ref{sec:2DFT-RTE} we can prove easily that invariants \eqref{oInv1}
    and \eqref{oInv2} hold for all $x\in s_E^p$.

    \item[concatenation] Assume that $E=E_1\cdot E_2$ is a concatenation.  
    Since $E$ is good, we deduce that
    $\sigma_E=\sigma_{E_1}\cdot \sigma_{E_2}$. Let $p\in P$ and $p_1=r_{E_2}^p$. 
    We have $s_E^p = s_{E_1}^{p_1}\cdot s_{E_2}^p$. Let $x\in s_E^p$.

      If $x=(q,\leftright,X,q')$ then, by definition of the product in the 
      transition monoid $\tilde{\TrMon}$, there is a unique sequence of steps 
      $x_1=(q,\leftright,X_1,q_1)$,
      $x_2=(q_1,\leftleft,X_2,q_2)$,
      $x_3=(q_2,\rightright,X_3,q_3)$,
      $x_4=(q_3,\leftleft,X_4,q_4)$, \ldots,
      $x_i=(q_{i-1},\rightright,X_i,q_i)$,
      $x_{i+1}=(q_i,\leftright,X_{i+1},q')$ 
      with $i\geq1$, 
      $x_1,x_3,\ldots,x_i\in s_{E_1}^{p_1}$ and
      $x_2,x_4,\ldots,x_{i+1}\in s_{E_2}^p$ 
      and $X=X_1\cup\cdots\cup X_{i+1}$ (see Figure~\ref{fig:CFp} top left).
      We define
      $$
      C_E^p(x)=(C_{E_1}^{p_1}(x_1)\lrdot C_{E_2}^{p}(x_2))\odot
      (C_{E_1}^{p_1}(x_3)\lrdot C_{E_2}^{p}(x_4))\odot
      \cdots\odot(C_{E_1}^{p_1}(x_i)\lrdot C_{E_2}^{p}(x_{i+1})) \,.
      $$
      Notice that when $i=1$ we  have $C_E^p(x)=C_{E_1}^{p_1}(x_1)\lrdot 
      C_{E_2}^{p}(x_2)$ with $x_2=(q_1,\leftright,X_2,q')$.
      
      The concatenation $\Lang{E}=\Lang{E_1}\cdot\Lang{E_2}$ is unambiguous.
      Therefore, for all $y\in s_{E_1}^{p_1}$ and $z\in s_{E_2}^{p}$, using \eqref{oInv1} for
      $E_1$ and $E_2$, we obtain $\dom{C_{E_1}^{p_1}(y)\lrdot C_{E_2}^{p}(z)}=\Lang{E}$.  We
      deduce that $\dom{C_E(x)}=\Lang{E}$ and \eqref{oInv1} holds for $E$ and $x=(q,\leftright,X,q')$.

      Now, let $u\in\Lang{E}$ and let $u=u_1u_2$ be its unique factorization with
      $u_1\in\Lang{E_1}$ and $u_2\in\Lang{E_2}$.  We have
      $\tilde{u_1u_2}^p=\tilde{u_1}^{p_1}\cdot\tilde{u_2}^p$.  Hence, the step
      $x=(q,\leftright,X,q')$ performed by $\tilde{\A}$ on $\tilde{u}^p$ is
      actually the concatenation of steps $x_1$ on $\tilde{u_1}^{p_1}$, followed
      by $x_2$ on $\tilde{u_2}^p$, followed by $x_3$ on $\tilde{u_1}^{p_1}$,
      followed by $x_4$ on $\tilde{u_2}^p$, \ldots, until $x_{i+1}$ on
      $\tilde{u_2}^p$.  Using \eqref{oInv2} for $E_1$ and $E_2$, we deduce that
      the output produced by $\tilde{\A}$ while making step $x$ on $\tilde{u}^p$
      is
      $$
      \sem{C_{E_1}^{p_1}(x_1)}(u_1) \cdot \sem{C_{E_2}^{p}(x_2)}(u_2) 
      \cdots
      \sem{C_{E_1}^{p_1}(x_i)}(u_1) \cdot \sem{C_{E_2}^{p}(x_{i+1})}(u_2)
      = \sem{C_E^p(x)}(u) \,.
      $$
      Therefore, \eqref{oInv2} holds for $E$ and step $x=(q,\leftright,X,q')$.
      The proof is obtained mutatis mutandis for the other cases 
      $x=(q,\leftleft,X,q')$ or $x=(q,\rightright,X,q')$ or 
      $x=(q,\rightleft,X,q')$.
      
      \item[Kleene-plus] Assume that $E=F^+$.  Since $E$ is good,
      we deduce that 
      $\sigma_E=\sigma_F=\sigma=(r^p,b^p,s^p)_{p\in P}$ is an
      idempotent of the transition monoid $\TrMon$.  Notice that for all $p\in 
      P$, since $\sigma$ is an idempotent, we have $r^{r^p}=r^p$.
      
      We first define $C_E^p$ for states $p\in P$ such that $p=r^p$.  Let $x\in s^p$.
      \begin{itemize}[nosep]
        \item If $x=(q,\leftleft,X,q')$.  Since $F^+$ is unambiguous, a word
        $u\in\Lang{F^+}$ admits a unique factorization $u=u_1u_2\cdots u_n$ with
        $n\geq1$ and $u_i\in\Lang{F}$.  Now, $\TrMorph(u_i)=\sigma$ for all $1\leq
        i \leq n$ and since $p=r^p$ we deduce that
        $\tilde{u}^p=\tilde{u_1}^p\tilde{u_2}^p\cdots\tilde{u_n}^p$.  Since
        $x=(q,\leftleft,X,q')\in s^p$, the unique run $\rho$ of $\tilde{\A}$
        starting in state $q$ on the left of $\tilde{u_1}^p$ exits on the left in
        state $q'$.  Therefore, the unique run of $\tilde{\A}$ starting in state
        $q$ on the left of $\tilde{u}^p$ only visits $\tilde{u_1}^p$ and is
        actually $\rho$ itself.  Therefore, we set
        $C_E^p(x)=C_F^p(x)\lrdot(\Ifthenelse{F^*}{\varepsilon}\bot)$ and we can easily
        check that (\ref{oInv1}--\ref{oInv2}) are satisfied.
        
        \item Similarly for $x=(q,\rightright,X,q')$ 
        we set $C_E^p(x)=(\Ifthenelse{F^*}{\varepsilon}\bot)\lrdot C_F^p(x)$.
        
        \item If $x=(q,\leftright,X,q')$.  Since $\sigma$ is an idempotent, we
        have $x\in s^p\cdot s^p$.  We distinguish two cases depending on whether 
        the step $y\in s^p$ starting in state $q'$ from the left goes to the right or goes 
        back to the left.
        
        First, if $y=(q',\leftright,X_2,q_2)\in s^p$ goes to the right. Since $s^p$ 
        is an idempotent, following $x$ in $s^p\cdot s^p$ is same as following 
        $x$ in (the first) $s^p$ an then $y$ in (the second) $s^p$. Therefore, we 
        must have $q_2=q'$ and $X_2\subseteq X$. In this case, we set
        $C_E^p(x)=\Ifthenelse{F}{C_F^p(x)}{\big(C_F^p(x)\lrdot\lrplus{(C_F^p(y))}\big)}$.
        
        Second, if $y=(q',\leftleft,X_2,q_2)\in s^p$ goes to the left. Since $s^p$ 
        is an idempotent, there exists a unique sequence of steps in $s^p$:
        $x_1=x$,
        $x_2=y$,
        $x_3=(q_2,\rightright,X_3,q_3)$,
        $x_4=(q_3,\leftleft,X_4,q_4)$, \ldots,
        $x_i=(q_{i-1},\rightright,X_i,q_i)$,
        $x_{i+1}=(q_i,\leftright,X_{i+1},q')$ 
        with $i\geq3$ (see Figure~\ref{fig:CFp} middle). 
        We define
        \begin{align*}
          C_E^p(x) &= 
          \big(C_F^p(x)\lrdot(\Ifthenelse{F^*}{\varepsilon}\bot)\big)
          \odot\twoplus{F}{C'}
          \\
          C' &= 
          \big((\Ifthenelse{F}{\varepsilon}\bot)\lrdot C_F^p(x_2)\big)
          \odot(C_F^p(x_3)\lrdot C_F^p(x_4))\odot
          \cdots\odot(C_F^p(x_i)\lrdot C_F^p(x_{i+1})) 
        \end{align*}
        The proof of correctness, i.e., that (\ref{oInv1}--\ref{oInv2}) are 
        satisfied for $E$, is as in Section~\ref{sec:2DFT-RTE}.

        \item If $x=(q,\rightleft,X,q')$, the proof is obtained mutatis 
        mutandis, using the backward unambiguous (2-chained) Kleene-plus 
        $\rlplus{C}$ and $\rltwoplus{K}{C}$.
      \end{itemize}
      
      \medskip
      Now, we consider $p\in P$ with $r^p\neq p$.  We let $p'=r^p$.  We have
      already noticed that since $\sigma$ is idempotent we have $r^{p'}=p'$. 
      Consider a word $u\in\Lang{F^+}$.  Since $F^+$ is unambiguous, $u$ admits a
      unique factorization $u=u_1\cdots u_{n-1} u_n$ with $n\geq1$ and
      $u_i\in\Lang{F}$.  Now, $\TrMorph(u_i)=\sigma$ for all $1\leq i \leq n$.
      Using $r^p=p'$ and $r^{p'}=p'$ we deduce that
      $\tilde{u}^p=\tilde{u_1}^{p'}\cdots\tilde{u_{n-1}}^{p'}\tilde{u_n}^p$.  So
      when $n>1$, the expression $C_E^p$ that we need to compute is like the
      concatenation of $C_E^{p'}$ on the first $n-1$ factors with $C_F^p$ on the
      last factor. Since $r^{p'}=p'$ we have already seen how to compute 
      $C_E^{p'}$. We also know how to handle concatenation. So it should be no 
      surprise that we can compute $C_E^p$ when $p\neq r^p$.
      We define now formally $C_E^p(x)$ for $x\in s^p$.
      \begin{itemize}[nosep]
        \item If $x=(q,\leftleft,X,q')\in s^p$.  There are two cases depending on
        whether the step $y\in s^{p'}$ starting in state $q$ from the left goes
        back to the left or goes to the right.
        
        If it goes back to the left, then $y=(q,\leftleft,X,q')=x$ since
        $s^{p}=s^{p'}\cdot s^{p}$ (recall that $\sigma$ is idempotent) and we
        define
        $$
        C_E^p(x)=\Ifthenelse{F}{C_F^p(x)}{(C_F^{p'}(x)\lrdot(\Ifthenelse{F^+}{\varepsilon}\bot))} \,.
        $$
        If it goes to the right, then $y=(q,\leftright,X_1,q_1)$ and 
        there exists a unique sequence of steps:
        $x_1=y$,
        $x_2=(q_1,\leftleft,X_2,q_2)$,
        $x_3=(q_2,\rightright,X_3,q_3)$,
        $x_4=(q_3,\leftleft,X_4,q_4)$, \ldots,
        $x_i=(q_{i-1},\rightleft,X_i,q')$
        with $i\geq3$, $x_1,x_3,\ldots,x_i\in s^{p'}$ and $x_2,\ldots,x_{i-1}\in
        s^p$ (see Figure~\ref{fig:CFp} top right).  Notice that
        $X=X_1\cup\cdots\cup X_i$.  We define $C_E^p(x) =
        \Ifthenelse{F}{C_F^p(x)}{C'}$ where
        \begin{align*}
          C' &= 
          (C_E^{p'}(x_1)\lrdot C_F^p(x_2))\odot\cdots\odot
          (C_E^{p'}(x_{i-2})\lrdot C_F^p(x_{i-1}))\odot
          \big(C_E^{p'}(x_i)\lrdot (\Ifthenelse{F}{\varepsilon}\bot)\big) \,.
        \end{align*}
        We can check that (\ref{oInv1}--\ref{oInv2}) are 
        satisfied for $(E,p,x)$.

        \item If $x=(q,\rightleft,X,q')\in s^p$.  There are two cases depending on
        whether the step $y\in s^{p'}$ starting in state $q'$ from the right goes
        to the left or goes back to the right.
        
        If it goes to the left, then $y=(q',\rightleft,X',q')$ with 
        $X'\subseteq X$ and we define
        $$
        C_E^p(x)=\Ifthenelse{F}{C_F^p(x)}{(C_E^{p'}(y) \rldot C_F^{p}(x))} \,.
        $$
        If it goes back to the right, then $y=(q',\rightright,X_2,q_2)$ and 
        there exists a unique sequence of steps:
        $x_1=x$,
        $x_2=y$,
        $x_3=(q_2,\leftleft,X_3,q_3)$,
        $x_4=(q_3,\rightright,X_4,q_4)$, \ldots,
        $x_i=(q_{i-1},\leftleft,X_i,q_i)$
        $x_{i+1}=(q_i,\rightleft,X_{i+1},q')$
        with $i\geq3$, $x_1,x_3,\ldots,x_i\in s^{p}$ and $x_2,\ldots,x_{i+1}\in 
        s^{p'}$. Notice that $X_2\cup\cdots\cup X_{i+1}\subseteq X$.  We define
        $C_E^p(x) = \Ifthenelse{F}{C_F^p(x)}{C'}$ where
        \begin{align*}
          C' &= 
          (C_E^{p'}(x_2)\rldot C_F^p(x_1))\odot\cdots\odot
          (C_E^{p'}(x_{i-1})\rldot C_F^p(x_{i-2}))\odot
          (C_E^{p'}(x_{i+1})\rldot C_F^p(x_{i})) \,.
        \end{align*}
        We can check that (\ref{oInv1}--\ref{oInv2}) are 
        satisfied for $(E,p,x)$.
      
        \item The cases $x=(q,\leftright,X,q')\in s^p$ and 
        $x=(q,\rightright,X,q')\in s^p$ can be handled similarly.
        \qedhere
      \end{itemize}
  \end{description}
\end{proof}

\begin{figure}[tbp]
  \centering
  \includegraphics[scale=0.16]{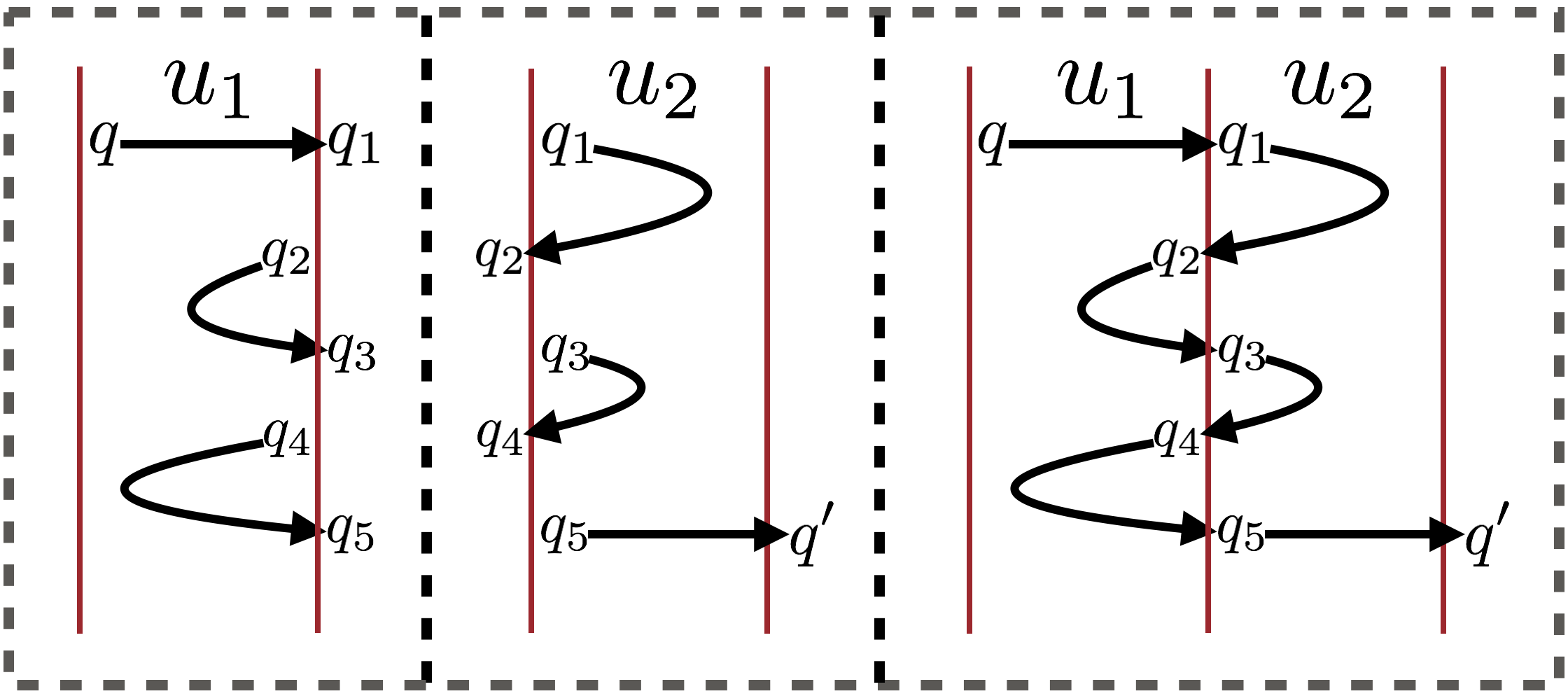}
  \hfill
  \includegraphics[scale=0.16]{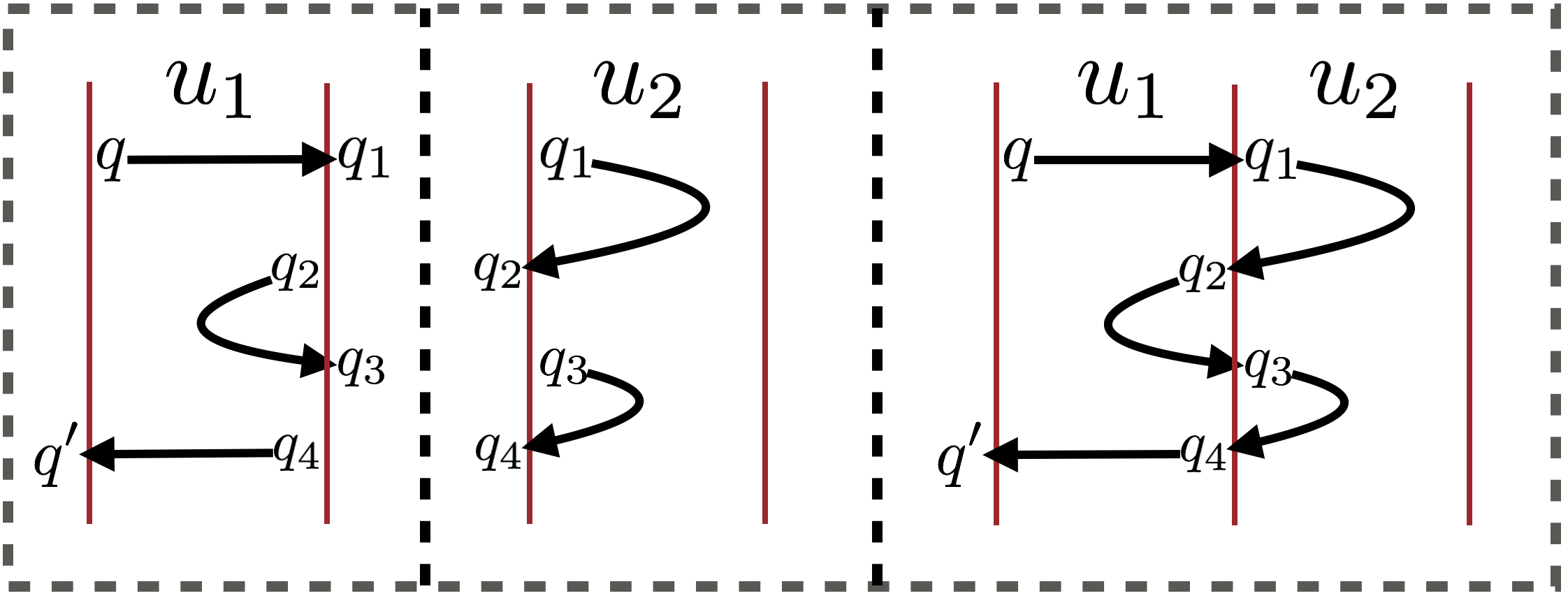}

  \medskip
  \includegraphics[scale=0.27]{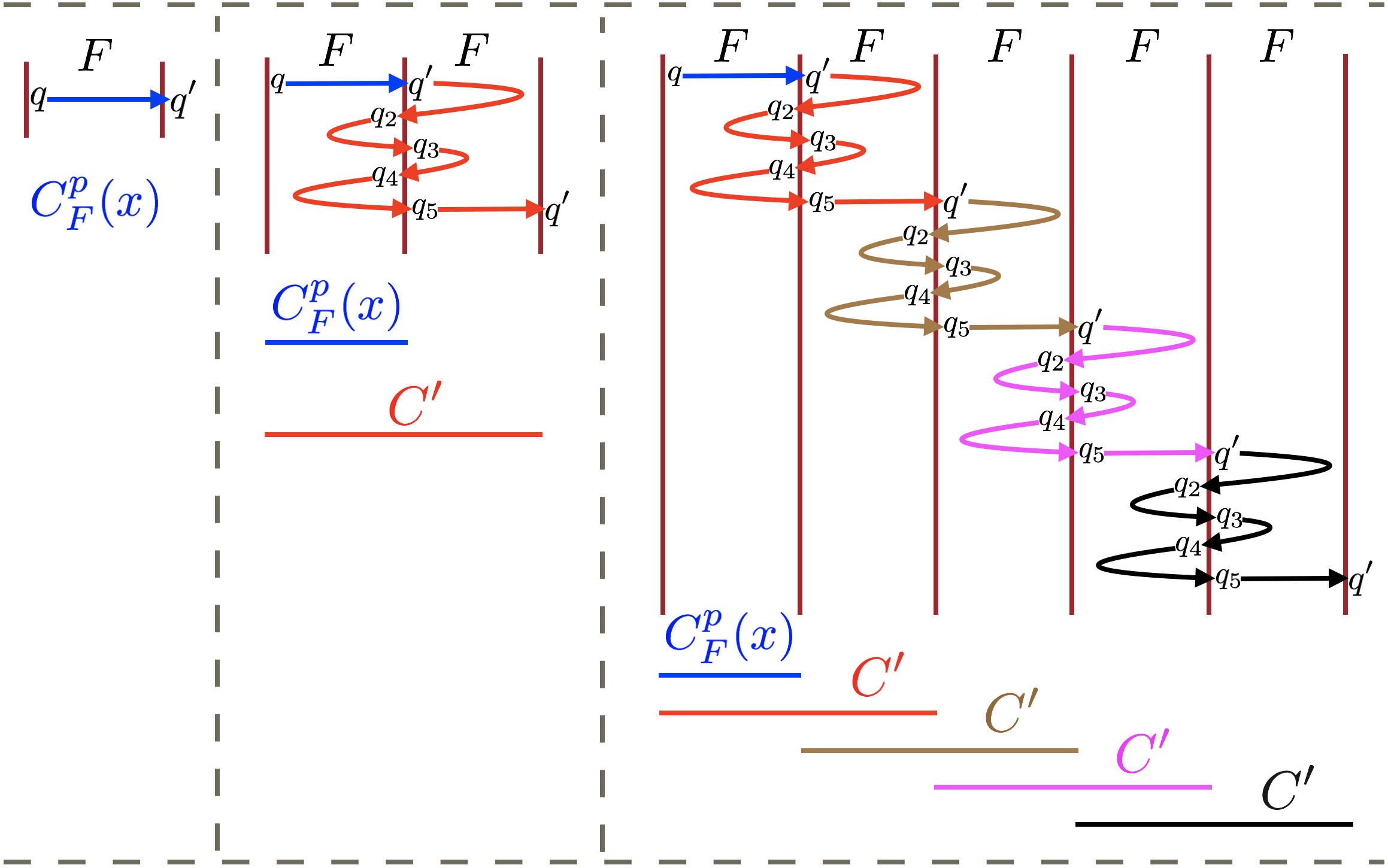}

  \medskip
  \includegraphics[scale=0.16]{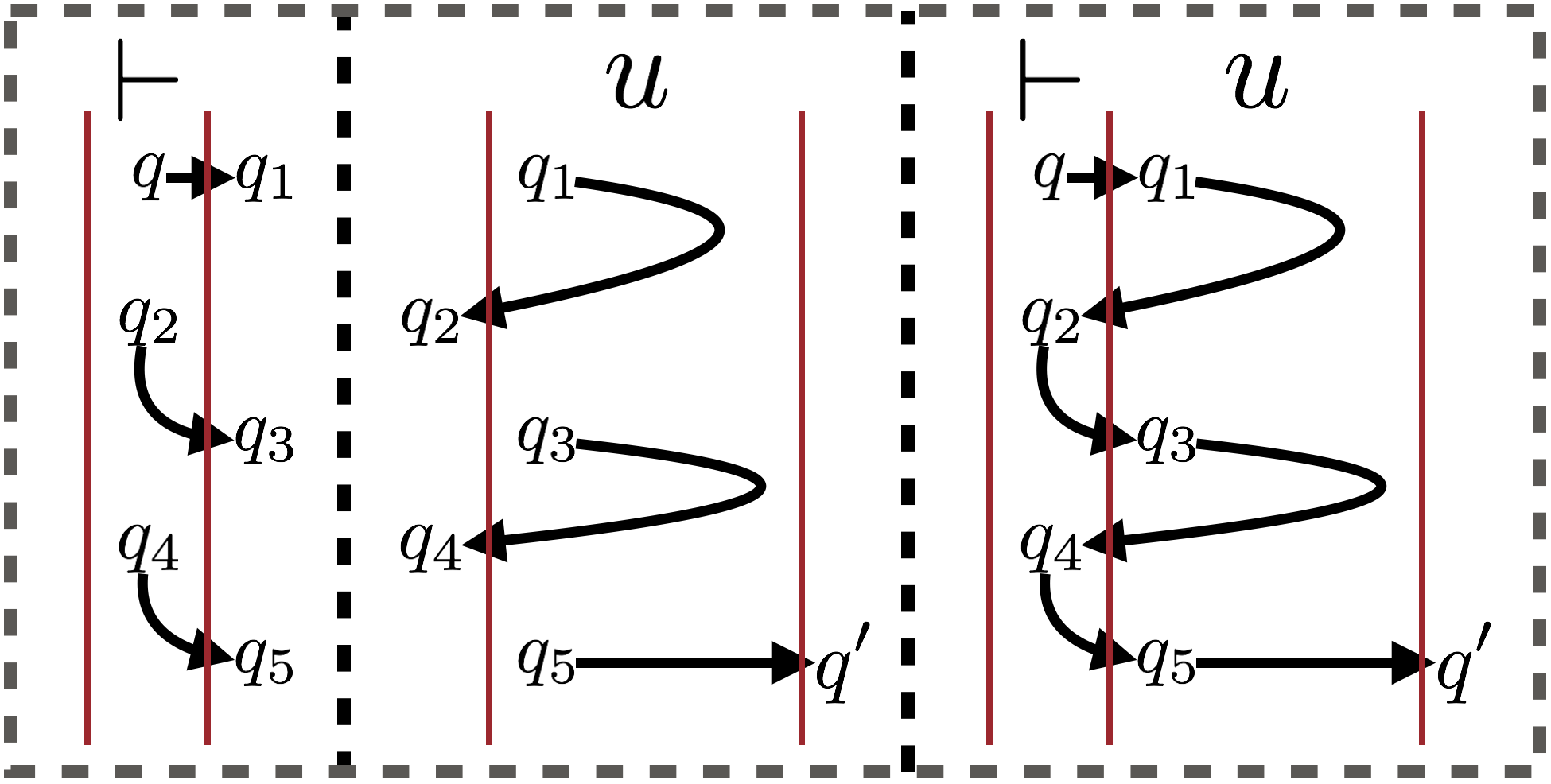}
  \hfill
  \includegraphics[scale=0.16]{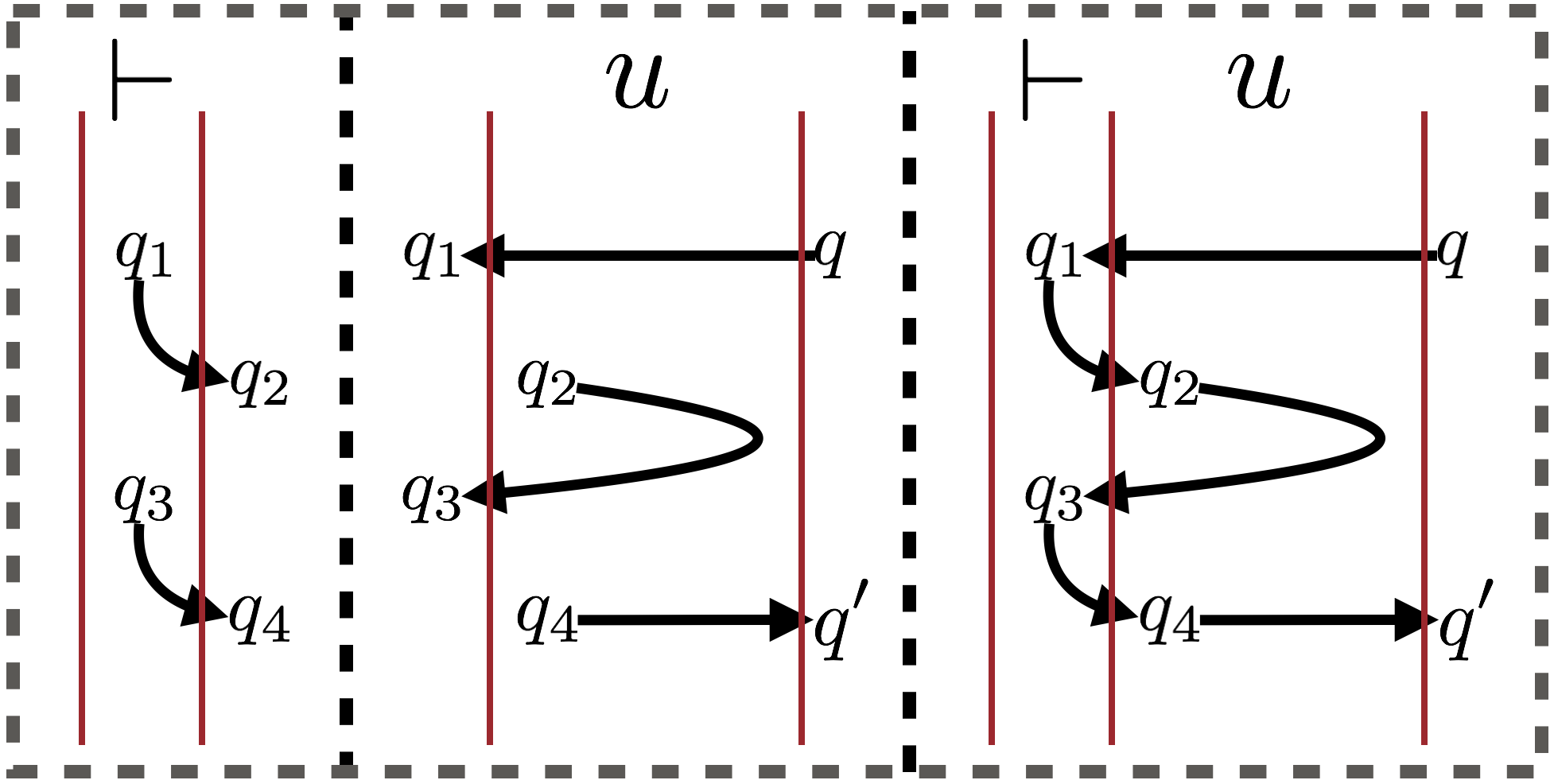}
  \caption{}
  \label{fig:CFp}
\end{figure}

We now define \RTEs corresponding to the left part of the computation of the
\twoDMTla $\A$, i.e., on some input $\leftend u$ consisting of the left
end-marker and some finite word $u\in\Sigma^+$.  As before, the look-ahead is
determined by the state of the \BDBA $\Bb$.

\begin{lemma}\label{lem:C_leftend_F^p}
  Let $F$ be an $\varepsilon$-free $\TrMorph$-good rational expression.
  For each state $p\in P$ and $q\in Q$, there is a unique state $q'\in Q$ and an
  \RTE $C_{\leftend F}^p(q,\leftright,q')$ (resp.\ $C_{\leftend
  F}^p(q,\rightright,q')$) such that the following invariants hold:
  \begin{enumerate}[nosep,label=($\mathsf{K}_{\arabic*}$),ref=$\mathsf{K}_{\arabic*}$]
    \item\label{kInv1} $\Lang{F}=\dom{C_{\leftend F}^p(q,\leftright,q')}$
    (resp.\ $\Lang{F}=\dom{C_{\leftend F}^p(q,\rightright,q')}$),

    \item\label{kInv2} for each $u\in\Lang{F}$, 
    $\sem{C_{\leftend F}^p(q,\leftright,q')}(u)$ (resp.\ 
    $\sem{C_{\leftend F}^p(q,\rightright,q')}(u)$) is the output produced by $\tilde{\A}$
    on $\tilde{\leftend u}^p$ when starting on the left (resp.\ right) in state $q$ until it
    exists on the right in state $q'$.
  \end{enumerate}
\end{lemma}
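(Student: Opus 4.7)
}
The plan is to combine the per-step \RTEs produced by Lemma~\ref{lem:C_F^p} with the (deterministic) behaviour of $\tilde{\A}$ at the left end-marker, exactly mirroring the corresponding construction done in the finite-word case in the proof of Theorem~\ref{thm:2DFT=RTE}\eqref{item:2DFTtoRTE}. The key observation is that, since $F$ is $\TrMorph$-good, every $u\in\Lang{F}$ has the same image $\TrMorph(u)=\sigma_F=(r_F^p,b_F^p,s_F^p)_{p\in P}$, so the state of $\Bb$ sitting immediately to the right of $\leftend$ is the fixed state $p':=r_F^p$ uniformly for all $u\in\Lang{F}$. Hence the transitions of $\tilde{\A}$ read at the left end-marker are those of the letter $(\leftend,p')$, which are all right-moving (since $\tilde{\A}$ inherits from $\A$ the restriction that the head cannot move left on $\leftend$) and, by determinism of $\tilde{\A}$, completely determined by the current state.

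I first define the set $s_\leftend^p$ of left-end steps: for each $q\in Q$, the unique transition $\tilde{\delta}(q,(\leftend,p'))=(q'',\gamma_q,+1)$ yields both steps $(q,\leftright,q'')$ and $(q,\rightright,q'')$ with constant output $\gamma_q\in\Gamma^*$. Then, for $C_{\leftend F}^p(q,\leftright,q')$, I exploit determinism to read off the unique alternating sequence of steps that starts at the left of $\leftend u$ in state $q$: $y_1=(q,\leftright,q_1)\in s_\leftend^p$, then a step $x_2\in s_F^p$ in $\tilde{u}^p$, then (if it comes back) $y_3\in s_\leftend^p$, then $x_4\in s_F^p$, and so on, up to a final step $x_{i+1}\in s_F^p$ that exits on the right in some state $q'$. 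The state $q'$ is uniquely determined by $p$ and $q$ (or else no such exit ever occurs, in which case we simply set the \RTE to $\bot$). I then set
$$
C_{\leftend F}^p(q,\leftright,q')
\;=\;
\gamma_q\odot C_F^p(x_2)\odot\gamma_{q_2}\odot C_F^p(x_4)\odot\cdots\odot C_F^p(x_{i+1}),
$$
where each constant $\gamma_{q_j}\in\Gamma^*$ is used as an \RTE via Section~\ref{sec:RegExp}.

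For $C_{\leftend F}^p(q,\rightright,q')$ I proceed symmetrically. Either there is a direct step $(q,\rightright,X,q')\in s_F^p$, in which case I set $C_{\leftend F}^p(q,\rightright,q')=C_F^p(q,\rightright,X,q')$; or the unique sequence starts with $x_1=(q,\rightleft,X_1,q_1)\in s_F^p$, then $y_2=(q_1,\leftright,q_2)\in s_\leftend^p$, then $x_3\in s_F^p$, and so on, alternating until exit on the right. The resulting \RTE is then
$$
C_{\leftend F}^p(q,\rightright,q')
\;=\;
C_F^p(x_1)\odot\gamma_{q_1}\odot C_F^p(x_3)\odot\gamma_{q_3}\odot\cdots\odot C_F^p(x_{i}).
$$
The invariants $(\mathsf{K}_1)$ and $(\mathsf{K}_2)$ then follow exactly as in the finite-word case: $(\mathsf{K}_1)$ because each $C_F^p(x_j)$ has domain $\Lang{F}$ by $(\mathsf{J}_1)$ and the constants $\gamma_{q_j}$ do not restrict the domain when combined via $\odot$; $(\mathsf{K}_2)$ because, by $(\mathsf{J}_2)$, $\sem{C_F^p(x_j)}(u)$ is precisely the output produced by $\tilde{\A}$ while performing step $x_j$ on $\tilde{u}^p$, and the total output of $\tilde{\A}$ starting at the chosen end of $\tilde{\leftend u}^p$ is the concatenation of the leftend outputs $\gamma_{q_j}$ with the inner outputs $\sem{C_F^p(x_j)}(u)$ in the order dictated by the (unique) alternating sequence of steps.

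\paragraph*{Main obstacle.} The only delicate point is verifying that the $\Bb$-label $p'$ at $\leftend$ is well-defined and uniform across $\Lang{F}$: this is where the goodness of $F$ and the fact that $\TrMorph$ is a morphism to the transition monoid of the pair $(\A,\Bb)$ (and not just of $\tilde{\A}$) are crucially used. Once this is in place, determinism of $\tilde{\A}$ and finiteness of $\TrMon$ guarantee that the alternating sequence of left-end and in-$F$ steps is finite and uniquely determined, so the construction above is well-founded and the invariants follow by the same argument as in Lemma~\ref{lem:C_F^p}.
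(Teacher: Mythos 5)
Your proposal is correct and follows essentially the same route as the paper's proof: fix the uniform $\Bb$-label $p'=r_F^p$ at $\leftend$ using goodness of $F$, read off the unique alternating sequence of end-marker steps in $s_\leftend^p$ and inner steps in $s_F^p$, chain the constant outputs $\gamma_q$ with the expressions $C_F^p(x_j)$ via Hadamard products, and derive $(\mathsf{K}_1)$ and $(\mathsf{K}_2)$ from $(\mathsf{J}_1)$ and $(\mathsf{J}_2)$ of Lemma~\ref{lem:C_F^p}. The only (harmless) divergences are that you explicitly treat the case of a direct step $(q,\rightright,X,q')\in s_F^p$ that never touches $\leftend$, which the paper handles in the finite-word analogue but leaves implicit here, and that you label the end-marker step reached from the right as left-right rather than right-right, which is immaterial since both labels are put into $s_\leftend^p$.
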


\begin{proof}
  Let $\sigma=(r^p,b^p,s^p)_{p\in P}=\TrMorph(F)$.  We fix some state $p\in P$.
  For all words $u\in\Lang{F}$, we have $\tilde{\leftend
  u}^p=(\leftend,r^p)\tilde{u}^p$.  Let $s_\leftend^p$ be the set of steps
  $(q,\leftright,q'),(q,\rightright,q')$ such that
  $\tilde{\delta}(q,(\leftend,r^p))=(q',\gamma_q^p,+1)$ in $\tilde{\A}$.
  
  For each $q\in Q$, there is a unique sequence of steps
  $x_1=(q,\leftright,q_1)$,
  $x_2=(q_1,\leftleft,X_2,q_2)$,
  $x_3=(q_2,\rightright,q_3)$,
  $x_4=(q_3,\leftleft,X_4,q_4)$, \ldots,
  $x_i=(q_{i-1},\rightright,q_i)$,
  $x_{i+1}=(q_i,\leftright,X_{i+1},q')$ 
  with $i\geq1$, 
  $x_1,x_3,\ldots,x_i\in s_\leftend^p$ and
  $x_2,x_4,\ldots,x_{i+1}\in s^p$ (see Figure~\ref{fig:CFp} bottom left). 
  We define
  $$
  C_{\leftend F}^p(q,\leftright,q')=\gamma_{q}^p\odot C_{F}^p(x_2)\odot
  \gamma_{q_2}^p\odot C_{F}^p(x_4)\odot
  \cdots\odot\gamma_{q_{i-1}}^p\odot C_{F}^p(x_{i+1}) \,.
  $$
  Using Lemma~\ref{lem:C_F^p}, we can show that $\Lang{F}=\dom{C_{\leftend
  F}^p(q,\leftright,q')}$ and also that for each $u\in\Lang{F}$, $\sem{C_{\leftend
  F}^p(q,\leftright,q')}(u)$ is the output produced by $\tilde{\A}$ on
  $\tilde{\leftend u}^p$ when starting on the left in state $q$ until it exists
  on the right in state $q'$.
  
  For each $q\in Q$, there is a unique sequence of steps
  $x_1=(q,\rightleft,X_1,q_1)$,
  $x_2=(q_1,\rightright,q_2)$,
  $x_3=(q_2,\leftleft,X_3,q_3)$,
  $x_4=(q_3,\rightright,q_4)$, \ldots,
  $x_i=(q_{i-1},\rightright,q_i)$,
  $x_{i+1}=(q_i,\leftright,X_{i+1},q')$ 
  with $i\geq2$, 
  $x_2,x_4,\ldots,x_i\in s_\leftend^p$ and
  $x_1,x_3,\ldots,x_{i+1}\in s^p$ (see Figure~\ref{fig:CFp} bottom right). 
  We define
  $$
  C_{\leftend F}^p(q,\rightright,q')=
  C_{F}^p(x_1)\odot \gamma_{q_1}^p\odot 
  C_{F}^p(x_3)\odot \gamma_{q_3}^p\odot 
  \cdots\odot\gamma_{q_{i-1}}^p\odot C_{F}^p(x_{i+1}) \,.
  $$
  Using Lemma~\ref{lem:C_F^p}, we can show that $\Lang{F}=\dom{C_{\leftend
  F}^p(q,\rightright,q')}$ and also that for each $u\in\Lang{F}$, $\sem{C_{\leftend
  F}^p(q,\rightright,q')}(u)$ is the output produced by $\tilde{\A}$ on
  $\tilde{\leftend u}^p$ when starting on the right in state $q$ until it exists
  on the right in state $q'$.
\end{proof}

\begin{lemma}\label{lem:C_E_F^omega}
  Let $F\cdot G^\omega$ be an unambiguous rational expression such that $F$ and
  $G$ are $\varepsilon$-free $\TrMorph$-good rational expresions and
  $\TrMorph(G)=\sigma=(r^p,b^p,s^p)_{p\in P}$ is an idempotent in the
  transition monoid $\TrMon$ of $(\A,\Bb)$.  We can construct an \oRTE
  $C_{FG^\omega}$ such that $\dom{C_{FG^\omega}}=\Lang{FG^\omega}\cap\dom{\A}$ and for each
  $w\in\dom{C_{FG^\omega}}$, $\sem{C_{FG^\omega}}(w)=\sem{\A}(w)$.
\end{lemma}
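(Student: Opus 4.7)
The plan is to bootstrap from Lemmas~\ref{lem:C_F^p} and~\ref{lem:C_leftend_F^p}, using the idempotence of $\sigma=\TrMorph(G)$ to handle the infinite tail $G^{\omega}$. Since $\sigma$ is idempotent, the induced map $p\mapsto r^p$ on $P$ is also idempotent, so its image consists of fixed points. For any $w=uv_1v_2\cdots\in\Lang{FG^{\omega}}$ with $u\in\Lang{F}$ and $v_i\in\Lang{G}$, the unique final run $\Bb(w)$ reaches the same state $p^*$ at every boundary position $|uv_1\cdots v_i|$, and this $p^*$ satisfies $r^{p^*}=p^*$; moreover $\Bb$ accepts $w$ iff $b^{p^*}=1$, since each $v_i$ contributes equally. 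I therefore partition $\Lang{FG^{\omega}}=\bigsqcup_{p^*=r^{p^*}}\Lang{FG^{\omega}}\cap\Lang{\Bb,p^*}$ using nested if-then-else, setting the output to $\bot$ on every piece with $b^{p^*}=0$.

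Fix $p^*$ with $b^{p^*}=1$. The run of $\tilde{\A}$ on $(\leftend,p_0)\tilde{u}^{p^*}\tilde{v_1}^{p^*}\tilde{v_2}^{p^*}\cdots$ decomposes into confined segments on $\leftend u$ and on the $v_i$'s, each corresponding to a step in the appropriate transition monoid. Using $s^{p^*}$ and the idempotence $s^{p^*}\cdot s^{p^*}=s^{p^*}$, I define the partial forward function $\phi(q)=q'$ when $(q,\leftright,X_q,q')\in s^{p^*}$, and classify at each boundary in state $q$ the next tail behavior as forward, return via $(q,\leftleft,\cdot,\cdot)\in s^{p^*}$, or loop (hence reject). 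The global run reads the whole word iff it eventually reaches a commit state $q^{**}$ from which all iterates $\phi^i(q^{**})$ are defined; since $Q$ is finite, $(\phi^i(q^{**}))_i$ is eventually periodic with some pre-period $N$ and period $L$, and the set of states visited infinitely often equals $\bigcup_{i=N}^{N+L-1}X_{\phi^i(q^{**})}$.

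To identify $q^{**}$, I model the back-and-forth oscillations between $\leftend u$ and the tail as the composition of the return map induced by $\leftleft$-steps of $s^{p^*}$ with the boundary behavior of $\tilde{\A}$ on $\leftend u$ provided by Lemmas~\ref{lem:C_F^p} and~\ref{lem:C_leftend_F^p}; both being finite and deterministic, this terminates in at most $|Q|$ oscillations and either yields $q^{**}$ or detects an infinite loop, resolved by a finite case analysis on $\TrMorph(u)$ via if-then-else. The output is then assembled as a Hadamard product of three parts: (a) the finite output on $\leftend u$ accumulated across all confined visits, built by Cauchy products of \RTEs from Lemmas~\ref{lem:C_F^p} and~\ref{lem:C_leftend_F^p}; (b) the pre-period output on $v_1\cdots v_N$, a Cauchy product of the \RTEs $C_G^{p^*}(\phi^i(q^{**}),\leftright,\cdot,\phi^{i+1}(q^{**}))$ for $0\leq i<N$; and (c) the periodic output on $v_{N+1}v_{N+2}\cdots$, written as an $\omega$-iteration of an \RTE of domain $\Lang{G^L}$ that Cauchy-concatenates the $L$ corresponding $C_G^{p^*}$'s for one full cycle. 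The Muller condition is enforced by an outer if-then-else requiring $\bigcup_{i=N}^{N+L-1}X_{\phi^i(q^{**})}\in\Ff$.

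The main obstacle is item (c) combined with the acceptance check: I must simultaneously ensure every $\omega$-iteration is unambiguous and correctly read off the infinitely-visited state set from the periodic steps. Unambiguity of the $\omega$-iteration on $\Lang{G^L}$ follows from $G$ being $\TrMorph$-good (whence $G^L$, and hence $(G^L)^{\omega}$, is unambiguous since $G$ is $\varepsilon$-free). The remaining finitary choices reduce to enumerating quadruples $(p^*,q^{**},N,L)$ through nested if-then-else, each producing a single \oRTE whose domain is an $\omega$-regular sublanguage of $\Lang{FG^{\omega}}\cap\dom{\A}$; their disjoint union reassembled by if-then-else yields $C_{FG^{\omega}}$ with the stated properties.
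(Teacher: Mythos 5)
There is a genuine gap in the handling of the infinite tail. Your construction assumes that once the run ``commits'', it advances through the $G$-blocks by iterating a forward map $\phi$ defined by single-block $\leftright$-steps of $s^{p^*}$, and you assemble the periodic output as a plain $\omega$-iteration over disjoint blocks of $\Lang{G^L}$. But there is a second steady-state regime that this cannot express: the unique step of $s^{p^*}$ starting from the left in the boundary state $q$ may be a $\leftleft$-step. Then the run enters $u_{n+1}$, exits it on the \emph{left}, dips back into $u_n$ from the right, returns, and so on, finally exiting $u_{n+1}$ on the right in state $q$ again --- and this same oscillation, confined to the overlapping pair $u_nu_{n+1}$, repeats for every $n$. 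The run still reads the whole word and can be accepting, so your criterion ``reads the whole word iff it eventually reaches $q^{**}$ with all iterates $\phi^i(q^{**})$ defined'' is false ($\phi$ is undefined at $q$ here), and your formula $\bigcup_{i=N}^{N+L-1}X_{\phi^i(q^{**})}$ for the infinitely visited states misses the states traversed during the backward excursions. More fundamentally, the output in this regime interleaves contributions from \emph{overlapping} pairs of consecutive blocks, which no $\omega$-iteration $f^\omega$ over a disjoint factorization into $G^L$-blocks can produce (regrouping into larger super-blocks does not help, since the oscillation crosses every block boundary). This is exactly why the paper introduces and uses the 2-chained $\omega$-iteration $\twoomega{G}{C_3}$ in its second case; your proposal never invokes it for the tail.

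Two further remarks. First, your picture of ``back-and-forth oscillations between $\leftend u$ and the tail'' conflates the initial transient (which the paper absorbs into a single expression $C_1$ over $\leftend u_1u_2$, using Lemma~\ref{lem:C_leftend_F^p}) with the steady state, which never returns to $\leftend u$ but only to the immediately preceding $G$-block. Second, the pre-period/period analysis of $\phi$ is unnecessary: idempotence of $s^{p^*}$ forces $\phi(\phi(q))=\phi(q)$ wherever defined, so in the forward regime the boundary state stabilizes after one block and the period is $1$; this is the paper's first case, $C_G^p(x)^\omega$ with $x=(q,\leftright,X,q)$. Your identification of the unique $p^*$ with $r^{p^*}=p^*$ and $b^{p^*}=1$, and your treatment of the purely forward case, are essentially correct and match the paper.
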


\begin{proof}
  We first show that there exists one and only one state $p\in P$ such that
  $r^p=p$ and $b^p=1$.  For the existence, consider a word
  $w=u_1u_2u_3\cdots\in\Lang{FG^\omega}$ with $u_1\in\Lang{F}$ and
  $u_n\in\Lang{G}$ for all $n\geq2$.  By definition of \BDBA there is a unique
  final run of $\Bb$ over $w$: $p_0,u_1,p_1,u_2,p_2,\ldots$.  Let us show first
  that $p_n=p_1$ for all $n\geq1$.  Since $\sigma$ is idempotent, we have
  $\TrMorph(u_{2}\cdots u_{n+1})=\TrMorph(u_{n+1})$.  Since
  $p_{1}\xleftarrow{u_{2}\cdots u_{n+1}}p_{n+1}$ and
  $p_{n}\xleftarrow{u_{n+1}}p_{n+1}$, we deduce that $p_{1}=r^{p_{n+1}}=p_{n}$.
  This implies $p_1=r^{p_2}=r^{p_1}$.
  Let $p=p_1$ so that $p=r^{p}$ and the final run of $\Bb$ on $w$ is
  $p_0,u_1,p,u_2,p,\ldots$.  Now, for all $n\geq2$ we have
  $\TrMorph(u_n)=\sigma$ and we deduce that $p\xleftarrow{u_n}p$ visits a final
  state from $\Fin$ iff $b^p=1$.  Since the run is accepting, we deduce that
  indeed $b^p=1$.  To prove the unicity, let $p\in P$ with $p=r^p$ and $b^p=1$.
  Let $v\in\Lang{G}$.  We have
  $p\xleftarrow{v}p$ and this subrun visits a final state from $\Fin$.
  Therefore, $p,v,p,v,p,v,p,\ldots$ is a final run of $\Bb$ on $v^\omega$.
  Since $\Bb$ is \BDBA, there is a unique final run of $\Bb$ on $v^\omega$, 
  which proves the unicity of $p$.
  
  We apply Lemma~\ref{lem:C_leftend_F^p}. We denote by $s'_{\leftend F}$ the 
  set of triples $(q,d,q')\in Q\times\{\leftright,\rightright\}\times Q$ such 
  that the \RTE $C_{\leftend F}^p(q,d,q')$ is defined.
  
  Starting from the initial state $q_0$ of $\A$, there exists a unique sequence 
  of steps 
  $x'_1=(q_0,\leftright,q'_1)$,
  $x'_2=(q'_1,\leftleft,X'_2,q'_2)$,
  $x'_3=(q'_2,\rightright,q'_3)$,
  $x'_4=(q'_3,\leftleft,X'_4,q'_4)$, \ldots,
  $x'_i=(q'_{i-1},\rightright,q'_i)$,
  $x'_{i+1}=(q'_i,\leftright,X'_{i+1},q)$ 
  with $i\geq1$, 
  $x'_1,x'_3,\ldots,x'_i\in s'_{\leftend F}$ and
  $x'_2,x'_4,\ldots,x'_{i+1}\in s^p$. 
  We define
  \begin{align*}
    C_1 &= \big( C_{\leftend F}^p(x'_1)\lrdot C_{G}^p(x'_2) \big) \odot
    \big( C_{\leftend F}^p(x'_3)\lrdot C_{G}^p(x'_4) \big) \odot\cdots\odot
    \big( C_{\leftend F}^p(x'_i)\lrdot C_{G}^p(x'_{i+1}) \big) 
    \\
    C_2 &= C_1 \lrdot (\Ifthenelse{G^\omega}{\varepsilon}\bot) \,.
  \end{align*}
  We have $\dom{C_1}=FG$ and $\tilde{\leftend u_1u_2}^p=\tilde{\leftend
  u_1}^p\tilde{u_2}^p$ for all $u_1\in F$ and $u_2\in G$.  Moreover,
  $\sem{C_1}(u_1u_2)$ is the output produced by $\tilde{\A}$ on $\tilde{\leftend
  u_1u_2}^p$ when starting on the left in the initial state $q_0$ until it exists on
  the right in state $q$.
  Now, $C_2$ is an \oRTE with $\dom{C_2}=FG^\omega$ and for all 
  $w=u_1u_2u_3\ldots\in FG^\omega$ with $u_1\in F$ and $u_n\in G$ for all 
  $n>1$, we have $\sem{C_2}(w)=\sem{C_1}(u_1u_2)\in\Gamma^*$.
  
  Now, we distinguish two cases. First, assume that there is a step 
  $x=(q,\leftright,X,q')\in s^p$. Since $\sigma$ is idempotent, so is $s^p$, and 
  since $x'_{i+1}=(q'_i,\leftright,X'_{i+1},q)\in s^p$ we deduce that $q'=q$.  
  Therefore, the unique run of $\tilde{\A}$ on $\tilde{\leftend 
  w}=\tilde{\leftend u_1}^p\tilde{u_2}^p\tilde{u_3}^p\cdots$ follows the steps 
  $x'_1 x'_2 \cdots x'_i x'_{i+1} x x x \cdots$. Hence, the set of states visited 
  infinitely often along this run is $X$ and the run is accepting iff 
  $X\in\mathcal{F}$ is a Muller set. Therefore, if $X\notin\mathcal{F}$ we have 
  $FG^\omega\cap\dom{\A}=\emptyset$ and we set $C_{FG^\omega}=\bot$. Now,  
  if $X\in\mathcal{F}$ we have $FG^\omega\subseteq\dom{\A}$ and we set
  $$
  C_{FG^\omega}= C_2 \odot \big( (\Ifthenelse{FG}{\varepsilon}\bot)
  \lrdot C_G^p(x)^\omega \big) \,.
  $$
  We have $\dom{C_{FG^\omega}}=FG^\omega$ and for all 
  $w=u_1u_2u_3\ldots\in FG^\omega$ with $u_1\in F$ and $u_n\in G$ for all 
  $n>1$, we have 
  $$
  \sem{C_{FG^\omega}}(w)=\sem{C_1}(u_1u_2)\sem{C_G^p(x)}(u_3)\sem{C_G^p(x)}(u_4)\cdots \,.
  $$
  By \eqref{oInv2}, we know that for all $n\geq3$, $\sem{C_G^p(x)}(u_n)$  is the output
  produced by $\tilde{\A}$ when running step $x=(q,\leftright,X,q)$ on $\tilde{u_n}^p$.
  We deduce that $\sem{C_{FG^\omega}}(w)=\sem{\tilde{\A}}(\tilde{\leftend 
  w})=\sem{\A}(w)$ as desired.
  
  The second case is when the unique step $x_1=(q,\leftleft,X_1,q_1)$ in $s^p$
  which starts from the left in state $q$ exits on the left. Since $s^p$ is 
  idempotent and $x'_{i+1}=(q'_i,\leftright,X'_{i+1},q)\in s^p$, by definition 
  of the product $s^p\cdot s^p$, there is a unique sequence of steps
  $x_2=(q_1,\rightright,X_2,q_2)$,
  $x_3=(q_2,\leftleft,X_3,q_3)$,
  \ldots,
  $x_j=(q_{j-1},\rightright,X_j,q_j)$,
  $x_{j+1}=(q_j,\leftright,X_{j+1},q)$ 
  in $s^p$ with $j\geq2$.
  Therefore, for all $w=u_1u_2u_3\ldots\in FG^\omega$ with $u_1\in F$ and
  $u_n\in G$ for all $n>1$, the unique run of $\tilde{\A}$ on $\tilde{\leftend
  w}=\tilde{\leftend u_1}^p\tilde{u_2}^p\tilde{u_3}^p\cdots$ follows the steps
  $x'_1 x'_2\cdots x'_i x'_{i+1} (x_1 x_2 x_3 \cdots x_j x_{j+1})^\omega$.
  Hence, the set of states visited infinitely often along this run is $X=X_1\cup
  X_2\cup\cdots\cup X_{j+1}$.  We deduce that the run is accepting iff 
  $X\in\mathcal{F}$ is a Muller set. Therefore, if $X\notin\mathcal{F}$ we have 
  $FG^\omega\cap\dom{\A}=\emptyset$ and we set $C_{FG^\omega}=\bot$. Now,  
  if $X\in\mathcal{F}$ we have $FG^\omega\subseteq\dom{\A}$ and we set
  \begin{align*}
    C_3 &= \big( (\Ifthenelse{G}{\varepsilon}\bot)\lrdot C_{G}^p(x_1) \big) \odot
    \big( C_{G}^p(x_2)\lrdot C_{G}^p(x_3) \big) \odot\cdots\odot
    \big( C_{G}^p(x_j)\lrdot C_{G}^p(x_{j+1}) \big) 
    \\
    C_{FG^\omega} &= C_2 \odot \big( (\Ifthenelse{F}{\varepsilon}\bot)
    \lrdot \twoomega{G}{C_3} \big) \,.
  \end{align*}
  We have $\dom{C_{FG^\omega}}=FG^\omega$ and for all 
  $w=u_1u_2u_3\ldots\in FG^\omega$ with $u_1\in F$ and $u_n\in G$ for all 
  $n>1$, we have 
  $$
  \sem{C_{FG^\omega}}(w)=\sem{C_1}(u_1u_2)\sem{C_3}(u_2u_3)\sem{C_3}(u_3u_4)\cdots \,.
  $$
  Using \eqref{oInv2}, we can check that this is the output produced by
  $\tilde{\A}$ when running on $\tilde{\leftend w}$.  We deduce that
  $\sem{C_{FG^\omega}}(w)=\sem{\tilde{\A}}(\tilde{\leftend w})=\sem{\A}(w)$ as
  desired.
\end{proof}

We are now ready to prove that \twoDMTla are no more expressive than \oRTEs.

\begin{proof}[Proof of Theorem~\ref{thm:2WST=oRTE}~\eqref{item:2WSTto-oRTE}]
  We use the notations of the previous sections, in particular for the \twoDMTla
  $\A$, the \BDBA $\Bb$.  We apply Theorem~\ref{thm:U-forest-omega} to the
  canonical morphism $\TrMorph$ from $\Sigma^*$ to the transition monoid
  $\TrMon$ of $(\A,\Bb)$.  We obtain an \emph{unambiguous} rational expression
  $G=\bigcup_{k=1}^{m}F_k\cdot G_k^\omega$ over $\Sigma$ such that
  $\Lang{G}=\Sigma^\omega$ and for all $1\leq k\leq m$ the expressions $F_k$ and
  $G_k$ are $\varepsilon$-free $\TrMorph$-good rational expressions and
  $\sigma_{G_k}$ is an idempotent, where $\TrMorph(G_k)=\{\sigma_{G_k}\}$.  For
  each $1\leq k\leq m$, let $C_k=C_{F_kG_k^\omega}$ be the \oRTE given by
  Lemma~\ref{lem:C_E_F^omega}.
  We define the final \oRTE as
  $$
  C = 
  \Ifthenelse{F_1G_1^\omega}{C_{1}}{
  (\Ifthenelse{F_2G_2^\omega}{C_{2}}{\cdots
  (\Ifthenelse{F_{m-1}G_{m-1}^\omega}{C_{{m-1}}}{C_{m}})
  })} \,.
  $$
  From Lemma~\ref{lem:C_E_F^omega}, we can easily check that $\dom{C}=\dom{\A}$ 
  and $\sem{C}(w)=\sem{\A}(w)$ for all $w\in\dom{C}$.
\end{proof}

\section{Conclusion}

The main contribution of the paper is to give a characterisation of regular
string transductions using some combinators, giving rise to regular transducer
expressions (RTE).  Our proof uniformly works well for finite and infinite
string transformations.  RTE are a succint specification mechanism for regular
transformations just like regular expressions are for regular languages.  It is
worthwhile to consider extensions of our technique to regular tree
transformations, or in other settings where more involved primitives such as
sorting or counting are needed.  The minimality of our combinators in achieving
expressive completeness, as well as computing complexity measures for the
conversion between RTEs and two-way transducers are open.

\bibliographystyle{plain}
\bibliography{papers}

\newpage
\appendix

\newpage
\appendix
\onecolumn

\newcommand{\err}{\ensuremath{\mathsf{err}}\xspace}
\newcommand{\nl}{\ensuremath{\backslash\mathsf{n}}\xspace}

\newcommand{\twentyoneplus}[2]{[#1,#2]^{21\scriptstyle{\boxplus}}}
\newcommand{\twentyoneomega}[2]{[#1,#2]^{21\omega}}
\newcommand{\komega}[2]{[#1,#2]^{k\omega}}
\newcommand{\kplus}[2]{[#1,#2]^{k\scriptstyle{\boxplus}}}
\newcommand{\rplus}[2]{[#1,#2]^{r\scriptstyle{\boxplus}}}

\section{Examples}\label{app:intro}

\subsection{More details on the Example in the Introduction}
\label{app:intro}

We continue with the computation of the \RTE $C_{E_3}(q_0,\leftright,q_2)$ for
$E_3=[(ba^{+})^{3}]^{+}b \subseteq \dom{\Aa}$. 
This involves the use of the 2-chained Kleene-plus.

\begin{figure*}[h]
  \centerline{\includegraphics[scale=0.3, angle=270]{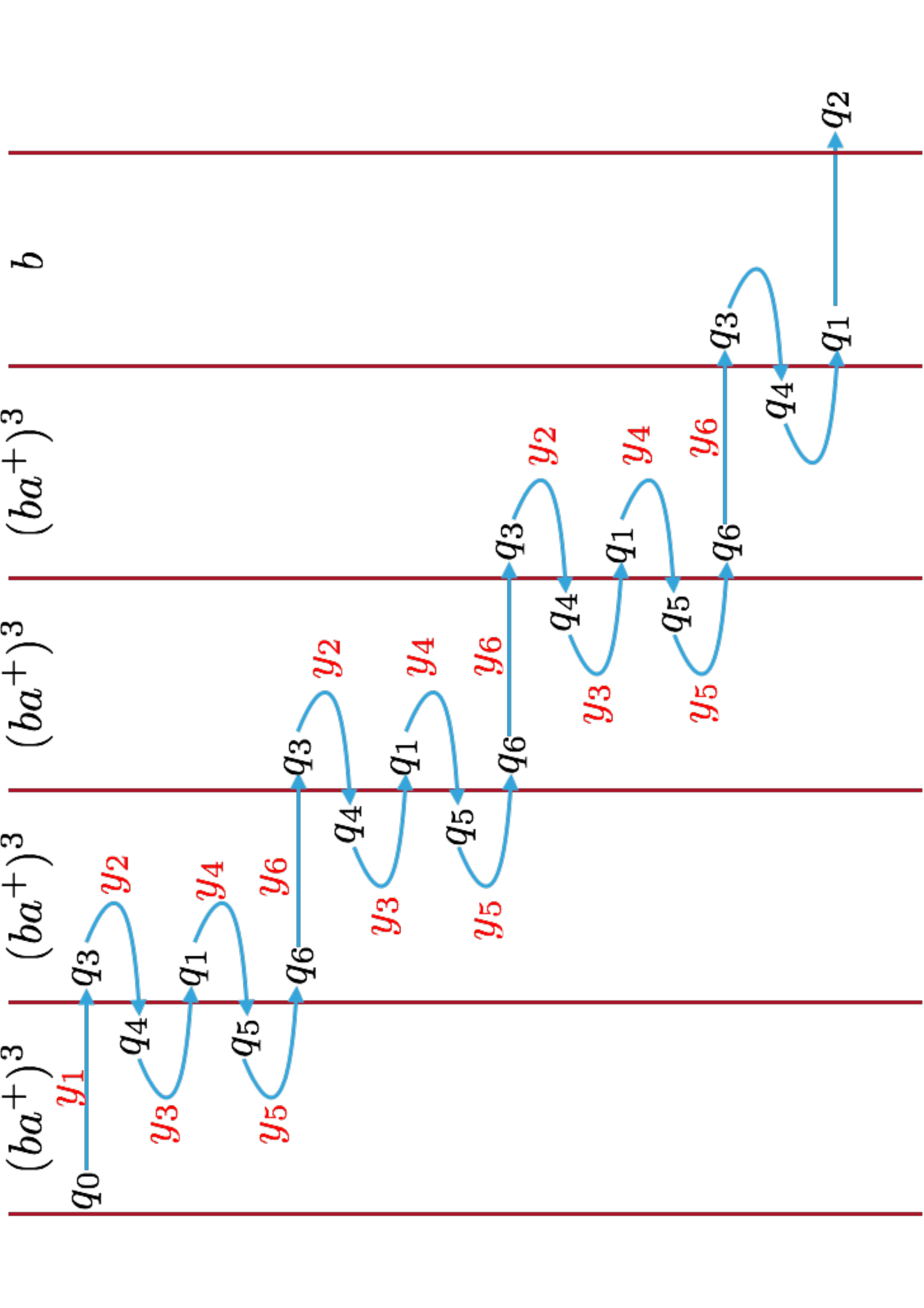}}
  \caption{run of a word in $E_3=[(ba^{+})^{3}]^{+}b$ }
  \label{fig:runapp}
\end{figure*}

We want to compute the \RTE for the step $(q_0,\leftright,q_2)$ on a word $u\in
E_3$.  It can be decomposed as shown in Figure~\ref{fig:runapp}.  Unlike the
case of $E_2$, we have to use the 2-chained Kleene plus.
Let $F=(ba^{+})^{3}$ so that $E_3=F^{+}b$. 
We have (see Figure~\ref{fig:runapp}), 
$$
C_{E_3}(q_0,\leftright,q_2)=
(C_{F^{+}}(q_0,\leftright,q_3)\lrdot C_b(q_3,\leftleft,q_4)) \odot
(C_{F^{+}}(q_4,\rightright,q_1)\lrdot C_b(q_1,\leftright,q_2)) \,.
$$
We know that $C_b(q_3,\leftleft,q_4)=(b/\varepsilon)=C_b(q_1,\leftright,q_2)$ hence 
it remains to compute $C_{F^{+}}(q_0,\leftright,q_3)$ and 
$C_{F^{+}}(q_4,\rightright,q_1)$. 
First we define \RTEs associated with atomic expressions and steps which are
going to be used in constructing $C_{E_3}(q_0,\leftright,q_2)$.  They are
$C_b(q_3, \leftleft, q_4) = C_b(q_6, \leftright, q_1) = C_b(q_1, \leftright,
q_2) = C_b(q_5, \leftright, q_6) = (b/\varepsilon)$ and $C_{a^+}(q_2,
\leftright, q_3) = C_{a^+}(q_1, \leftright, q_1) = \lrplus{(a/\varepsilon)}$,
$C_{a^+}(q_4, \rightleft, q_4) = \rlplus{(a/a)}$, $C_{a^+}(q_5, \rightleft, q_5)
= \rlplus{(a/b)}$.  We compute \RTE $C_{F}(x)$ for the relevant steps $x$
in the monoid element $X = \TrMorph(F)$.  $F$
is an unambiguous catenation of $E_2 = ba^+ba^+b$ with $a^+$ and from 
Figure~\ref{fig:run}, it can be seen that:
\begin{enumerate}
  \item For $y_1 = (q_0, \leftright, q_3)$
  \begin{align*}
    C_{F}(y_1)& = C_{E_2}(q_0, \leftright, q_2) \lrdot C_{a^+}(q_2, \leftright, q_3) \\
    & = \big( (b/\varepsilon)\rldot\rlplus{(a/b)}\rldot
    (b/\varepsilon)\rldot\rlplus{(a/a)} \big) \lrdot (b/\varepsilon) \lrdot \lrplus{(a/\varepsilon)}
  \end{align*}
  where $C_{E_2}(q_0, \leftright, q_2)$ has been computed in 
  Section~\ref{sec:intro}. \\
  For example,  $\sem{C_{F}(y_1)}(ba^{m_1}ba^{m_2}ba^{m_3}) = a^{m_2}b^{m_1}$. 
  
  \item  Continuing with the computation for $(ba^+)^3$ as in Figure~\ref{fig:runapp}, for $y_2 = (q_3, \leftleft, q_4)$,
  we take the Cauchy product of $C_b(q_3, \leftleft, q_4)$ with $(\Ifthenelse{a^+ba^+ba^+}{\varepsilon}{\bot})$. 
  \begin{align*}
    C_{F}(y_2) & = C_b(q_3, \leftleft, q_4) \lrdot 
    (\Ifthenelse{a^+ba^+ba^+}{\varepsilon}{\bot}) 
    \approx (\Ifthenelse{(ba^+)^{3}}{\varepsilon}{\bot}) 
  \end{align*}
  $\sem{C_{F}(y_2)}(ba^{m_1}ba^{m_2}ba^{m_3}) = \varepsilon$.
  
  \item   For $y_3 = (q_4, \rightright, q_1)$, we have
  \begin{align*}
    C_{F}(y_3)& = (\Ifthenelse{ba^+}{\varepsilon}{\bot}) \rldot 
    C_{ba^+ba^+}(y_3) \\
    &= (\Ifthenelse{ba^+}{\varepsilon}{\bot}) \rldot \big( (b/\varepsilon)\rldot\rlplus{(a/b)}\rldot
    (b/\varepsilon)\rldot\rlplus{(a/a)} \big) \\
    &\approx (\Ifthenelse{ba^+b}{\varepsilon}{\bot})\lrdot \big(\rlplus{(a/b)}\rldot
    (b/\varepsilon)\rldot\rlplus{(a/a)} \big) 
  \end{align*}
  where $C_{ba^+ba^+}(y_3)$  is already computed in Section~\ref{sec:intro}. \\
  As an example,   $\sem{C_{F}(y_3)}(ba^{m_1}ba^{m_2}ba^{m_3}) = a^{m_3}b^{m_2}$. 

  \item For $y_4 = (q_1, \leftleft, q_5)$, it is similar to the $C_E(y)$
  computed for $C_{E_2}$ in Section~\ref{sec:intro}.  Here we have
  \begin{align*}
    C_{F}(y_4)& = C_{ba^+b}(y_4) \lrdot (\Ifthenelse{a^+ba^+}{\varepsilon}{\bot}) \\
    & = \big( (C_b(q_1, \leftright, q_2) \lrdot C_{a^+}(q_2, \leftright, q_3) \lrdot (b/\varepsilon)) \odot{} \\
    &~~~~~~ (C_b(q_4, \rightleft, q_5) \rldot C_{a^+}(q_4, \rightleft, q_4) 
    \rldot C_b(q_3, \leftleft, q_4)) \big) 
    \lrdot (\Ifthenelse{a^+ba^+}{\varepsilon}{\bot}) \\
    & = (((b/\varepsilon) \lrdot \lrplus{(a/\varepsilon)} \lrdot (b/\varepsilon)) \odot 
    ((b/\varepsilon) \rldot \rlplus{(a/a)} \rldot (b/\varepsilon))) \lrdot (\Ifthenelse{a^+ba^+}{\varepsilon}{\bot}) \\
    & \approx \big( (b/\varepsilon) \rldot \rlplus{(a/a)} \rldot 
    (b/\varepsilon) \big) \lrdot (\Ifthenelse{a^+ba^+}{\varepsilon}{\bot})
  \end{align*}
  As an example, $\sem{C_{F}(y_4)}(ba^{m_1}ba^{m_2}ba^{m_3}) = a^{m_1}$. 

  \item For $y_5 = (q_5, \rightright, q_6)$, in the computation of $C_F(y_5)$ we need $C_{ba^+}(y_5)$. Thus, we compute $C_{ba^+}(y_5)$ below whose computation is similar to $C_E(y)$ computed in Section~\ref{sec:intro}.
  \begin{align*}
    C_{ba^+}(y_5)& = ((b/\varepsilon) \rldot C_{a^+}(q_5, \rightleft, q_5)) \odot (C_b(q_5, \leftright, q_6) \lrdot C_{a^+}(q_6, \leftright, q_6)) \\
    & = ((b/\varepsilon) \rldot \rlplus{(a/b)}) \odot ((b/\varepsilon) \lrdot 
    \lrplus{(a/\varepsilon)}) 
    \approx (b/\varepsilon) \rldot \rlplus{(a/b)}
  \end{align*}
  We can compute $C_{F}(y_5)$ as 
  \begin{align*}
    C_{F}(y_5)& = (\Ifthenelse{ba^+ba^+}{\varepsilon}{\bot}) \lrdot C_{ba^+}(y_5) 
    \approx (\Ifthenelse{ba^+ba^+b}{\varepsilon}{\bot}) \rldot \rlplus{(a/b)}
  \end{align*}
  As an example, $\sem{C_{F}(y_5)}(ba^{m_1}ba^{m_2}ba^{m_3}) = b^{m_3}$. 
  
  \item For $y_6 = (q_6, \leftright, q_3)$, the computation of $C_{F}(y_6)$ is
  similar to $C_{ba^+ba^+}(q_0, \leftright, q_2)$ in Section~\ref{sec:intro}.
  We need $C_{ba^+ba^+}(q_6, \leftright, q_3)$ and $C_{ba^+ba^+b}(q_6,
  \leftright, q_2)$.  We see the computation of these below step by step.
  \begin{align*}
    C_{ba^+ba^+}(q_6, \leftright, q_3)& = C_b(q_6, \leftright, q_1) \lrdot C_{a^+}(q_1, \leftright, q_1) \lrdot C_b(q_1, \leftright, q_2) \lrdot C_{a^+}(q_2, \leftright, q_3) \\
    & = (b/\varepsilon) \lrdot \lrplus{(a/\varepsilon)} \lrdot (b/\varepsilon) \lrdot \lrplus{(a/\varepsilon)} 
    \approx (\Ifthenelse{ba^+ba^+}{\varepsilon}{\bot})
    \\
    C_{ba^+ba^+b}(q_6, \leftright, q_2)& = (C_{ba^+ba^+}(q_6, \leftright, q_3) \lrdot C_b(q_3, \leftleft, q_4)) \odot (C_{ba^+ba^+}(q_4, \rightright, q_1) \lrdot C_b(q_1, \leftright, q_2)) \\
    & \approx (\Ifthenelse{ba^+ba^+b}{\varepsilon}{\bot}) \odot (((b/\varepsilon)\rldot\rlplus{(a/b)}\rldot
    (b/\varepsilon)\rldot\rlplus{(a/a)}) \lrdot (b/\varepsilon)) \\
    & \approx \big( (b/\varepsilon)\rldot\rlplus{(a/b)}\rldot
    (b/\varepsilon)\rldot\rlplus{(a/a)} \big) \lrdot (b/\varepsilon)
  \end{align*}
  Note that $C_{ba^+ba^+}(q_4, \rightright, q_1)$ has been computed in Section~\ref{sec:intro}.  
  Now we concatenate with $C_{a^+}(q_2, \leftright, q_3)$ needed in the computation. 
  \begin{align*}
    C_{F}(y_6)& = C_{ba^+ba^+b}(q_6, \leftright, q_2) \lrdot C_{a^+}(q_2, \leftright, q_3) \\
    & = \big( (b/\varepsilon)\rldot\rlplus{(a/b)}\rldot
    (b/\varepsilon)\rldot\rlplus{(a/a)} \big) \lrdot (b/\varepsilon) \lrdot (\lrplus{a/\varepsilon})\\
    & \approx \big( (b/\varepsilon)\rldot\rlplus{(a/b)}\rldot 
    (b/\varepsilon)\rldot\rlplus{(a/a)} \big) \lrdot
    (\Ifthenelse{ba^+}{\varepsilon}{\bot})
  \end{align*}
  As an example, $\sem{C_{F}(y_6)}(ba^{m_1}ba^{m_2}ba^{m_3}) = a^{m_2}b^{m_1}$.
\end{enumerate}
Now we are in a position to compute \RTE $C_{F^{+}}(q_0, \leftright, q_3)$.  As
shown in figure\ref{fig:runapp}, it is a concatenation of step $y_1$ and then
steps $y_2$, $y_3$, $y_4$, $y_5$ and $y_6$ repetitively.  Consecutive pairs of
$(ba^+)^3$ are needed to compute the \RTE and thanks to the 2-chained Kleene
plus, we can define the \RTE for the same.
\begin{align*}
  C_{F^{+}}(y_1) & = (C_{F}(y_1) \lrdot (\Ifthenelse{F^*}{\varepsilon}{\bot})) \odot \twoplus{F}{C'} \\
  C' & =  ((\Ifthenelse{F}{\varepsilon}{\bot}) \lrdot C_{F}(y_2) ) \odot (C_{F}(y_3) \lrdot C_{F}(y_4)) \odot (C_{F}(y_5) \lrdot C_{F}(y_6))
\end{align*}
As an example,
$\sem{C_{F^{+}}(y_1)}(ba^{m_1}ba^{m_2}ba^{m_3}ba^{m_4}ba^{m_5}ba^{m_6}) =
a^{m_2}b^{m_1}a^{m_3}b^{m_2}a^{m_4}b^{m_3}a^{m_5}b^{m_4}$.

Finally, we compute \RTE for $y = (q_0, \leftright, q_2)$ for the expression $E_3
= [(ba^{+})^{3}]^{+}b$ by concatenating $b$ with the above \RTE.
\begin{align*}
  C_{E_3}(y)& = (C_{F^{+}}(q_0, \leftright, q_3) \lrdot C_b(q_3, \leftleft, q_4)) \odot (C_{F^{+}}(q_4, \rightright, q_1) \lrdot C_{b}(q_1, \leftright, q_2))
\end{align*}
Notice that 
$C_{F^{+}}(q_4,\rightright,q_1)=(\Ifthenelse{F^*}{\varepsilon}{\bot})\lrdot 
C_F(y_3)$.

We have already seen that $C_{E_3}(y)$ computes the output produced by a
successful run on a word $w \in E_3$.  Applying the \RTE as above, we have, for
example, 
$$
\sem{C_{E_3}(y)}(ba^{m_1}ba^{m_2}ba^{m_3}ba^{m_4}ba^{m_5}ba^{m_6}b) =
a^{m_2}b^{m_1}a^{m_3}b^{m_2}a^{m_4}b^{m_3}a^{m_5}b^{m_4}a^{m_6}b^{m_5} \,.
$$

\subsection{A Motivating Example}
\label{app:motiv}
Apart from theoretical interest, regular
expressions have great practical utility, being used in search engines, or in
search and replace patterns in text processors, or in lexical analysis in
compilers.  Many programming languages like Java and Python also support regular
expressions using a regexp engine, as part of their standard libraries.  We
believe that our extension of the beautiful theory of regular expressions to
regular transducer expressions over both finite and infinite words has many
useful applications.

As a specific example, we consider the $\mathsf{dmesg}$ command (see
Figure~\ref{fig:dmesg} for a sample output) used to write the kernel messages in
Linux and other Unix-like operating systems to standard output (which by default
is the display screen).  The output is often captured in a permanent system
logfile via a logging daemon, such as syslog.  The kernel is the first part of
the operating system that is loaded into memory when a computer boots up.  The
numerous messages generated by the kernel that appear on the display screen as a
computer boots up show the hardware devices that the kernel detects and indicate
whether it is able to configure them.  $\mathsf{dmesg}$ obtains its data by
reading the kernel ring buffer, a portion of a computer's memory that is set
aside as a temporary holding place for data that is being sent to or received
from an external device, such as a hard disk drive (HDD), printer or keyboard.
Using $\mathsf{dmesg}$ along with the $\mathsf{-w}$ option gives real time
updates while the $\mathsf{-xH}$ option provides extra information with each
line of the kernel message relating to various (external) devices.  This
information can be one of $\mathsf{err}$ (for error), $\mathsf{emerg}$ (for
emergency), $\mathsf{warn}$ (for warning), $\mathsf{info}$ (for information) and
so on.  $\mathsf{dmesg}$ can thus be very useful when troubleshooting or just
trying to obtain information about the hardware on a system by analyzing this
output.  
We can extract from the output produced by $\mathsf{dmesg}$ some messages 
with contextual information.
For instance, if we are searching for $\mathsf{err}$ messages, and wish to
resolve them, we need some contextual information, like 10 lines before and 10
lines after the message.  This is a regular transformation which can be
specified with an \oRTE and implemented with a two-way transducer as described
below.  It takes as input the (unbounded) log produced by $\mathsf{dmesg -wHx}$
and produces as output lines containing $\mathsf{err}$ messages with their
contexts.

\subsubsection{Detecting context of error in $\mathsf{dmesg}$}
In this section, we give details of how the errors in $\mathsf{dmesg}$ command 
are detected and their context is given as output using \oRTEs or transducers. 

\begin{figure*}[h]
  \centerline{\includegraphics[scale=0.5]{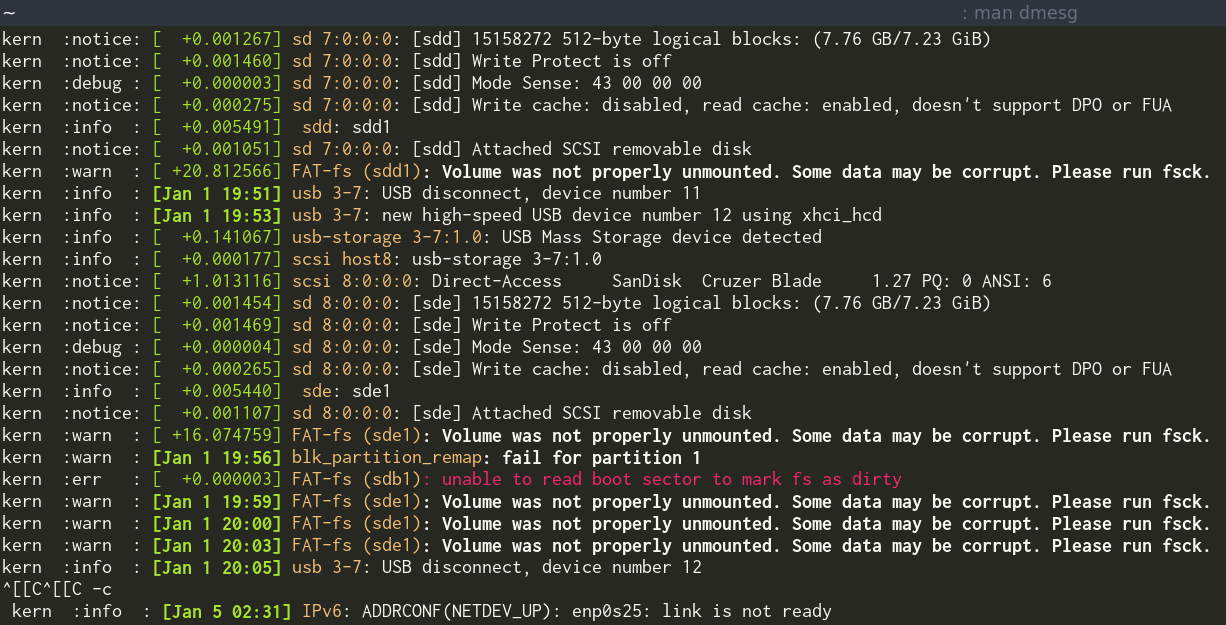}}
  \caption{Screenshot of $\mathsf{dmesg}$}
  \label{fig:dmesg}
\end{figure*}

\subsubsection{An \oRTE for $\mathsf{dmesg}$}
We first give an \oRTE which analyzes the output of $\mathsf{dmesg}$, and
produces the appropriate contexts.  The RTE is a specification language which is
easier to understand than the transducer which describes the same computation.

The required \oRTE
inspects the output of $\mathsf{dmesg}$, and if it detects a line containing the
message \err, then it outputs 10 lines before this line, 10 lines after this
line as the necessary context needed to investigate the reason for the error.
This is done for all lines containing \err.

The \oRTE is broken into two parts.  We first look at the first 10 lines of the
file.  If any of these lines (say line $i$) has \err, we output the lines $1,
\dots, i+ 10$.
\begin{enumerate}
	\item Let $\nl$ denotes newline, $\mathsf{char}=\Sigma\setminus\{\nl\}$,
	$\mathsf{line}=\mathsf{char}^+ \cdot\nl$ and $\err_i=\mathsf{line}^i \cdot 
  \mathsf{char}^{7}\cdot \err\cdot \Sigma^{\omega}$ be an expression
	which says that there is an ``\err'' in the $i+1$th line for $i \geq 0$.  The
	message starts from the 8th character on each line, so we scan from the 8th
	character for \err.
	
  \item Let $C_i$ =
  $\Ifthenelse{\err_i}{[(\Ifthenelse{\mathsf{line}^{i+11}}{\mathsf{copy}}\bot)
  \lrdot \Ifthenelse{(\mathsf{line}^\omega}{\varepsilon}{\bot})]}{\varepsilon}$
  be an \oRTE which gives the context of the error if the error is found in the
  $(i+1)$th line where $0 \leq i \leq 9$.  The \RTE $\mathsf{copy}$ defines the
  identity function on $\Sigma^{+}$: For instance, if $\Sigma=\{a,b,c\}$ then
  $\mathsf{copy}=\lrplus{\Ifthenelse{a}{a}{(\Ifthenelse{b}{b}{(\Ifthenelse{c}{c}\bot)})}}$.
  Thus, if there is an ``\err'' in the first 10 lines, the context is generated
  using $C_i$.
	
	\item To catch occurrences of ``\err'' in lines 11 and later in the file, we
	use $\twentyoneomega{\mathsf{line}}{C'}$, the 21-chained $\omega$-iteration.
  Here, $C'$ is an \RTE which copies the context for each line containing
  ``\err'' starting from line number 11:
	$$
  C' = \Ifthenelse{[\mathsf{line}^{10} \cdot \mathsf{char}^{7} \cdot \err
  \cdot \mathsf{char}^+ \cdot \nl \cdot 
  \mathsf{line}^{10}]}{\mathsf{copy}}{\varepsilon} \,.
  $$
	
	\item The required \oRTE $C$ = $C_0\odot C_1 \odot \cdots \odot C_9 \odot
	\twentyoneomega{\mathsf{line}}{C'}$.
\end{enumerate}
Thus, the first 10 lines are checked for ``\err'' and the respective context is
output if a line $0 \leq i <10$ has ``\err''; the remaining lines in the file
are treated using the $\twentyoneomega{K}{C'}$, where we look at blocks of 21
lines, and reproduce them as is, if the 11th line in the block has an ``\err'';
this is repeated from the next line and so on.  The $\twentyoneomega{K}{C'}$ is
not a new combinator, it can be written in terms of 2-chained $\omega$-iteration
as shown by Lemma~\ref{lem:komega}.

\subsubsection{Machine description for $\mathsf{dmesg}$}

Now we describe a transducer that produces the contexts 
based on the output of $\mathsf{dmesg}$. 
Since the  $\mathsf{dmesg}$ command continuously monitors 
not only external devices, but also the RAM, processors, cache and the hard disk, 
the output is continuously updated in real time. Hence, the output is
arbitrarily long. 

The transducer $\Aa$ takes as input, the output as generated above, and
needs the two way functionality.  We do not use the look-ahead option here.
Whenever $\Aa$ reads a line containing ``\err'', it goes back 10 lines (or $i \leq
10$ lines, if there are only $i < 10$ lines before the current line), and
outputs the next 21 lines (or $11+i$ lines).  Then it goes back 10 lines to
check the message in the next line.  The transducer $\Aa$ can be obtained 
inductively from the \oRTE $C$ described above, or it can be directly 
constructed. Because it has to count up to 10 several times, the automaton is 
rather large, but it can be constructed easily.

\subsubsection{Reducing  $k$-chained $\omega$-iteration to  2-chained $\omega$-iteration for $k >2$}

\begin{lemma}	\label{lem:komega}
	For $k >2$, the \RTE $\komega{L}{f}$  can be derived from \RTEs defined in 
  Section~\ref{sec:oRTE}.
\end{lemma}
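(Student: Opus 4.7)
The plan is to reduce $\komega{L}{f}$ to a 2-chained $\omega$-iteration applied to the larger language $L^{k-1}$ of $(k-1)$-blocks of $L$-factors. Given $w = u_1 u_2 u_3 \cdots$ with each $u_i \in L$, I group the factors into blocks $v_i = u_{(i-1)(k-1)+1} \cdots u_{i(k-1)} \in L^{k-1}$, so that $w = v_1 v_2 v_3 \cdots$. Two consecutive blocks $v_i v_{i+1}$ span exactly $2(k-1)$ original factors and contain exactly $k-1$ sliding windows of length $k$ whose leftmost factor lies inside $v_i$. If I design an auxiliary \RTE $g$ so that $g(v_i v_{i+1})$ concatenates the values of $f$ on those $k-1$ windows in order, then the 2-chained iteration will tile consecutive block-pairs perfectly, producing $f(u_1 \cdots u_k) f(u_2 \cdots u_{k+1}) f(u_3 \cdots u_{k+2}) \cdots = \komega{L}{f}(w)$.

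To build $g$, for each $1 \leq i \leq k-1$ I introduce
$$
A_i \;=\; (\Ifthenelse{L^{i-1}}{\varepsilon}{\bot}) \lrdot (\Ifthenelse{L^{k}}{f}{\bot}) \lrdot (\Ifthenelse{L^{k-1-i}}{\varepsilon}{\bot}).
$$
On any $x_1 x_2 \cdots x_{2k-2}$ with $x_j \in L$, the unique decomposition into $L^{i-1}\cdot L^{k}\cdot L^{k-1-i}$ forces $A_i$ to skip the first $i-1$ factors, apply $f$ to the next $k$, and silently consume the last $k-1-i$, so that $A_i(x_1\cdots x_{2k-2}) = f(x_i x_{i+1}\cdots x_{i+k-1})$. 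Then $g := A_1 \odot A_2 \odot \cdots \odot A_{k-1}$ concatenates these $k-1$ outputs via the Hadamard product, and the definition $\komega{L}{f} \;:=\; \twoomega{L^{k-1}}{g}$ gives the required \oRTE.

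The main obstacle is verifying that every unambiguity side condition is met so the semantics does not collapse to $\bot$. The $L^{k-1}$-factorization of $w$ inherits uniqueness from the $L$-factorization assumed in the domain of $\komega{L}{f}$; the three successive Cauchy products inside each $A_i$ are unambiguous for the same reason, since any decomposition of a word in $L^{2k-2}$ into $L^{i-1}\cdot L^k \cdot L^{k-1-i}$ is forced by the underlying $L$-factorization; and $\dom{g}$ contains every pair $v_i v_{i+1}$ arising from an input in the intended domain of $\komega{L}{f}$, because then all relevant windows $u_j u_{j+1} \cdots u_{j+k-1}$ already lie in $\dom{f}$. Checking that $\dom{\twoomega{L^{k-1}}{g}}$ and $\sem{\twoomega{L^{k-1}}{g}}$ match the intended semantics of $\komega{L}{f}$ is then a direct bookkeeping exercise along these lines.
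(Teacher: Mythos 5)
Your tiling is sound and is essentially the paper's own construction in a light disguise: the paper groups the $L$-factors into blocks of size $k$ and builds the auxiliary function on block pairs by a recursion $f_0,\dots,f_{k-1}$, whereas you use blocks of size $k-1$ and assemble the $k-1$ shifted windows directly as a Hadamard product of the $A_i$. Both tilings cover every window exactly once and in the right order, and your flat product is, if anything, cleaner than the recursion. On the intended domain the computed output is correct.

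There is, however, one genuine gap, and it sits exactly in the part you dismiss as ``direct bookkeeping'': the domain of your expression can be strictly larger than that of $\komega{L}{f}$. The intended semantics (stated at the start of the paper's proof) is that $\komega{L}{f}(w)$ is defined only when $w$ admits a \emph{unique} factorization $w=u_1u_2u_3\cdots$ with all $u_i\in L$. The unambiguity side conditions your combinators actually test --- uniqueness of the $L^{k-1}$-block factorization in $\twoomega{L^{k-1}}{g}$ and uniqueness of the splits inside each $A_i$ --- constrain only the \emph{cut positions}, not the internal $L$-decompositions of the pieces: a word of $L^{k-1}$ may admit two distinct decompositions into $k-1$ words of $L$ occupying the same span, in which case $w\notin\dom{\komega{L}{f}}$ while every split your expression checks is still unambiguous, so $\twoomega{L^{k-1}}{g}(w)\neq\bot$. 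You establish $\dom{\komega{L}{f}}\subseteq\dom{\twoomega{L^{k-1}}{g}}$ but not the converse inclusion. The paper avoids this by working throughout with the languages $L_n$ of words having a \emph{unique} $n$-fold $L$-factorization rather than the plain powers $L^n$, and above all by wrapping the whole expression in the guard $\Ifthenelse{L_\omega}{\cdot}{\bot}$, where $L_\omega$ is the ($\omega$-regular) set of words with a unique infinite $L$-factorization. Your construction is repaired the same way: replace $L^{i-1}$, $L^{k}$, $L^{k-1-i}$ and $L^{k-1}$ by their unambiguous counterparts and set $\komega{L}{f}=\Ifthenelse{L_\omega}{\twoomega{L_{k-1}}{g}}{\bot}$.
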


\begin{proof}
	Let $L\subseteq\Sigma^{+}$ be a regular language, let $f\colon\Sigma^{+}\to\D$
	and let $g=\komega{L}{f}$.  First, recall that by definition, $g(w)$ is
	defined as $f(u_1u_2\cdots u_k) \cdot f(u_2u_3\cdots u_{k+1}) \cdot f(u_3 u_4
	\cdots u_{k+2}) \cdots$ iff $w$ admits a unique factorization $w=u_1u_2u_3
	\cdots$ with each $u_i\in L$ for $i\geq 1$.
  
  For $n>0$, we let $L_n$ be the set of words $w$ which admit a unique
  factorization $w=u_1u_2 \cdots u_n$ with each $u_i\in L$ for $1\leq i\leq n$.
  Notice that $L_n\subseteq L^{n}\subseteq\Sigma^{+}$ is a regular language.  We
  also define $L_\omega$ as the set of words $w\in\Sigma^{\omega}$ which admit a
  unique factorization $w=u_1u_2u_3 \cdots$ with $u_i\in L$ for all $i\geq 1$. 
  Indeed, $L_\omega\subseteq L^{\omega}$ is an $\omega$-regular language.
  
  We define by induction for $n\geq 0$ a function $f_n$ such that
  $\dom{f_n}\subseteq L_{k+n}$ and if $w=u_1u_2\cdots u_{k+n}$ is the unique
  factorization of $w$ with $u_i\in L$ for $1\leq i\leq k+n$ then
  $$
  f_n(w)=f(u_1u_2\cdots u_k) f(u_2u_3\cdots u_{k+1}) \cdots
  f(u_{n+1}u_{n+2}\cdots u_{n+k}) \,.
  $$
  We let $f_0=\Ifthenelse{L_k}{f}\bot$ and for $n>0$ we define
  $f_n = (f_{n-1}\lrdot(\Ifthenelse{L}{\varepsilon}\bot)) \odot
  ((\Ifthenelse{L_{n}}{\varepsilon}{\bot})\lrdot f_0)$.
  Note that this gives us 
  $f_1 = (f_{0}\lrdot(\Ifthenelse{L}{\varepsilon}\bot)) \odot
  ((\Ifthenelse{L}{\varepsilon}{\bot})\lrdot f_0)$, which works on strings 
  of length $k+1$, and produces $f(u_1 \cdots u_k) f(u_2 \cdots u_{k+1})$.  
  Likewise, $f_2 = (f_{1}\lrdot(\Ifthenelse{L}{\varepsilon}\bot)) \odot
  ((\Ifthenelse{L_2}{\varepsilon}{\bot})\lrdot f_0)$, which works on strings 
  of length $k+2$, and produces $f_1(u_1 \cdots u_{k+1}) f(u_3 \cdots u_{k+2})$, which 
  in turn expands to  $f(u_1 \cdots u_k) f(u_2 \cdots u_{k+1}) f(u_3 \cdots u_{k+2})$.

  Finally, let $h=\Ifthenelse{L_\omega}{\twoomega{L_k}{f'}}\bot$ with 
  $f'=f_{k-1}\lrdot(\Ifthenelse{L}{\varepsilon}\bot)$.
  Notice that $\dom{f'}\subseteq L_{2k}$.
  We claim that $g(w)=h(w)$ for all $w\in\Sigma^{\omega}$.
  
  Let $w\in L_\omega\supseteq\dom{g}\cup\dom{h}$.  Consider the unique
  factorization $w=u_1u_2u_3\cdots$ with $u_i\in L$ for all $i\geq1$.  For all
  $i\geq1$, let $w_i=u_{(i-1)k+1}\cdots u_{ik}$.  Clearly, 
  $w=w_1w_2w_3\cdots$ is the unique factorization of $w$ with $w_i\in L_k$ for
  all $i\geq1$.
  Now, $h(w)=f'(w_1w_2)f'(w_2w_3)\cdots$ and for each $i\geq1$ we 
  have
  $$
  f'(w_iw_{i+1})=f(u_{(i-1)k+1}\cdots u_{ik}) f(u_{(i-1)k+2}\cdots u_{ik+1})
  \cdots f(u_{ik}\cdots u_{ik+k-1}) \,.
  $$
  We deduce that $h(w)=g(w)$.
\end{proof} 
  
\begin{example}
	Let $k = 4$.  We have $g = [L,f]^{4\omega} =
	\Ifthenelse{L_\omega}{\twoomega{L_4}{f'}}\bot$ by lemma~\ref{lem:komega}.  We
	show the computation of $[L, f]^{4\omega}$ on a word $w\in L_\omega$ using
	this new \RTE. Consider the unique factorization $w=u_1u_2u_3\cdots$ with
	$u_i\in L$ for all $i\geq1$.  Also, let $w_1 = u_1 \cdots u_4$, $w_2 = u_5
	\cdots u_8$ and so on. We have $w_i \in L_4$ for all $i \geq 1$.
  Now,
  $$
  [L,f]^{4\omega}(w) = f(u_1 \cdots u_4) f(u_2 \cdots u_5) \cdots
  \qquad\text{and}\qquad
	[L_4, f']^{2\omega}(w) = f'(u_1 \cdots u_8) f'(u_5 \cdots u_{12}) \cdots
  $$
  We show the expansion of $f'(u_1 \cdots u_8)$ below.
	\begin{align*}
	&f'(u_1 \cdots u_8) \\
	&= (f_3  \lrdot (\Ifthenelse{L}{\varepsilon}{\bot} ))(u_1 \cdots u_8) & 
  \text{def.\ of } f' \\
	& = f_3(u_1 \cdots u_7) & \\
	& = ((f_2 \lrdot (\Ifthenelse{L}{\varepsilon}{\bot})) \odot ((\Ifthenelse{L_3}{\varepsilon}{\bot}) \lrdot f_0))(u_1 \cdots u_7) & \text{def.\ of } f_3\\
	& = f_2(u_1 \cdots u_6) \cdot f_0(u_4 \cdots u_7) & \\
	& = ((f_1 \lrdot (\Ifthenelse{L}{\varepsilon}{\bot})) \odot ((\Ifthenelse{L_2}{\varepsilon}{\bot}) \lrdot f_0))(u_1 \cdots u_6) 
  \cdot f_0(u_4 \cdots u_7) & \text{def.\ of } f_2\\
	& = f_1(u_1 \cdots u_5) \cdot f_0(u_3 \cdots u_6) \cdot f_0(u_4 \cdots u_7) & \\
	& = ((f_0 \lrdot (\Ifthenelse{L}{\varepsilon}{\bot})) \odot ((\Ifthenelse{L}{\varepsilon}{\bot}) \lrdot f_0))(u_1 \cdots u_5)
  \cdot f_0(u_3 \cdots u_6) \cdot f_0(u_4 \cdots u_7) & \text{def.\ of } f_1\\
	& = f_0(u_1 \cdots u_4) \cdot f_0(u_2 \cdots u_5) \cdot f_0(u_3 \cdots u_6) \cdot f_0(u_4 \cdots u_7) & \\
	& = f(u_1 \cdots u_4) \cdot f(u_2 \cdots u_5) \cdot f(u_3 \cdots u_6) \cdot f(u_4 \cdots u_7) & \text{def.\ of } f_0 
	\end{align*}
	Recursion tree for the same is shown in the figure~\ref{fig:tree}.   Hence, \\ 
	\noindent
	$
  [L_4, f']^{2\omega}(w) =  f(u_1 \cdots u_4) \cdot f(u_2 \cdots u_5) \cdot 
  f(u_3 \cdots u_6) \cdot f(u_4 \cdots u_7) \cdot 
	f(u_5 \cdots u_8) \cdot f(u_6 \cdots u_9) \cdot f(u_7 \cdots u_{10}) \cdot f(u_8 \cdots u_{11}) \cdots 
	$
\end{example}

\begin{figure}[h]
  \centerline{\includegraphics[scale=0.3]{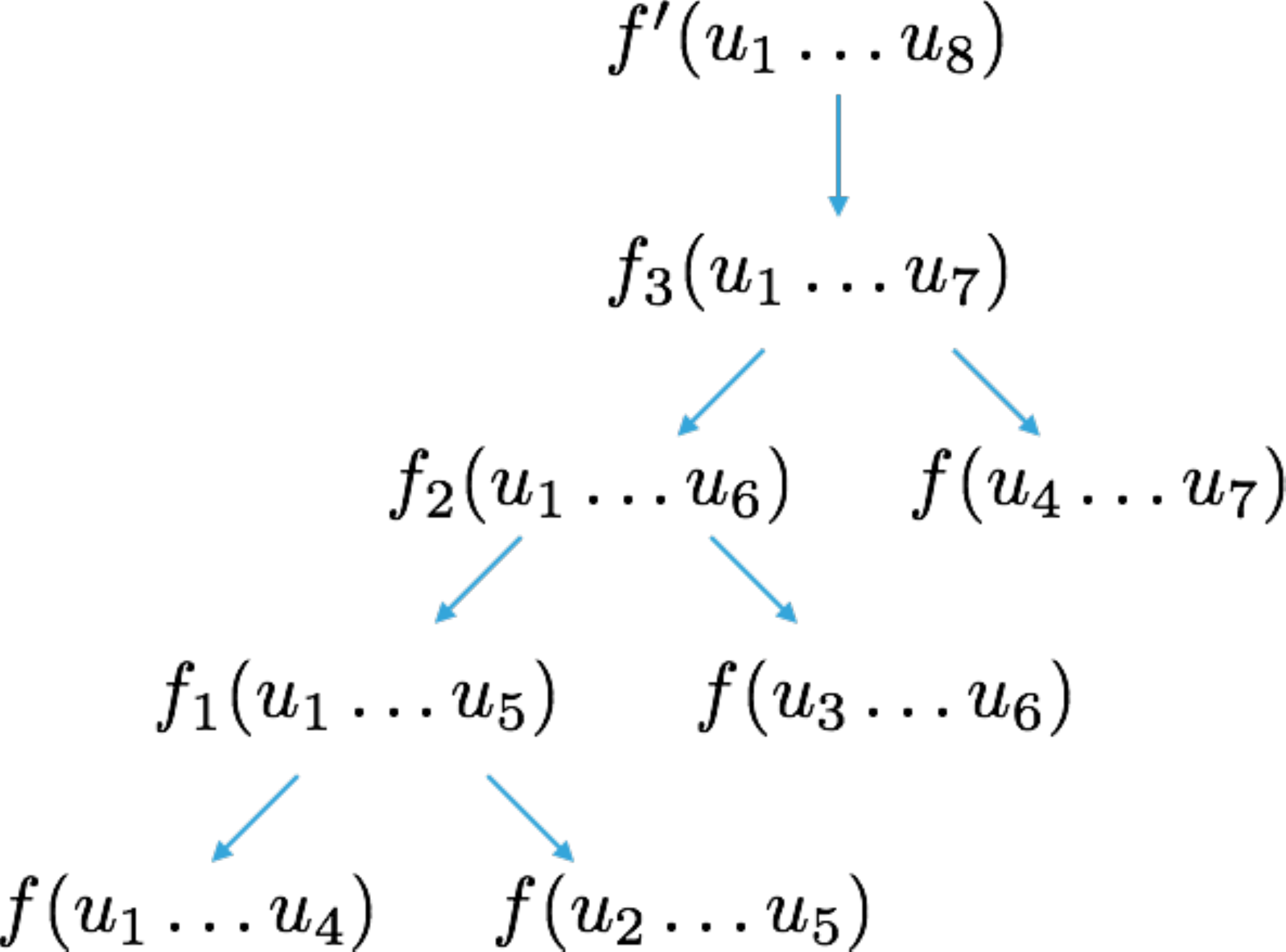}}
  \caption{Computation of $f'(u_1 \cdots u_8)$}
  \label{fig:tree}
\end{figure}

\begin{lemma}	\label{lem:kplus}
	Similar to lemma~\ref{lem:komega}, the \RTE $\kplus{L}{f}$ can be derived from
	\RTEs defined in Section~\ref{sec:RegExp} for $k > 2$.
\end{lemma}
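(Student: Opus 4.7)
The plan is to mirror the proof of Lemma~\ref{lem:komega} closely, with the 2-chained $\omega$-iteration replaced by the 2-chained Kleene-plus, and with explicit handling of the finite-boundary terms that arise in the plus case. First I would reuse verbatim the auxiliary constructions already used there: the regular languages $L_n$ of words with unique length-$n$ $L$-factorization, the \RTEs $f_0 = \Ifthenelse{L_k}{f}\bot$ and, for $1 \leq i \leq k-1$, $f_i = (f_{i-1} \lrdot (\Ifthenelse{L}{\varepsilon}\bot)) \odot ((\Ifthenelse{L_i}{\varepsilon}\bot) \lrdot f_0)$, together with $f' = f_{k-1} \lrdot (\Ifthenelse{L}{\varepsilon}\bot)$. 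The very same induction as in Lemma~\ref{lem:komega} gives $\dom{f_i} \subseteq L_{k+i}$ and $f_i(u_1 \cdots u_{k+i}) = f(u_1 \cdots u_k) \cdots f(u_{i+1} \cdots u_{i+k})$, while $f'$ produces $k$ sliding-window applications of $f$ per pair of adjacent $L_k$-chunks.

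The observation distinguishing the plus case from the omega case is a single missing boundary term. For a word $w \in L_{qk}$ with $q \geq 1$ chunked as $w = w_1 w_2 \cdots w_q$ with $w_i \in L_k$, the expression $\twoplus{L_k}{f'}$ yields $f'(w_1 w_2) \cdots f'(w_{q-1} w_q)$, which amounts to $(q-1)k$ of the $(q-1)k+1$ sliding windows demanded by $\kplus{L}{f}$; the only missing term is the last window $f_0(w_q) = f(u_{(q-1)k+1} \cdots u_{qk})$. This term can be extracted from $w$ by the Cauchy product $(\Ifthenelse{L_k^*}{\varepsilon}\bot) \lrdot f_0$, so the Hadamard combination
\[
h_0 \;=\; \twoplus{L_k}{f'} \;\odot\; \big((\Ifthenelse{L_k^*}{\varepsilon}\bot) \lrdot f_0\big)
\]
computes $\kplus{L}{f}$ on every word whose length (in $L$-factors) is a positive multiple of $k$. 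This mirrors the trick already used in the paper to derive $\lrplus{f}$ from $\twoplus$.

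For a word $w \in L_n$ with $n = qk + r$ and $1 \leq r < k$, I would Cauchy-decompose $w = a \cdot w'$ with $a \in L_r$ and $w' \in L_{qk}$, a split forced by uniqueness of the length-$n$ $L$-factorization. The required $(q-1)k + r + 1$ sliding-window applications split into the first $r$ boundary-spanning windows $f(u_1 \cdots u_k), \ldots, f(u_r \cdots u_{r+k-1})$, which are exactly the output of $f_{r-1}$ on the prefix of $w$ of $L$-length $r+k-1$, plus the $(q-1)k+1$ remaining windows, which lie entirely inside $w'$ and are computed by $h_0$ applied to $w'$. Extracting these two parts by suitable Cauchy products and combining them through Hadamard yields an \RTE $h_r$ that agrees with $\kplus{L}{f}$ on $L^{(r)} := \bigcup_{q\geq 1} L_{qk+r}$. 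The final expression is the nested if-then-else $\Ifthenelse{L^{(0)}}{h_0}{(\Ifthenelse{L^{(1)}}{h_1}{\cdots(\Ifthenelse{L^{(k-1)}}{h_{k-1}}{\bot})})}$, each residue membership test being regular.

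The main obstacle I anticipate is the bookkeeping of unambiguity: verifying that every Cauchy split above (isolating the last $L_k$-chunk, separating an $L_r$-prefix from an $L_{qk}$-suffix, or extracting the $L_{r+k-1}$-prefix) is uniquely determined on its intended domain, and that every invocation of $\twoplus{L_k}{\cdot}$ is applied to a word admitting a unique $L_k$-chunking. This uniqueness should propagate from the unique length-$n$ $L$-factorization characterising $L_n$, but checking the invariants for each $h_r$ and every nested combinator is the principal technical labour of the proof.
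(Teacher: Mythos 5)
Your proposal is correct and matches the paper's proof in all essentials: both reuse $L_n$, $f_n$ and $f'$ from Lemma~\ref{lem:komega} and reduce $\kplus{L}{f}$ to $\twoplus{L_k}{f'}$ together with a correction for the $r$ leftover $L$-factors, handled by the appropriate $f_i$. The only difference is organizational: the paper keeps the remainder block at the suffix end, using a single Hadamard product whose second component dispatches via an inner if-then-else $h'$ over $L_k,\ldots,L_{2k-1}$, whereas you peel an $L_r$-prefix off the front and dispatch on the residue class with an outer if-then-else.
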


\begin{proof}
	The proof technique is similar to Lemma~\ref{lem:komega}.
	Definitions of $L_n, f_n, f'$ are as defined in this lemma.
  In addition we use $L_+$, the set of words having a unique decomposition
	$w = u_1 u_2 \cdots u_n$ with $n\geq1$ and $u_i\in L$ for all $1\leq i\leq 
  n$. We also use $L_{<k}=L_0\cup L_1\cup\cdots\cup L_{k-1}$, where 
  $L_0=\{\varepsilon\}$. We define
  \begin{align*}
    h &=
    \Ifthenelse{L_+}{[\big(\twoplus{L_k}{f'}\lrdot(\Ifthenelse{L_{<k}}{\varepsilon}\bot)\big)
    \odot \big((\Ifthenelse{L_k^{*}}{\varepsilon}\bot)\lrdot h' \big)]}\bot
    \\
    h' &= \Ifthenelse{L_k}{f}{( \Ifthenelse{L_{k+1}}{f_1}{( 
    \Ifthenelse{L_{k+2}}{f_2}{( \cdots (\Ifthenelse{L_{2k-1}}{f_{k-1}}\bot)
    \cdots )})})} \,.
  \end{align*}
  We can show as in Lemma~\ref{lem:komega} that $h=\kplus{L}{f}$.
\end{proof}

\section{Equivalence of Models}\label{app:two-way-eq}

In this section, we look at the equivalence of the automata models for regular
transformations on infinite words.  The model used here is a two-way,
deterministic Muller automaton, which has for each pair $(q,a)$ consisting of a
state and symbol, a tuple of look-ahead $\omega$-regular languages which are 
mutually exclusive.  The model (denoted 2\textsf{WST}\textsubscript{la}) used in
\cite{lics12} however is a two-way deterministic Muller automaton which is
equipped with a look-behind automaton (a NFA) and a look-ahead automaton (a
possibly non-deterministic Muller automaton).  Here, we show that these two
models are equivalent in expressiveness.
     
\begin{lemma}
	\WST and 2\textsf{WST}\textsubscript{la} (defined in \cite{lics12}) are
	equivalent representations for $\omega-$regular transformations.
\end{lemma}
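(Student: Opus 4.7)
The plan is to show both inclusions. For the easy direction, suppose we have a \WST
$\A=(Q,\Sigma,\Gamma,q_0,\delta,\Ff,\Rr)$ with the partition property. Each
$R\in\Rr$ is $\omega$-regular, so is recognised by a (possibly
non-deterministic) Muller automaton $\Aa_R$. I would build an equivalent
2\textsf{WST}\textsubscript{la} in the sense of \cite{lics12} by taking the same
state space, transition function, and Muller acceptance, attaching $\Aa_R$ as
the look-ahead automaton guarding the transition $\delta(q,a,R)$, and using a
trivial look-behind (accepting all prefixes). The partition condition on $\Rr$
ensures exactly one look-ahead fires at each step, which is the determinism
requirement of \cite{lics12}.

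For the converse, let $\Bb=(P, \Sigma, \Gamma, p_0, \eta, \Ff_\Bb, \Ll_b,
\Ll_a)$ be a 2\textsf{WST}\textsubscript{la}, where $\Ll_b$ is a finite family
of look-behind NFAs over $\Sigma^*$ (reading the prefix up to the current
position) and $\Ll_a$ is a finite family of look-ahead Muller automata over
$\Sigma^\omega$. First I would \emph{compile the look-behind into the state}.
Since each $B\in\Ll_b$ accepts a regular language of finite words, determinise
it into a DFA $\Dd_B$, take the product $\Dd=\prod_{B\in\Ll_b}\Dd_B$, and
simulate $\Dd$ in parallel on the prefix read so far by adding its state
component to $P$. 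Concretely, when the reading head of the transducer moves
right from position $i$ to $i{+}1$, the $\Dd$-component is updated by reading
the letter at position $i$; when it moves left, we use the fact that the
transducer traverses a bounded-crossing pattern, so the relevant $\Dd$-state
has already been computed and can be recorded in a bounded table tagged to each
position (alternatively: since $\Bb$'s underlying two-way automaton is
deterministic, we may first compose with a one-way pass that annotates every
position with the tuple of reached $\Dd$-states, then let the two-way machine
read the annotated input). After this compilation, all look-behind queries
become queries about the current state of the transducer, so we may assume
$\Ll_b=\emptyset$.

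Next I would \emph{turn the look-ahead family into a partition}. Let
$\Ll_a=\{L_1,\ldots,L_n\}$, where each $L_i\subseteq\Sigma^\omega$ is
$\omega$-regular. By Corollary~\ref{cor:BDBA} there exists a single \BDBA
$\Bb'$ together with initial sets $I_{L_i}$ with
$\Lang{\Bb',I_{L_i}}=L_i$. The \emph{atoms} of the Boolean algebra generated by
$\{L_1,\ldots,L_n\}$ form a finite partition $\Rr=\{R_1,\ldots,R_m\}$ of
$\Sigma^\omega$ into $\omega$-regular languages (each atom is a Boolean
combination handled by $\Bb'$). For every pair $(q,a)$, each transition of
$\Bb$ at $(q,a)$ is guarded by a Boolean combination of the $L_i$'s and hence by
a union of atoms; the disjointness of the guards in $\Bb$'s deterministic
semantics ensures that exactly one atom determines the fired transition. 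I then
define a \WST $\A$ with the same states/output/acceptance as (the
look-behind-free version of) $\Bb$, with look-ahead set $\Rr$, and
$\delta(q,a,R)=\eta(q,a,L)$ whenever $R\subseteq L$ is the atom that
determines the chosen transition. By construction $\A$ satisfies the partition
requirement and simulates $\Bb$ step for step, so $\sem{\A}=\sem{\Bb}$.

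The main obstacle is the look-behind compilation: while it is folklore that a
deterministic two-way transducer can internalise regular prefix properties, one
must be careful when the head moves left across positions visited long ago. The
cleanest way I see is to precompose with a one-way finite-state transducer
that tags every letter with the vector of reachable \emph{look-behind} DFA
states (this tagging uses only the prefix, hence is one-way and produces an
annotated alphabet $\Sigma\times\prod_{B}Q_{\Dd_B}$), and then run the
two-way machine on the tagged input; this preserves the transduction while
making the look-behind queries local. All other steps are routine Boolean
closure of $\omega$-regular languages.
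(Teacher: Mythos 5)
Your proposal is correct and follows the same overall architecture as the paper's proof: the easy direction is identical (same states, look-ahead automata per language, trivial look-behind), and the converse likewise proceeds by internalising the look-behind via a product with a determinised prefix automaton and abstracting the look-ahead states into $\omega$-regular guard languages. The one genuine divergence is how you maintain the look-behind DFA state when the head moves \emph{left}. Your first suggestion (a ``bounded table tagged to each position'') is not directly implementable --- a two-way transducer has no per-position memory --- and you rightly flag it as the delicate point. Your fallback, precomposing with a one-way transducer that annotates each letter with the tuple of reached DFA states and then invoking closure of regular transformations under composition, does work, but it is a heavier hammer and changes the machine you end up with (you must then argue the composition is again an \WST over the original alphabet $\Sigma$). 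The paper instead handles backward moves directly with the Hopcroft--Ullman reverse-run technique, which recomputes the predecessor DFA state inside the synchronized product and keeps the construction self-contained. A second, minor difference: for the look-ahead you pass to the atoms of the Boolean algebra generated by the $L_i$, which cleanly yields the partition (including completeness) required by the \WST definition, whereas the paper takes the languages $\Lang{\Aa,s_A}$ indexed by look-ahead states and argues mutual exclusivity from determinism; your route is slightly more careful on the completeness side, the paper's is more economical. Either way the argument goes through.
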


\begin{proof}[Proof sketch.]
	\begin{itemize}
		\item[$(\subseteq)$] Given an \WST $\Aa
		=(Q,\Sigma,\Gamma,q_0,\delta,\Ff,\Rr)$, one can construct a
		2\textsf{WST}\textsubscript{la} $\Aa'$ such that $\sem{\Aa} = \sem{\Aa'}$.
		We create a look ahead automata $A_L$ as the disjoint union of all $\Aa_i =
		(Q_i, \Sigma, s_i, \delta_i, \Ff_i)$ corresponding to each look ahead
		language $R_i$ in $\Rr$: $\mathcal{L}(\Aa_i) = R_i$.  The set of states of
		the 2\textsf{WST}\textsubscript{la} $\Aa'$ is $Q$, with initial state $q_0$
		and Muller accepting set $\Ff$.  The transition function $\delta'$ of $\Aa'$
		is defined by $\delta'(q, a, \Sigma^*, {s_i}) = (q', \gamma, d)$, if
		$\delta(q, (a, R_i)) = (q', \gamma, d)$.  Note that since we do not use any
		look-behind in $\Aa$, the look-behind automaton needed for the model in
		\cite{lics12} is the trivial one which accepts all strings.  Rather than
		writing the single state look-behind automaton (where the state is both
		accepting and initial), we write the expression $\Sigma^*$.
		
    \item[$(\supseteq)$] Given 2\textsf{WST}\textsubscript{la} $\Tt=((Q, \Sigma,
    \Gamma, \delta, q_0, \Ff), \Aa, \Bb)$ with look-ahead automata $\Aa=(Q_A,
    \Sigma, \delta_A, \Ff_A)$ and look-behind automata $\Bb= (Q_B, \Sigma,
    \delta_B, F_B)$, one can construct an \WST 
    $\Aa'=(Q',\Sigma,\Gamma,q'_0,\delta',\Ff',\Rr)$ such that $\sem{\Tt} =
    \sem{\Aa'}$.  For each state $p\in Q_B$ we let $K_p\subseteq\Sigma^{*}$ be
    the regular languages accepted by $\Bb$ with initial state $p$.  Similarly,
    for each state $p\in Q_A$ we let $L_p\subseteq\Sigma^{\omega}$ be the
    $\omega$-regular languages accepted by $\Aa$ with initial state $p$.  From
    the determinism of $\Tt$ we deduce that the languages $(K_p)_{p\in Q_B}$ are
    mutually exclusive, and similarly the languages $(L_p)_{p\in Q_A}$ are
    mutually exclusive. 
    
    Now, a transition $\delta(q,(s_B,a,s_A))=(q',\gamma,d)$ of $\Tt$ can be
    replaced with the more abstract
    $\delta'(q,(K_{s_B},a,L_{s_A}))=(q',\gamma,d)$ using regular languages
    instead of states for look-ahead and look-behind. In order to obtain the 
    \twoDMTla it remains to remove the look-behind.
    
    Consider a DFA $\Dd=(Q_D, \Sigma, \delta_D, s_D, (F_p)_{p\in Q_B})$ which
    simultaneously recognizes all the $(K_p)_{p\in Q_B}$: $K_p$ is the set of
    words accepted by $\Dd$ when using $F_p$ as set of final states.  We define
    $\Aa'$ as the synchronized product of $\Tt$ and $\Dd$.  More precisely, we
    let $Q'=Q\times Q_D$, $q'_0=(q_0,s_D)$, $\Rr=\{L_{s_A}\mid s_A\in Q_A\}$,
    $\Ff'$ consists of all $X\subseteq Q'$ such that the projection of $X$ on
    $Q$ belongs to $\Ff$.  The transition function $\delta'$ is defined as
    follow.  If $\delta(q,(s_B,a,s_A))=(q',\gamma,d)$ is a transition of $\Tt$,
    then $\delta'((q,p),(a,L_{s_A}))=((q',p'),\gamma,d)$ is a transition of
    $\Aa'$ provided
    \begin{itemize}
      \item $p\in F_{s_B}$, i.e., the prefix read so far belongs to $K_{s_B}$, 
      or equivalently, is recognized by the look-behind automaton $\Bb$ when 
      starting from state $s_B$,
    
      \item $p'=\delta_D(p,a)$ if the transition moves forward ($d=+1$),
      
      \item and if the transition moves backward ($d=-1$) then $p'$ is obtained 
      form $p$ using the \emph{reverse-run} algorithm of Hopcroft and Ullman.
      \qedhere
    \end{itemize}
  \end{itemize}
\end{proof}

\end{document}